\newcommand{\eps}{\varepsilon}
\newcommand{\norm}[2]{{\| #1 \|}_{#2}}
\newcommand{\dt}{\partial_t}
\newcommand{\e}{\varepsilon}
\newcommand{\R}{\mathbb R}
\newcommand{\Real}{\mathbb R}
\newcommand{\nab}[1]{\nabla^{#1}}
\newcommand{\sigm}[1]{\Sigma^{#1}}
\newcommand{\supp}{{\text{supp}}}
\newcommand{\kernel}{{\text{ker}}}
\newcommand{\gec}{{\check g_\eps}}
\newcommand{\N}{\mathbb N} 
\newcommand{\tnab}{{\tilde\nabla}}
\newtheorem{Theorem}{Theorem}[section]
\newtheorem{Corollary}[Theorem]{Corollary}
\newtheorem{Lemma}[Theorem]{Lemma}
\newtheorem{Proposition}[Theorem]{Proposition}
\theoremstyle{definition}
\newtheorem{Definition}[Theorem]{Definition}
\theoremstyle{remark}
\newtheorem{rem}[Theorem]{Remark}
\numberwithin{equation}{section}
\title{{Green Operators in Low Regularity Spacetimes and Quantum Field Theory}}
\author{G\" unther H\" ormann, Yafet Sanchez Sanchez, Christian Spreitzer, James Vickers}
\begin{document}

\date{\today}

\maketitle


\begin{abstract}

\noindent
 In this paper we develop the
  mathematics required in order to provide a description of the
  observables for quantum fields on low-regularity spacetimes. In
  particular we consider the case of a massless scalar field $\phi$ on a
  globally hyperbolic spacetime $M$ with $C^{1,1}$ metric $g$. This
  first entails showing that the (classical) Cauchy problem for the
  wave equation is well-posed for initial data and sources in 
  Sobolev spaces and then constructing low-regularity advanced and
  retarded Green operators as maps between suitable function
  spaces. In specifying the relevant function spaces we need to control
  the norms of both $\phi$ and $\square_g\phi$ in order to ensure that
  $\square_g \circ G^\pm$ and $G^\pm \circ \square_g$ are the identity
  maps on those spaces. The causal propagator $G=G^+-G^-$ is then
  used to define a symplectic form $\omega$ on a normed space
  $V(M)$ which is shown to be isomorphic to $\ker \square_g$. This
  enables one to provide a locally covariant description of the
  quantum fields in terms of the elements of quasi-local
  $C^*$-algebras.

\vskip 1em

\noindent
\emph{Keywords:} low regularity, weak solutions, Green operators,
quantum field theory \medskip

\noindent 
\emph{MSC2010:} 83C75, 
                53B30  

\noindent
\emph{Acknowledgements} 
All authors acknowledge financial support
through FWF (Austrian Science Fund) project no. P28770, JAV
acknowledges financial support from STFC in the UK through Grant
No. ST/R00045X/1 and YSS would like to thank the Max Planck Institute
of Mathematics in Bonn and Leibniz University Hannover for their financial support
during the writing of this paper.
\end{abstract}

------------------------------------------------------------------------

\section{Introduction}

This paper is concerned with developing the theory of quantum fields
on low regularity spacetimes. We will follow the algebraic approach to
quantisation as described in \cite{wald} and \cite{algebraic}. In
particular we will draw heavily on the detailed scheme given in the
book \cite{bgp}. The starting point is a smooth Lorentzian manifold
$(M,g)$ and a field equation $P\phi=f$ where $P$ is a normally
hyperbolic differential operator $P$ acting on a vector bundle $F$. In
this paper we will consider the scalar operator $P=\square_g$ as there
are no significant additional mathematical issues in dealing with the
general case. The essence of the algebraic approach as outlined in
\cite{bgp} is to first construct the advanced and retarded Green
operators for $P$ and use these to construct the causal propagator
$G=G^+-G^-$. Note that in order for the Green operators to be unique
we require the spacetime to be globally hyperbolic. The causal
propagator is then used to construct a skew symmetric bilinear map on
the space of smooth functions of compact support by $\tilde
\omega(\phi,\psi):=\langle G(\phi),\psi\rangle_{L^2(M,g)}$. This form is
degenerate but gives rise to a symplectic form $\omega$ on the quotient space
${\mathcal D}(M)/{\rm ker}(G)$ of the test function space.
The next step in the process is to
use $\omega$ to construct representations of the canonical commutation
relations (CCRs) on the space of quasi-local $C^*$-algebras
\cite[Theorem 4.4.11]{bgp} which satisfy the Haag-Kastler axioms
\cite{HK}. Each of these steps may be described in terms of a functor
so that we have a functorial description of how to go to from the
category of globally hyperbolic manifolds equipped with (formally
self-adjoint) normally hyperbolic operators to the space of
quasi-local $C^*$-algebras whose elements describe the observables of
the field (see details below). Indeed this scheme gives rise to a
\emph{locally covariant quantum field} in the sense of
\cite{BFV}. Going from the rather abstract quantisation procedure
described here to the more familiar Fock space representation requires
one to pick out the physically relevant states. For the smooth case a
mathematically appealing criterion (which corresponds to the standard
answer in Minkowski space) is the micro-local spectrum condition of
Radzikowski \cite{rad}. However this is not directly applicable in the
low-regularity case. We return to this point and suggest a suitable
modification in the discussion section at the end.

In generalising the smooth results to the low regularity setting we
need to choose a class of metrics that are sufficiently regular to
establish the results we would like, while being sufficiently general to cover
the cases of physical interest. From this point of view the choice of
$C^{1,1}$ metrics is natural since it allows one to deal with
spacetimes where the curvature remains finite while allowing for
discontinuities in the energy-momentum tensor at, for example, an
interface or the surface of a star. From the mathematical point of
view $C^{1,1}$ is the minimal condition which ensures existence and
uniqueness of geodesics and for which the standard results from smooth
causality theory go through more or less unchanged \cite{KSSV}. It also
ensures that the solutions to the wave equation are in $H^2_{\text{loc}}(M)$
(see Appendix B for details of the function spaces we use) as shown in
Theorem \ref{globalexist} below, which ensures we have enough
regularity to define the quantisation functors we need. Although one
can define solutions to the wave equation for metrics of lower
regularity \cite{SSV1,SSV2} there are difficulties in defining
the corresponding advanced and retarded Green operators for these cases.

In Section \ref{RN} we establish the results we need to prove
existence and uniqueness of solutions to the forward (and backward)
initial value problem for the wave equation on $\Real^{n+1}$ for $C^{1,1}$
metrics. Rather than rework the entire theory of the wave equation for
metrics of low-regularity we use a method of regularising the
coefficients \cite{HS:12}, using the smooth theory to obtain the
corresponding solutions of the Cauchy problem and then using a
compactness argument to show that this converges to a weak
$H_{\text{loc}}^2(\Real^{n+1})$ solution of the original equation. This proceeds
via the theory of Colombeau generalised functions \cite{Col} and is
related to the work of \cite{Ruzhansky} on very weak
solutions. Furthermore by controlling the causal structure of the
regularisation $g_\varepsilon$ by insisting that
$J^{+}_{\varepsilon}(U)\subset J^{+}(U)$ we can ensure that the
forward solution $u^+$ with zero initial data satisfies the causal
support condition $\supp(u^+)\subset J^{+}(\supp(f))$ (cf.\ 
\cite[Theorem 2.6.4]{Ring}).

In Section \ref{GH} we introduce the notion of global hyperbolicity
\cite{saemann} and temporal functions for non-smooth metrics
\cite{conespace} as well as the other results from $C^{1,1}$ causality
theory that we require \cite{KSSV}.
We use the fact that even for
$C^{1,1}$ spacetimes the temporal function can be chosen to be smooth
so we can write $M$ as $\Real \times \Sigma$, where $\Sigma$
is a Cauchy surface,
and define function spaces where we
make a split between space and time. However, as far as possible we
formulate our final results in a way that is independent of the choice
of temporal function, so that the particular choice of space-time
split is not important. The remainder of the section shows how to go
from existence and uniqueness results on $\Real^{n+1}$ to global
results on a globally hyperbolic $C^{1,1}$ spacetime. Our approach to
this closely follows Ringstr\"om \cite{Ring} and the causality results
for $C^{1,1}$ metrics \cite{KSSV}  ensure that the existence proof remains
similar to the smooth case.

The next step is to define appropriate Green operators. In the smooth
case the Green operator takes (compactly supported) smooth functions
to smooth functions. However in the $C^{1,1}$ case it is crucial that the map
is between suitable Sobolev type spaces. In fact, we find precise
conditions on the regularity of the solutions and the causal support
in order to define unique Green operators in globally hyperbolic
spacetimes of limited differentiability. We need to control the
 (local) Sobolev norms of both $\phi$ and $\square_g\phi$ in order to
ensure that $\square_g \circ G^\pm$ and $G^\pm \circ \square_g$ are the
identity maps on the corresponding spaces.  The choice of function space is also
relevant in the definition of ${\rm ker}(G)$ which is used to
construct the factor space for the symplectic map $\omega$. We end the
section by considering the dependence of the construction on the
choice of temporal function.

The passage from the symplectic space defined by $\omega$ to the
canonical commutation relations proceeds almost identically to that of
the smooth case \cite{bgp} so we only sketch out the details although
the analogue of \cite[Theorem 4.5.1]{bgp} requires some work. 

We end the paper with a summary of the results we have obtained and a discussion
of the outstanding issues, including the choice of a Sobolev
micro-local spectrum condition to single out the physical states in the
low-regularity setting.

 Appendix A briefly describes the basic properties of regularisation methods we use while Appendix B gives a brief description of various Sobolev spaces we use in the paper.

{\bf{Notation.}} We denote the derivative of a function $u$ with
respect to $t$ by $u_{t}$ or $\partial_{t}u$ and by $u_{i}$ or
$\partial_{i}u$ if it is with respect to the spatial
$x^i$-coordinate. The space of smooth functions of compact support on a manifold $M$
will be denoted by ${\mathcal D}(M)$. A function $f$ on an open subset $\cal{U}$ of
$\mathbb{R}^{n}$ is said to be Lipschitz if there is some constant $K$
such that for each pair of points $p,q\in {\cal{U}}$, $|f(p) - f(q)| \leq K |p-q|$. We denote by
$C^{k,1}$ those $C^k$ functions where the $k^{\text{th}}$
derivative is a Lipschitz continuous function.  A function on a smooth
manifold is said to be Lipschitz or $C^{k,1}$ if it has this property
upon composition with any smooth coordinate chart. We gather basic notions and
results concerning Sobolev and related function spaces in 
Appendix B.

\section{The smooth setting} 

In this section we briefly review the results in the smooth
setting. The starting point is an existence and uniqueness result for
solutions to a smooth second order hyperbolic equation on
$\Real^{n+1}$ (see e.g. \cite[Theorem 8.6]{Ring}). This is used to
obtain a corresponding result showing the existence of a unique smooth
solution $u \in C^\infty(M)$ to the Cauchy initial value problem for
a smooth normally hyperbolic operator $P$ on a smooth globally
hyperbolic manifold with smooth spacelike Cauchy surface $\Sigma$ and normal vector field $n$ given by
\begin{align*}
Pu&=f \quad \hbox{on $M$ where $f \in {\mathcal D}(M)$} \\
u|_\Sigma&=u_0 \quad \hbox{on $\Sigma$ where $u_0 \in {\mathcal D}(\Sigma)$}\\
\nabla_n u|_\Sigma&=u_1 \quad \hbox{on $\Sigma$ where $u_1 \in {\mathcal D}(\Sigma)$}
\end{align*}
which satisfies the causal support condition
\begin{equation*}
\supp(u) \subset J(\supp(u_0) \cup \supp(u_1) \cup \supp(f)).
\end{equation*}
Note that the globally hyperbolic condition is essential in order to
ensure that the solution is unique.

The next step is to show the existence of advanced and retarded Green operators.
\begin{Definition}
\label{propagators}
  Let $M$ be a time-oriented connected Lorentzian manifold and let $P$
  be a normally hyperbolic operator. An \emph{advanced Green operator}
  $G^+$ is a linear map $G^+:{\mathcal D}(M) \to C^\infty(M)$ such
  that
\begin{enumerate}
\item $P \circ G^+ ={\rm id}_{{\mathcal D}(M)}$
\item $G^+ \circ P|_{{\mathcal D}(M)} ={\rm id}_{{\mathcal D}(M)}$
\item $\supp(G^+\phi) \subset J^+(\supp(\phi))$ 
for all $\phi \in {\mathcal D}(M)$
\end{enumerate}
A \emph{retarded Green operator} $G^-$ satisfies (1) and (2) but (3)
is replaced by $\supp(G^-\phi) \subset J^-(\supp(\phi))$ for all $\phi
\in {\mathcal D}(M)$.
\end{Definition}
Corollary 3.4.3 of \cite{bgp} shows that these exist and
are unique on a globally hyperbolic manifold.

The advanced and retarded Green operators are then used to define the
causal propagator $G:=G^+-G^-$ which maps ${{\mathcal D}(M)}$ to the space
of spatially compact maps $C^\infty_{{\rm sc}}(M)$ i.e.\ the smooth
maps $\phi$ such that there exists a compact subset $K \subset M$ with
$\supp(\phi) \subset J(K)$. If $M$ is globally hyperbolic then one has
the following \emph{exact sequence} \cite[Theorem 3.4.7]{bgp}

\begin{center}
\begin{tikzcd}
0 \arrow{r}  &{\mathcal D}(M) \arrow{r}{P} &{\mathcal D}(M) 
\arrow{r}{G} &C^\infty_{{\rm sc}}(M) \arrow{r}{P} &C^\infty_{{\rm sc}}(M),\\
\end{tikzcd}
\end{center}
in particular,  $\kernel(G)=P({\mathcal D}(M))$.

We want to use $G$ to construct a symplectic vector space to which one
can apply the CCR functor. We first define a skew symmetric bilinear
form on ${\mathcal D}(M)$ by
$\tilde\omega(\phi,\psi):=\langle G(\phi),\psi\rangle_{L^2(M,g)}$.  Unfortunately
the bilinear form is degenerate so it fails to provide the required
symplectic form. However we can rectify this by passing to the
quotient space $V:={\mathcal D}(M)/\kernel(G)$, which by the above is
just ${\mathcal D}(M)/P({\mathcal D}(M))$. Hence $\tilde \omega$
induces a symplectic form $\omega$ on $V$. One can go on to use this to
construct representations of the canonical commutation relations (CCRs) on the space of
quasi-local $C^*$-algebras \cite[Theorem 4.4.11]{bgp} which satisfy
the Haag-Kastler axioms \cite[Theorem 4.5.1]{bgp}.

\section{The Cauchy problem on $\Real^{n+1}$ for $C^{1,1}$
    metrics}\label{RN}

  In this section we establish Theorem \ref{Rn} which gives the
  existence, uniqueness and causal support results we need concerning
  solutions to the Cauchy problem for the wave equation on
  $\Real^{n+1}$ for a $C^{1,1}$ metric.

The proof follows from Lemmas \ref{energy_estimates}, \ref{gensol},
\ref{weaksol} below, which cover a slightly more general version of
the Cauchy problem. The basic technique is to employ a
Chru\'sciel-Grant regularisation of the metric \cite{cg} (see details
in Appendix A) to obtain a family of smooth metrics
$(g_{\varepsilon})_{\varepsilon\in (0,1]}$ which converge to $g$ in
  the $C^1$ topology, have uniformly bounded second derivatives on
  compact sets and satisfy $J^{+}_{\varepsilon}(K)\subset J^{+}(K)$
  for every compact subset $K \subset M$ and $\varepsilon >0$. This
  enables us to obtain the required solution of the wave equation by
  taking a suitable limit of solutions to the smooth equation as
  $\varepsilon \to 0$ while preserving the causal support
  properties. Lemma \ref{energy_estimates} provides a detailed form of
  the energy estimates which proves crucial in the transition to the
  non-smooth setting. The causal support properties follow from Lemma
  \ref{causalsupport_Rn}. Note that although Lemma \ref{gensol} shows
  the existence of a generalised Colombeau solution \cite{Col} to a
  corresponding generalised Cauchy problem the main result of this
  section, Theorem \ref{Rn}, concerns a classical weak solution and does
  not require explicitly referring to the Colombeau solution used in the
  proof.

As a basic setup, we consider a Lorentzian metric $g$ of signature
$(+,-,...,-)$ on $\mathbb R^{n+1}$ with spatial components
$(-h_{ij})_{1\leq i,j \leq n}$ and the corresponding wave operator
\begin{equation}\label{wave_eq}
  Pu:=g_{00}\partial_t^2 u+2\sum_{j=1}^n g_{0j}\partial_{x_j}\partial_t u-\sum_{i,j=1}^n h_{ij}\partial_{x_i}\partial_{x_j} u +\sum_{j=1}^n a_j\partial_{x_j} u + a_0\partial_t u+bu 
\end{equation}
where all coefficients are supposed to be real-valued, smooth, and
bounded in all orders of derivatives. Moreover we assume that there
exist positive constants $\tau_{\text{min}}$, $\tau_{\text{max}}$,
$\lambda_{\text{min}}$, $\lambda_{\text{max}}$ such that

\begin{equation}\label{taumin}
 \tau_{\text{min}}\leq g_{00}(t,x)\leq \tau_{\text{max}} 
\end{equation}
\begin{equation}\label{lambdamin}
 \lambda_{\text{min}}|\xi|^2\leq\sum_{i,j}h_{ij}(t,x)\overline{\xi_i}\xi_j\leq \lambda_{\text{max}}|\xi|^2
 \end{equation}
 for all $(t,x)\in \mathbb R^{n+1} $ and for all $\xi\in \mathbb
 C^{n}$.  It follows from Theorem 23.2.2 in \cite{Hoermander:3} or
 Theorem 2.6 in \cite{BG:07} that for initial data $u_0,u_1\in
 H^{\infty}(\mathbb R^n)$  (see Appendix A) and right-hand side
 $f \in H^\infty((0,T) \times \R^n) \subset C^\infty([0,T] \times \R^n)$
 the Cauchy problem
\begin{equation}\label{cauchy} Pu=f\quad\text{on}\quad \Omega_T:=(0,T)\times \mathbb R^n, 
 \qquad u|_{t=0}=u_0,  \qquad \partial_t u|_{t=0} = u_1 
 \end{equation}
 has a unique solution $u$ which belongs to $
 C([0,T],H^{s+1}(\mathbb R^n)) \cap C^{1}([0,T],H^{s}(\mathbb
 R^n))\cap H^{s+1}((0,T)\times \mathbb R^n)$ for every $s\in \mathbb
 R$. In the following lemma we provide an explicit energy estimate
 with $s=1$.

\begin{Lemma}\label{energy_estimates}
Let $u\in C([0,T],H^{2}(\mathbb R^n)) \cap C^{1}([0,T],H^{1}(\mathbb R^n))\cap H^{2}((0,T)\times \mathbb R^n)$ such that $Pu \in  L^{2}([0,T],H^{1}(\mathbb R^n))$ and 
denote $\tilde u_0:=u(0,\cdot)$ and $\tilde u_1:=\partial_tu(0,\cdot)$. Then  we have
\begin{equation*}
||u||_{C([0,T],H^{2}(\mathbb R^n))}+||u||_{C^1([0,T], H^{1}(\mathbb R^n))} 
\le e^{\beta T} \Big( \norm{\tilde u_0}{H^2(\mathbb R^n)}+ \norm{\tilde u_1}{H^1(\mathbb R^n)}+  ||Pu||_{L^{2}([0,T],H^{1}(\mathbb R^n))}\Big).  
\end{equation*}

\noindent
where $\beta=C_{g',a',b'}/\min(\tau_{\min},\lambda_{\min})$ and
$C_{g',a',b'}$ contains $L^{\infty}(\Omega_T)$-norms of at most
first-order derivatives of the coefficients $g,a,b$.
\end{Lemma}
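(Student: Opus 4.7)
My strategy is the classical weighted-energy argument: define an energy functional tailored to the metric and the desired Sobolev indices, compute its $t$-derivative, and invoke Grönwall. Since the norms on the left are $\|u(t)\|_{H^2}$ and $\|\partial_t u(t)\|_{H^1}$, I would work at the level of spatial derivatives $\partial^\alpha$ with $|\alpha|\le 1$. Set
\begin{equation*}
E(t):=\sum_{|\alpha|\le 1}\int_{\R^n}\Bigl(g_{00}(t,x)\,|\partial_t\partial^\alpha u|^2+\sum_{i,j}h_{ij}(t,x)\,\partial_i\partial^\alpha u\,\partial_j\partial^\alpha u\Bigr)\,dx+\|u(t,\cdot)\|^2_{L^2(\R^n)}.
\end{equation*}
The bounds \eqref{taumin} and \eqref{lambdamin} yield a two-sided equivalence of $E(t)$ with $\|u(t)\|^2_{H^2}+\|\partial_t u(t)\|^2_{H^1}$, whose lower constant is essentially $\min(\tau_{\min},\lambda_{\min})$. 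This is exactly where that quantity enters the denominator of $\beta$.

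Next, I would differentiate $E(t)$. Writing $v_\alpha:=\partial^\alpha u$, differentiation of each summand produces: (i) easily-absorbed pieces $\int(\partial_t g_{00})|\partial_t v_\alpha|^2$ and $\int(\partial_t h_{ij})\partial_i v_\alpha\partial_j v_\alpha$; (ii) the cross term $2\int h_{ij}\partial_i v_\alpha\partial_j\partial_t v_\alpha$ which, after integration by parts in $x_j$, produces $-2\int h_{ij}\partial_i\partial_j v_\alpha\,\partial_t v_\alpha$ plus a $\partial_j h_{ij}$-remainder; and (iii) the main term $2\int g_{00}\partial_t v_\alpha\,\partial_t^2 v_\alpha$, handled by applying $\partial^\alpha$ to $Pu=f$ and solving for $g_{00}\partial_t^2 v_\alpha$ in terms of $\partial^\alpha(Pu)$, the second-order spatial part $\sum h_{ij}\partial_i\partial_j v_\alpha$, the mixed term $-2\sum g_{0j}\partial_j\partial_t v_\alpha$, lower-order contributions and the commutator $[\partial^\alpha,P]u$. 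The key cancellation is that $2\int(\sum h_{ij}\partial_i\partial_j v_\alpha)\,\partial_t v_\alpha$ arising from this substitution exactly cancels the spatial contribution from (ii); the mixed $g_{0j}$-term, after integration by parts in $x_j$, reduces to $\int(\partial_j g_{0j})|\partial_t v_\alpha|^2$; and the source contribution $2\int\partial^\alpha(Pu)\,\partial_t v_\alpha$ is bounded via Cauchy-Schwarz by $\|Pu(t)\|^2_{H^1}+\|\partial_t u(t)\|^2_{H^1}$, the second summand being absorbed into the energy.

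Summing over $|\alpha|\le 1$, all surviving terms are quadratic in $\partial^\beta u$, $\partial_t\partial^\beta u$ with $|\beta|\le 1$, each multiplied by an $L^\infty$ first derivative of a coefficient of $P$, hence dominated by $C_{g',a',b'}\,E(t)/\min(\tau_{\min},\lambda_{\min})$. This yields the differential inequality
\begin{equation*}
\frac{d}{dt}E(t)\le 2\beta\,E(t)+\|Pu(t,\cdot)\|^2_{H^1(\R^n)},\qquad \beta=\frac{C_{g',a',b'}}{\min(\tau_{\min},\lambda_{\min})}.
\end{equation*}
Grönwall's lemma then gives $E(t)\le e^{2\beta t}\bigl(E(0)+\|Pu\|^2_{L^2([0,T],H^1)}\bigr)$, and taking square roots together with the energy equivalence and $\sqrt{a+b}\le\sqrt a+\sqrt b$ produces the stated bound.

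The main obstacle is bookkeeping on two fronts. First, $\beta$ must emerge with exactly the advertised dependence, which requires carefully tracking where the ellipticity constants $\tau_{\min},\lambda_{\min}$ are used when one divides through by $g_{00}$ and when one passes between $E$ and the intrinsic Sobolev norms. Second, for $|\alpha|=1$ the commutator $[\partial^\alpha,P]u$ is of the same formal order as $Pu$, but its coefficients are first derivatives of the coefficients of $P$; closing the estimate therefore requires bounding its $L^2$-norm by $C_{g',a',b'}\,(\|u(t)\|_{H^2}+\|\partial_t u(t)\|_{H^1})$, which is standard but must be carried out term by term. A minor technical point is that under the stated regularity of $u$ the map $t\mapsto E(t)$ is only absolutely continuous, so the identities above should be interpreted distributionally in $t$ or justified via a preliminary time-mollification before invoking Grönwall.
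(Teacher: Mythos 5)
Your plan follows the paper's basic strategy (weighted energy, Grönwall, track first derivatives of the coefficients), but there is a genuine gap in the treatment of the commutator that the paper's proof is specifically engineered to avoid. Applying $\partial_k$ directly to $Pu=f$ produces the commutator
$[\partial_k,P]u=(\partial_k g_{00})\,\partial_t^2 u+2\sum_j(\partial_k g_{0j})\,\partial_j\partial_t u-\sum_{i,j}(\partial_k h_{ij})\,\partial_i\partial_j u+\dots$,
and the first term contains $\partial_t^2 u$, which is \emph{not} controlled by your energy functional $E(t)$ nor by $\|u(t)\|_{H^2(\R^n)}+\|\partial_t u(t)\|_{H^1(\R^n)}$ (these only control spatial and mixed second derivatives $\partial_i\partial_j u$ and $\partial_i\partial_t u$). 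Your claim that the commutator's $L^2$-norm is bounded by $C_{g',a',b'}(\|u(t)\|_{H^2}+\|\partial_t u(t)\|_{H^1})$ is therefore false as stated; to close the argument you would have to invoke the equation once more to express $\partial_t^2 u$ in terms of $Pu$, spatial second derivatives, and mixed derivatives, thereby picking up a $\|Pu(t)\|_{L^2}$ contribution — a step you do not mention and which is the real content here, not routine bookkeeping.

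The paper avoids this difficulty entirely by a structural trick: before differentiating in $x_k$, it first \emph{divides the equation by $g_{00}$} so that the coefficient of $\partial_t^2$ becomes the constant $1$; differentiating $\partial_t^2 u+\dots=Pu/g_{00}$ then produces no term proportional to $\partial_t^2 u$, and after multiplying back by $g_{00}$ one obtains the identity (3.7) in the paper, $P\,\partial_{x_k}u + (\text{second-order terms involving only }\partial_j\partial_t u\text{ and }\partial_i\partial_j u)=(Pu)_{,k}-\tfrac{Pu}{g_{00}}g_{00,k}-\dots$, where every surviving term lies in the span controlled by the energy or by $\|Pu\|_{H^1}$. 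This is arithmetically equivalent to the substitution you would need to make, but it builds the cancellation in from the start, which is why the paper never has to confront an uncontrolled $\partial_t^2 u$. If you carry out the substitution explicitly, your argument can be made to work; as written it does not close.
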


\begin{proof} 
  We show how to derive the energy estimate in terms of application of
  $P$ to any function $u\in H^{\infty}(\mathbb R^{n+1})$: We may write
\begin{equation}\label{div_form}
 2\text{Re}(Pu \,\dt \overline{u} )= \dt\mathcal T+ \sum\limits_{j=1}^n \partial_{x_j} X_j + Y
\end{equation}
where
\begin{eqnarray*}
\mathcal T(u(t,x);t,x)  
&=& g_{00}(t,x)\left|\dt u(t,x)\right|^2 
+\sum\limits_{i,j=1}^nh_{ij}(t,x)\partial_{x_i}u(t,x)\,\partial_{x_j}\overline{u}(t,x)\\
X_j(u(t,x);t,x) &=& 2g_{0j}(t,x)\left|\partial_t u(t,x)\right|^2  - 2\text{Re}\sum\limits_{i=1}^n   h_{ji}(t,x)\,\partial_{x_i}u(t,x)\,\dt \overline{ u} (t,x)
\end{eqnarray*}
and
\begin{multline*}
Y(u(t,x);t,x) = \Big(- g_{00,0}(t,x) -2 \sum\limits_{j=1}^n g_{0j,j}(t,x) +a_0(t,x)\Big)\left|\partial_t u(t,x)\right|^2 \\
+  \sum\limits_{j=1}^n2\text{Re} \Big(\sum\limits_{i=1}^n   h_{ij,i}(t,x)+a_j(t,x)\Big)\partial_{x_j}u(t,x)\, \dt \overline{u}(t,x) 
 + b(t,x)u(t,x)\dt \overline{u}(t,x).
\end{multline*}
Here we have used the notation $f_{i,\mu}$ for
$\partial_{x^{\mu}}f_{i}$, where Greek indices range from 0 to $n$ and
$x^0:=t$.  Considering time $t$ as a parameter and integrating
equation \eqref{div_form} over the spatial domain $\mathbb R^n$, we
obtain
 \begin{multline}\label{div_form_x_int}
 \frac{d}{dt}\int_{\mathbb R^n} \mathcal T(u(t,x);t,x) d(x_1,...,x_n)=\\ 2\text{Re}\int_{\mathbb R^n} (Pu \dt \overline{u})(t,x) d(x_1,...,x_n)- \int_{\mathbb R^n} Y(u(t,x);t,x) d(x_1,...,x_n)
 \end{multline}
 where we have used that $\int_{\mathbb R^n} \partial_{x_j} X_j
 (u(t,x);t,x)d(x_1,...,x_n)=0$ for all $j=1,...,n$ and for all $t\in
 [0,T]$, since $x\mapsto u(t,x)$ belongs to $H^1(\mathbb R^n)$ and the
 latter possesses $ C^{\infty}_c(\mathbb R^n)$ as a dense subspace. Using the notation $g_{\mu\,\cdot}$ for the $\mu$-th row vector and $\text{div}g_{\mu\,\cdot}$ for its divergence, we
 may estimate $Y(u(t,x);t,x)$ as follows:
\begin{multline*}
|Y(u(t,x);t,x)|\leq  \norm{g_{00,0} +2\, \text{div} \,g_{0\,\cdot} -a_0}{L^{\infty}(\Omega_{T})}\left|\partial_t u(t,x)\right|^2 \\
+  \max_{1\leq i\leq  n} \Big(n\,\norm{\text{div}\, g_{i\,\cdot}}{L^{\infty}(\Omega_{T})} 
 +\norm{a_i}{L^{\infty}(\Omega_{T})}\Big)
 \Big( \sum\limits_{j=1}^n|\partial_{x_j}u(t,x)|^2+ |\dt u(t,x)|^2\Big)\\
 +\frac{1}{2}\norm{b}{L^{\infty}(\Omega_{T})} \Big( |u(t,x)|^2+|\dt u(t,x)|^2\Big)\\
 = \Big( \norm{g_{00,0} +2\, \text{div} \,g_{0\,\cdot} -a_0}{L^{\infty}(\Omega_{T})} +\max_{1\leq i\leq  n} \Big(n\,\norm{\text{div}\, g_{i\,\cdot}}{L^{\infty}(\Omega_{T})} 
 +\norm{a_i}{L^{\infty}(\Omega_{T})}\Big)+\frac{1}{2}\norm{b}{L^{\infty}(\Omega_{T})}\Big) \left|\partial_t u(t,x)\right|^2\\
 +\max_{1\leq i\leq  n} \Big(n\,\norm{\text{div}\, g_{i\,\cdot}}{L^{\infty}(\Omega_{T})} 
 +\norm{a_i}{L^{\infty}(\Omega_{T})}\Big)\sum\limits_{j=1}^n|\partial_{x_j}u(t,x)|^2 + \frac{1}{2}\norm{b}{L^{\infty}(\Omega_{T})} |u(t,x)|^2.
\end{multline*}
We note that 
$$
  \int_{\mathbb R^n} |Y(u(t,x);t,x)| d(x_1,...,x_n)\leq C_{g',a,b} \norm{u(t,\cdot)}{{\tilde H}^1(\mathbb R^n)}^2
$$
 where $C_{g',a,b}$ depends on the the metric $g$ and its first derivatives  (more precisely, only on $g_{0\mu,\mu}$ and $g_{ij,j}$), as well as on $a$ and $b$.  (See Appendix B for the definition of ${\tilde H}^1(M)$.)
 Moreover we have
 \begin{multline*}
C_{\text{min}}\Big(\norm{\dt u(t,\cdot)}{L^2(\mathbb R^n)}^2+\norm{\nabla u(t,\cdot)}{L^2(\mathbb R^n)}^2\Big) \leq \int_{\mathbb R^n} \mathcal T(u(t,x);t,x) d(x_1,...,x_n) \\
 \leq C_{\text{max}}\Big(\norm{\dt u(t,\cdot)}{L^2(\mathbb R^n)}^2+\norm{\nabla u(t,\cdot)}{L^2(\mathbb R^n)}^2\Big)
 \end{multline*}
 where $C_{\text{min}}:=\min(\tau_{\text{min}},\lambda_{\text{min}})$
 and
 $C_{\text{max}}:=\max(\tau_{\text{max}},\lambda_{\text{max}})$. We
 observe that $ |2\text{Re}(Pu \partial_t \overline{u})(t,x) | \leq
 |Pu(t,x)|^2+ |\dt u(t,x)|^2.  $ Using the fundamental theorem of
 calculus we may write
  $$
 |u(t,x)|^2=|u(0,x)|^2+\int\limits_0^{t}\partial_s |u(s,x)|^2 ds \leq 
 |u(0,x)|^2 + \int\limits_0^{t} |\partial_s u(s,x)|^2 ds + 
 \int\limits_0^{t}  |u(s,x)|^2 ds 
 $$
 and  integration over the spatial variables yields
 \begin{multline}\label{minmax}
 C_{\text{min}} \norm{u(t,\cdot)}{L^2(\mathbb R^n)}^2\leq \\
  C_{\text{max}} 
 \left(  \norm{u(0,\cdot)}{L^2(\R^n)}^2 + \int\limits_0^{t} \norm{\partial_s u(s,\cdot)}{L^2(\R^n)}^2 ds + 
 \int\limits_0^{t}  \norm{u(s,\cdot)}{L^2(\R^n)}^2 ds \right)
 \end{multline} 
 where we have deliberately multiplied by $C_{\text{min}} \leq
 C_{\text{max}}$.  Integrating equation \eqref{div_form_x_int} with
 respect to the time variable from 0 to $t$ we obtain
 \begin{multline*}
 \int_{\mathbb R^n}\mathcal T(u(t,\cdot);t,\cdot)dV \leq\\
  \int_{\mathbb R^n}\mathcal T(u(0,\cdot);0,\cdot)dV +2 \text{Re}\int\limits_0^t  \int_{\mathbb R^n} (Pu \dt \overline{u})(s,\cdot) dV ds +\int\limits_0^t  \int_{\mathbb R^n} |Y(u(s,\cdot);s,\cdot)| dVds. 
\end{multline*}
We will from now on write $\tilde u_0:= u(0,\cdot)$ and $\tilde
u_1:= \partial_tu(0,\cdot)$.  Upon adding inequality \eqref{minmax} we
get
\begin{multline*} \int_{\mathbb R^n}\mathcal T(u(t,\cdot);t,\cdot)dV+C_{\text{min}} \norm{u(t,\cdot)}{L^2(\mathbb R^n)}^2 \leq \int_{\mathbb R^n}\mathcal T(u(0,\cdot);0,\cdot)dV+2 \text{Re}\int\limits_0^t  \int_{\mathbb R^n} (Pu \,\dt \overline{u})(s,\cdot) dV ds\\
+\int\limits_0^t  \int_{\mathbb R^n} |Y(u(s,\cdot);s,\cdot)| dVds
+C_{\text{max}} 
 \left(  \norm{\tilde{u}^0}{L^2(\mathbb R^n)}^2 + \int\limits_0^{t} \norm{\partial_s u(s,\cdot)}{L^2(\mathbb R^n)}^2 ds + 
 \int\limits_0^{t}  \norm{u(s,\cdot)}{L^2(\mathbb R^n)}^2 ds \right)
\end{multline*}
Employing the lower bound $ C_{\text{min}}\Big(\norm{\dt
  u(t,\cdot)}{L^2(\mathbb R^n)}^2+\norm{\nabla u(t,\cdot)}{L^2(\mathbb
  R^n)}^2\Big) \leq \int_{\mathbb R^n}\mathcal T(u(t,\cdot);t,\cdot)
dV $ and the upper bound $\int_{\mathbb R^n}\mathcal
T(u(0,\cdot);0,\cdot)dV\leq
C_{\text{max}}\Big(\norm{\tilde{u}^1}{L^2(\mathbb R^n)}^2+\norm{\nabla
  \tilde{u}^0}{L^2(\mathbb R^n)}^2\Big)$ as well as the other
estimates and rearranging terms, we arrive at
 \begin{multline}\label{energy_estimate_1}
 C_{\text{min}}\Big(\norm{u(t,\cdot)}{L^2(\mathbb R^n)}^2+ \norm{\dt u(t,\cdot)}{L^2(\mathbb R^n)}^2 +\norm{\nabla u(t,\cdot)}{L^2(\mathbb R^n)}^2\Big)\\
\leq C_{\text{max}}\Big(\norm{\tilde{u}^0}{L^2(\mathbb R^n)}^2+ \norm{\tilde{u}^1}{L^2(\mathbb R^n)}^2 +\norm{\nabla \tilde{u}^0}{L^2(\mathbb R^n)}^2\Big)
+ \int\limits_0^t\norm{Pu(s,\cdot)}{L^2(\mathbb R^n)}^2 ds\\
+  (1+\max( C_1, C_2, C_3)) \int\limits_0^t \Big(\norm{u(s,\cdot)}{L^2(\mathbb R^n)}^2+ \norm{\partial_s u(s,\cdot)}{L^2(\mathbb R^n)}^2 +\norm{\nabla u(s,\cdot)}{L^2(\mathbb R^n)}^2\Big)ds
  \end{multline}
 where
 \begin{align*}
C_1&= 1+ \norm{g_{00,0} +2\, \text{div} \,g_{0\,\cdot} -a_0}{L^{\infty}(\Omega_{\tau})} +\max_{1\leq i\leq  n} \Big(n\,\norm{\text{div}\, g_{i\,\cdot}}{L^{\infty}(\Omega_{T})} 
 +\norm{a_i}{L^{\infty}(\Omega_{\tau})}\Big)+\frac{\norm{b}{L^{\infty}(\Omega_{T})}}{2}, \\
 C_2 &= \max_{1\leq i\leq  n} \Big(n\,\norm{\text{div}\, g_{i\,\cdot}}{L^{\infty}(\Omega_{T})} 
 +\norm{a_i}{L^{\infty}(\Omega_{\tau})}\Big),\\
 C_3 &= \frac{1}{2}\norm{b}{L^{\infty}(\Omega_{T})}.
  \end{align*}
  Gronwall's inequality \cite[Chapter XVIII, \S 5, Section 2.2]{DL:V5}
 then yields the a-priori energy estimate
 \begin{multline*}
 \sup_{0\leq t \leq T}\Big(\norm{u(t,\cdot)}{H^1(\R^n)}^2+ \norm{\dt u(t,\cdot)}{L^2(\R^n)}^2 \Big)\\
 \leq  C_{\text{min}}^{-1} \Big( C_{\text{max}}
 \Big(\norm{\tilde{u}^0}{L^2(\mathbb R^n)}^2+ \norm{\tilde{u}^1}{L^2(\mathbb R^n)}^2 +\norm{\nabla \tilde{u}^0}{L^2(\mathbb R^n)}^2\Big)+ \norm{Pu}{L^2(\Omega_{T})}^2\Big) \cdot e^{\beta T}
 \end{multline*}
 where $\beta:= C_{\text{min}}^{-1}(1+\max( C_1, C_2, C_3))$.\\

 To obtain similar estimates for spatial derivatives $\nabla
 u=(\partial_{x_1}u,...,\partial_{x_n}u)^{\mathsf{T}}$ of $u$, we
 first divide the equation by $g_{00}$:
\begin{equation*}
		\partial_t^2 u+2\sum_{j=1}^n \frac{g_{0j}}{g_{00}}\partial_{x_j}\partial_t u-\sum_{i,j=1}^n \frac{h_{ij}}{g_{00}}\partial_{x_i}\partial_{x_j} u +\sum_{j=1}^n \frac{a_j}{g_{00}}\partial_{x_j} u + \frac{a_0}{g_{00}}\partial_t u+\frac{b}{g_{00}}u =\frac{Pu}{g_{00}}.
		\end{equation*}
		
Differentiating this equation with respect to $x^k$  yields
 \begin{multline}\label{divided_diff}
		\partial_t^2 \partial_{x_k} u  
		+2\sum_{j=1}^n \frac{g_{0j,k}g_{00}-g_{0j}g_{00,k}}{g_{00}^2}\partial_{x_j}\partial_t u+2\sum_{j=1}^n \frac{g_{0j}}{g_{00}}\partial_{x_j}\partial_t \partial_{x_k}u
		-\sum_{i,j=1}^n \frac{h_{ij,k}g_{00}-h_{ij}g_{00,k}}{g_{00}^2}\partial_{x_i}\partial_{x_j} u \\
		-\sum_{i,j=1}^n \frac{h_{ij}}{g_{00}}\partial_{x_i}\partial_{x_j}\partial_{x_k} u +\sum_{j=1}^n \frac{a_{j,k}g_{00}-a_j g_{00,k}}{g_{00}^2}\partial_{x_j} u+\sum_{j=1}^n \frac{a_{j}}{g_{00}}\partial_{x_j}\partial_{x_k} u + \frac{a_{0,k}g_{00}-a_0g_{00,k}}{g_{00}^2}\partial_t u\\
		+\frac{a_0}{g_{00}}\partial_t \partial_{x_k}u
		+\frac{b_{,k}g_{00}-b g_{00,k}}{g_{00}^2}u+\frac{b}{g_{00}} \partial_{x_k}u =\frac{(Pu)_{,k}g_{00}-(Pu)g_{00,k}}{g_{00}^2}.
		\end{multline}
Multiplying by $g_{00}$ we may write this as
 \begin{multline}\label{equ_x_deriv}
 P\partial_{x_k} u + 2\sum_{j=1}^n \Big(g_{0j,k}-\frac{g_{0j}}{g_{00}}g_{00,k}\Big)\partial_{x_j}\partial_t u - \sum_{i,j=1}^n \Big(h_{ij,k}-\frac{h_{ij}}{g_{00}}g_{00,k}\Big)\partial_{x_i}\partial_{x_j} u+\sum_{j=1}^n\Big(a_{j,k}-\frac{a_{j}}{g_{00}}g_{00,k}\Big)\partial_{x_j} u\\
   =(Pu)_{,k}-\frac{Pu}{g_{00}}g_{00,k}- \Big(a_{0,k}-\frac{a_0}{g_{00}}g_{00,k}\Big)\partial_t u -\Big(b_{,k}-\frac{b}{g_{00}}g_{00,k}\Big)u
		\end{multline}
                where we have brought all terms which do not contain
                spatial derivatives $\partial_{x_k}u$ to the
                right-hand side.  Multiplying by
                $\dt \partial_{x_k}\overline u$ and summing over
                $1\leq k\leq n$, we get
\begin{multline}\label{div_form_Du}
 2\text{Re} \sum_{k=1}^n P(\partial_{x_k} u)\dt \partial_{x_k}\overline u =
  \dt \sum_{k=1}^n \mathcal T(\partial_{x_k} u))+ \sum_{k,j=1}^n \partial_{x_j} X_j(\partial_{x_k} u)) + \sum_{k=1}^nY(\partial_{x_k} u)) 
 + Z(\nabla u)) =\\
   \underbrace{2\sum_{k=1}^n\text{Re}\Bigg(\bigg(\big((Pu)_{,k}-\frac{Pu}{g_{00}}g_{00,k}\big)
 -   \Big(a_{0,k}-\frac{a_0}{g_{00}}g_{00,k}\Big)\partial_t u - \big(b_{,k}-\frac{b}{g_{00}}g_{00,k}\big)u\bigg)\dt \partial_{x_k}\overline u\Bigg)}_{=:F(u(t,x);t,x)}
\end{multline}
where the whole expression depends on $(t,x)$ and $X$ and $Y$ are
defined exactly as before. The additional term $Z$ collects all terms
from \eqref{equ_x_deriv} which are of lower order with respect to
$\partial_{x_k}u$ (that is, they contain spatial derivatives of order
2 at most). We have
\begin{multline*}
Z(\nabla u(t,x);t,x)=2\text{Re}\bigg(2\sum_{j,k=1}^n \Big(g_{0j,k}-\frac{g_{0j}}{g_{00}}g_{00,k}\Big)\partial_{x_j}\partial_t u \,\partial_{x_k}\partial_t \overline  u \\ 
- \sum_{i,j,k=1}^n \Big(h_{ij,k}-\frac{h_{ij}}{g_{00}}g_{00,k}\Big)\partial_{x_i}\partial_{x_j} u \,\partial_{x_k}\partial_t\overline  u 
+\sum_{j,k=1}^n\Big(a_{j,k}-\frac{a_{j}}{g_{00}}g_{00,k}\Big)\partial_{x_j} u \,\partial_{x_k}\partial_t \overline u\bigg)
\end{multline*}
and it is easy to see that $Z$ can be estimated as follows:
\begin{multline*}
|Z(\nabla u(t,x);t,x))|\leq 2n\sum_{i=1}^n |\dt \partial_{x_i} u(t,x)|^2 \max_{1\leq j,k \leq n} \norm{g_{0j,k}- g_{0j}\partial_{x_k}\ln g_{00} }{L^{\infty}(\Omega_T)} \\
+ n^2\sum_{i=1}^n \big(\norm{\partial_{x_i}\nabla u(t,x)}{}^2 +\norm{\dt \nabla u(t,x)}{}^2\big)
  \max_{1\leq j,k \leq n} \norm{h_{ij,k}-h_{ij}\partial_{x_k}\ln g_{00}}{L^{\infty}(\Omega_T)}  \\ 
 +  n\sum_{i=1}^n \big(|\partial_{x_i} u(t,x)|^2+\norm{\dt \nabla u(t,x)}{}^2\big) \max_{1\leq k\leq n}\norm{a_{i,k}- a_{i} \partial_{x_k}\ln g_{00}  }{L^{\infty}(\Omega_T)}  
   \end{multline*}
 and thus
 \begin{eqnarray*}
\int_{\R^n} |Z(\nabla u(t,x);t,x))| dV
 &\leq& C_{g',a'} \Big( \norm{\partial_t \nabla u(t,\cdot)}{L^2(\R^n)}^2+\norm{\nabla u(t,\cdot)}{L^2(\R^n)}^2+ \sum_{j=1}^n\norm{\partial_{x_j}\nabla u(t,\cdot)}{L^2(\R^n)}^2 \Big)\\
 &=& C_{g',a'} \Big( \norm{\partial_t \nabla u(t,\cdot)}{L^2(\R^n)}^2+\norm{\nabla u(t,\cdot)}{H^1(\R^n)}^2\Big)
 \end{eqnarray*}
 where $\norm{\nabla u(t,x)}{}^2:= \sum_{k=1}^n|\partial_{x_k} u(t,x)|^2$.
 Moreover, we have
 \begin{multline*}
 |F(u(t,x);t,x)|\leq  \sum_{k=1}^n\Big(  |(Pu)_{,k}(t,x)|^2 +  |Pu(t,x)|^2 \norm{\partial_{x_k}\ln g_{00}}{L^{\infty}(\Omega_T)} +2|\dt \partial_{x_k}u(t,x)|^2\Big) 
\\ + \max_{1\leq k\leq n} \norm{a_{0,k}- a_0 \partial_{x_k}\ln g_{00}}{L^{\infty}(\Omega_T)} \sum_{k=1}^n \big(|\dt u(t,x)|^2 + |\dt \partial_{x_k}u(t,x)|^2 \big)
 \\
+   \max_{1\leq k\leq n}\norm{b_{,k}-b\, \partial_{x_k}\ln g_{00}}{L^{\infty}(\Omega_T)}
\sum_{k=1}^n\big(|u(t,x)^2|+ |\dt \partial_{x_k}u(t,x)|^2\big)
  \end{multline*}
  and therefore
   \begin{multline*}
\int_{\R^n} |F (u(t,x);t,x))| dV
 \leq\\
  C_{g_{00}'}  \norm{Pu(t,\cdot)}{H^1(\R^n)}^2+C_{g_{00}',a_0',b'}\Big(\norm{ u(t,\cdot)}{L^2(\R^n)}^2 + \norm{ \partial_{t}u(t,\cdot)}{L^2(\R^n)}^2+\norm{ \dt \nabla u(t,\cdot)}{L^2(\R^n)}^2 \Big).
 \end{multline*}
 We may now proceed as in the proof for the basic energy estimate
 \eqref{energy_estimate_1}, following the steps from
 \eqref{div_form_x_int} to \eqref{energy_estimate_1}.  Upon
 integrating equation \eqref{div_form_Du} over the spatial domain
 $\R^n$, we get
\begin{multline*}
 \sum_{k=1}^n \,\int\limits_{\R^n}\mathcal T(\partial_{x_k} u;t,\cdot)dV\leq  \sum_{k=1}^n\,\int\limits_{\R^n}\mathcal T(\partial_{x_k} u;0,\cdot)dV + \sum_{k=1}^n\,\int\limits_0^t\int\limits_{\R^n}|Y(\partial_{x_k} u;t,\cdot)|dVds \\
 + \int\limits_0^t\int\limits_{\R^n} |Z(\nabla u;s,\cdot))|dVds +
 \int\limits_0^t\int\limits_{\R^n} |F(\nabla u;s,\cdot))|dVds
 \end{multline*}
 and by employing the estimates for $Y$, $Z$, and $F$, we arrive at an
 energy estimate for $\nabla u$ (recall that $\int_{\mathbb R^n}
 |Y(\partial_{x_k}u;t,\cdot)| dV\leq C_{g',a,b} \norm{\partial_{x_k}
   u(t,\cdot)}{H^1(\mathbb R^n)}^2$):
 \begin{multline*}
 C_{\text{min}}\Big(\norm{\nabla u(t,\cdot)}{L^2(\mathbb R^n)}^2+ \norm{\dt \nabla u(t,\cdot)}{L^2(\mathbb R^n)}^2 +\sum_{k=1}^n\norm{\partial_{x_k}\nabla u(t,\cdot)}{L^2(\mathbb R^n)}^2\Big)\\
\leq C_{\text{max}}\Big(\norm{\nabla u_0}{L^2(\mathbb R^n)}^2+ \norm{\nabla u_1}{L^2(\mathbb R^n)}^2 +\sum_{k=1}^n\norm{\partial_{x_k}\nabla u_0}{L^2(\mathbb R^n)}^2\Big) +  C_{g',a,b} \int\limits_0^t \sum_{k=1}^n   \norm{\partial_{x_k} u(s,\cdot)}{H^1(\mathbb R^n)}^2 ds\\
+ C_{g',a'}\int\limits_0^t  \Big( \norm{\partial_s \nabla u(s,\cdot)}{L^2(\R^n)}^2+\norm{\nabla u(s,\cdot)}{L^2(\R^n)}^2+\sum_{k=1}^n\norm{\partial_{x_k}\nabla u(t,\cdot)}{L^2(\R^n)}^2\Big) ds
\\
+C_{g_{00}'}\int\limits_0^t\norm{Pu(s,\cdot)}{H^1(\R^n)}^2 ds+C_{g_{00}',a_0',b'}\int\limits_0^t\Big(\norm{ u(s,\cdot)}{L^2(\R^n)}^2 + \norm{ \partial_{s}u(s,\cdot)}{L^2(\R^n)}^2+\norm{ \partial_{s} \nabla u(s,\cdot)}{L^2(\R^n)}^2 \Big)ds
  \end{multline*}
  which we may also write as
   \begin{multline*}
   C_{\text{min}}\Big(\norm{\nabla u(t,\cdot)}{H^1(\mathbb R^n)}^2+ \norm{\dt \nabla u(t,\cdot)}{L^2(\mathbb R^n)}^2\Big) \leq C_{\text{max}}\Big(\norm{\nabla u_0}{H^1(\mathbb R^n)}^2+ \norm{\nabla u_1}{L^2(\mathbb R^n)}^2 \Big) \\
   + C_{g',a',b'} \int\limits_0^t \Big(\norm{\nabla u(s,\cdot)}{H^1(\mathbb R^n)}^2+ \norm{\partial_s \nabla u(s,\cdot)}{L^2(\mathbb R^n)}^2\Big) ds \\
   + C_{g_{00}'}\int\limits_0^T\norm{Pu(s,\cdot)}{H^1(\R^n)}^2 ds+C_{g_{00}',a_0',b'}\int\limits_0^T\Big(\norm{ u(s,\cdot)}{L^2(\R^n)}^2 + \norm{ \partial_{s}u(s,\cdot)}{L^2(\R^n)}^2 \Big)ds
     \end{multline*}
     where $C_{g',a',b'}$ contains $L^{\infty}(\Omega_T)$-norms of at
     most first-order derivatives of the coefficients
     $g,a,b$. Applying Gronwall's inequality then results in an energy
     estimate for the spatial derivative vector $\nabla u$,
\begin{multline}\label{energy_estimate_spatial_derivatives}
 \sup_{0\leq t\leq T}\Big(\norm{\nabla u(t,\cdot)}{H^1(\mathbb R^n)}^2+ \norm{\dt \nabla u(t,\cdot)}{L^2(\mathbb R^n)}^2\Big) 
 \leq  
  C_{\text{min}}^{-1} \Bigg( C_{\text{max}}
 \Big(\norm{\nabla u_0}{H^1(\mathbb R^n)}^2+ \norm{\nabla u_1}{L^2(\mathbb R^n)}^2 \Big)\\
 + C_{g_{00}'}\int\limits_0^T\norm{Pu(s,\cdot)}{H^1(\R^n)}^2 ds+C_{g_{00}',a_0',b'}\int\limits_0^T\Big(\norm{ u(s,\cdot)}{L^2(\R^n)}^2 + \norm{ \partial_{s}u(s,\cdot)}{L^2(\R^n)}^2  \Big) ds \Bigg) e^{\widetilde\beta_1 T}
 \end{multline}
 where $\widetilde\beta_1:= C_{\text{min}}^{-1}C_{g',a',b'}$. Employing the basic energy estimate  \eqref{energy_estimate_1} to estimate $\norm{ u(s,\cdot)}{L^2(\R^n)}^2$  and $ \norm{ \partial_{s}u(s,\cdot)}{L^2(\R^n)}^2$ in terms of  the restrictions $\tilde u_0,\tilde u_1$ and $Pu$, we arrive at
\begin{equation*}
||u||_{C^0([0,T],H^{2}(\mathbb R^n))}+||u||_{C^1([0,T], H^{1}(\mathbb R^n))} 
\le e^{\beta T} \Big( \norm{\tilde{u}^0}{H^2(\mathbb R^n)}+ \norm{\tilde{u}^1}{H^1(\mathbb R^n)}+  ||Pu||_{L^{2}([0,T],H^{1}(\mathbb R^n))}\Big).  
\end{equation*}
\end{proof}

To prepare for the extension to the Colombeau solutions, we proceed by
discussing higher-order energy estimates. Applying the operator
$\partial_{x_l}$ to equation \eqref{divided_diff} and afterwards
multiplying by $g_{00}$ we essentially get a system of the form
$P(\nabla^2 u)=Q^{(2)}(\Sigma^2 u)+R^{(2)}(\Sigma^2 Pu)$, where $Q^2$
is a PDO of order 3, containing time derivatives of at most order 1,
and $R$ is a purely spatial PDO of order 2. Here we have used the
notation
 $$
\begin{array}{rclrcl}
\nab{1} u
&:=&
(\partial_{x_1}u,...,\partial_{x_n}u)^{\mathsf{T}},\\
 \nab{r+1} u
&:=& \nab{1} \nab{r} u,\\
 \sigm{r} u
&:=& 
(u,...,u)^{{\mathsf{T}}}_{n^r}
 \,.
\end{array}
$$
Thus the terms produced by $Q^{(2)}(\Sigma^2 u)$ are either
lower-order terms in $\nabla^2 u$ or terms with less than two spatial
derivatives of $u$. In the first case, they can be dealt with just as
the generic lower-order terms appearing in $P(\nabla^2 u)$. In the
second case, they can be interpreted as source terms on the right-hand
side, just as the term $R^{(2)}(\Sigma^2 Pu)$. More generally, we
obtain
  \begin{equation}\label{higher_order_equ}
  P(\nabla^{r} u)=Q^{(r)}(\Sigma^r u)+R^{(r)}(\Sigma^r Pu)
  \end{equation}
  where $Q^{(r)}$ is a PDO of order $r+1$, containing time derivatives
  of at most order 1, and $R^{(r)}$ is a purely spatial PDO of order
  $r$. Moreover, the coefficients of $Q^{(r)}$ and $R^{(r)}$ depend on
  spatial derivatives of the metric $g$ of at most order $r$. It is
  important to note that all principal coefficients of $Q^{(r)}$
  depend only on $g_{\mu\nu},a_{\mu},b$ and on derivatives
  $g_{\mu\nu,\rho}$ ($0\leq \rho \leq n$). In particular, they can be
  viewed as lower-order terms of an operator $P_r$, which has the same
  principal symbol as $P$. The energy estimate following from
  \eqref{higher_order_equ} is of the form
  \begin{multline}\label{higher_order_energy_estimate_spatial}
 \sup_{0\leq t\leq T}\Big(\norm{\nabla^r u(t,\cdot)}{H^1(\mathbb R^n)}^2+ \norm{\dt \nabla^r u(t,\cdot)}{L^2(\mathbb R^n)}^2\Big) 
 \leq  
  C_{\text{min}}^{-1} \Bigg( C_{\text{max}}
 \Big(\norm{\nabla^r u_0}{H^1(\mathbb R^n)}^2+ \norm{\nabla^r u_1}{L^2(\mathbb R^n)}^2 \Big)\\
 + C_{g_{00}'}\int\limits_0^T\norm{Pu(s,\cdot)}{H^r(\R^n)}^2 ds+C_{g^{(r)},a^{(r)},b^{(r)}}\int\limits_0^T \Big(
 \norm{u(s,\cdot)}{H^{r-1}(\R^n)}^2+\norm{\partial_s u(s,\cdot)}{H^{r-1}(\R^n)} 
 \Big)  ds \Bigg) e^{\beta_r T}
 \end{multline}
 where $\beta_r$ depends only on $g_{\mu\nu},a_{\mu},b$ and on
 derivatives $g_{\mu\nu,\rho}$. 

     Summing up, we obtain energy estimates for $ \nabla^r
 u$ and $\dt \nabla^r u$ for $r\in \mathbb N_0$. Equation
 \eqref{wave_eq} itself then immediately provides an estimate for
 $\dt^2 u$ as well. Similarly, equation \eqref{divided_diff} allows us
 to estimate $\dt^2 \partial_{x_k} u$ in terms of (mixed) derivatives
 of order $\leq$ 1 in the time variable. More generally, equation
 \eqref{higher_order_equ} directly provides estimates for $\dt^2
 \nabla^r u$ ($r\in \mathbb N$) in terms of (mixed) derivatives of
 order $\leq$ 1 in the time variable. To get estimates for mixed
 derivatives $\dt^m \nabla^r u$ of order $m \geq 3$ in the time
 variable, we may apply the operator $\dt^l$ to equation
 \eqref{higher_order_equ}, yielding
 \begin{equation}\label{mixed_derivatives_estimate}
  P(\nabla^{r}\dt^l u)=Q^{(r)}(\Sigma^r \dt^l u)+R^{(r)}(\Sigma^r\dt^l Pu)+ \widetilde{Q}^{(r,l)}(\Sigma^r  u)+\widetilde{R}^{(r,l)}(\Sigma^r Pu)
\end{equation}
where $\widetilde{Q}^{(r,l)}$ is a PDO of order $r+l$ with time
derivatives only up to order $l-1$ and $\widetilde{R}^{(r,l)}$ is a
PDO of order $r+l-1$ with time derivatives up to order $l-1$. The
first term in $P(\nabla^{r}\dt^l u)$ is $g_{00}\nabla^{r}\dt^{l+2}u$.
Thus, starting with $l=1$, we immediately obtain energy estimates for
all mixed derivatives of the form $\nabla^{r}\dt^3 u$. In the next
step, we can show energy estimates for $\nabla^{r}\dt^4u$, and proceed
iteratively to get energy estimates for all mixed derivatives
$\nabla^{r}\dt^m u$ as well. These estimates are direct consequences
of \eqref{mixed_derivatives_estimate}, once estimates for the terms
with time derivatives of lower order have been established (there is
no need for a Gronwall argument). However, a perhaps more elegant
viewpoint is that for any $r,l\in \mathbb N$, equation
\eqref{mixed_derivatives_estimate} also implies $H^1$-estimates for
$\nabla^{r}\dt^l u$ in terms of initial data and source terms via the
energy estimate \eqref{energy_estimate_1} for the operator $P$.
Again, it is important to note that the principal coefficients of
$\widetilde{Q}^{(r,l)}$ and $\widetilde{R}^{(r,l)}$ only depend on
$g_{\mu\nu},a_{j},b$ and on first derivatives
$g_{\mu\nu,\rho}$. Higher-order derivatives of the coefficients will
only enter in the lower order terms and thus do not show up in the
exponent after applying Gronwall's inequality. In the following lemma we refer to Colombeau theoretic notions which are summarised in Appendix A.

\begin{Lemma}\label{gensol}

  We consider the initial value problem \eqref{cauchy} with
  generalised functions as coefficients and data and assume that
\begin{itemize}
\item[(i)] the components of  $g$ as well as the lower-order coefficients $a_{\mu}$ and $b$ belong to $\mathcal G_{L^{\infty}}(\R^{n+1})$,
\item[(ii)] there exist  constants $\tau_{\text{min}}>0$ and  $\lambda_{\text{min}}>0$ such that  $ g_{00}^{\e}(t,x)\geq \tau_{\text{min}}(\log(1/\e))^{-1}$  and $\sum_{i,j}h_{ij}^{\e}(t,x)\overline{\xi_i}\xi_j\geq \lambda_{\text{min}}(\log(1/\e))^{-1}|\xi|^2$ for all $(t,x)\in \mathbb R^{n+1},\; \xi\in \mathbb C^{n}$  and for all $\e<\e_0$,  
\item[(iii)] all coefficients $g_{\mu\nu}$, $a_{\mu},b$ as well as all derivatives $g_{\mu\nu,\rho}$  are of $L^{\infty}$-log-type.
\end{itemize}
\noindent
Then, given  initial data $u_0,u_1\in \mathcal G_{L^{2}}(\R^{n}) $ and right-hand side $f\in \mathcal G_{L^{2}}(\Omega_T)$, the Cauchy problem  \eqref{cauchy} has a unique solution $u\in \mathcal G_{L^{2}}(\Omega_T)$.

\end{Lemma}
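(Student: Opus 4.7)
The proof works by a Colombeau representative argument. Pick smooth nets of representatives $(g_{\mu\nu}^\e)_\e$, $(a_\mu^\e)_\e$, $(b^\e)_\e$, $(u_0^\e)_\e$, $(u_1^\e)_\e$, $(f^\e)_\e$ of all the generalised coefficients, initial data and source. By assumption (i) each $\e$-slice of the coefficients is smooth and bounded in $L^\infty$, and by assumption (ii) the resulting smooth operator $P_\e$ is strictly hyperbolic on $\Omega_T$ for every $\e>0$. Hence the classical existence result quoted before Lemma \ref{energy_estimates} (Theorem 23.2.2 of Hörmander or Theorem 2.6 of Bär--Ginoux) produces, for each $\e$, a unique smooth solution $u_\e$ of the Cauchy problem \eqref{cauchy} that lies in $C([0,T],H^{s+1})\cap C^1([0,T],H^s)\cap H^{s+1}(\Omega_T)$ for every $s\in\R$. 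The remaining work is to show that $(u_\e)_\e$ is moderate and that the resulting class in $\mathcal G_{L^2}(\Omega_T)$ is independent of the chosen representatives.

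For moderateness, the plan is to apply the higher-order estimates \eqref{higher_order_energy_estimate_spatial} and \eqref{mixed_derivatives_estimate} to each $u_\e$, inductively in $r$ and in the number of time derivatives $l$, with the operator $P_\e$ in place of $P$. The key structural point, already emphasised by the authors right after \eqref{mixed_derivatives_estimate}, is that the Gronwall exponent $\beta_r^\e$ depends only on $(C_{\min}^\e)^{-1}$ and on $L^\infty$-norms of the coefficients $g_{\mu\nu}^\e, a_\mu^\e, b^\e$ together with the first derivatives $g_{\mu\nu,\rho}^\e$; higher-order coefficient derivatives enter only through the terms $R^{(r)}(\Sigma^r P_\e u_\e)$ and $\widetilde Q^{(r,l)}, \widetilde R^{(r,l)}$ of \eqref{higher_order_equ} and \eqref{mixed_derivatives_estimate}, and thus appear only as lower-order source contributions on the right-hand side. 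Assumption (iii) bounds the $L^\infty$-norms entering $\beta_r^\e$ by $O(\log(1/\e))$, and assumption (ii) gives $(C_{\min}^\e)^{-1}=O(\log(1/\e))$, while $u_0^\e$, $u_1^\e$, $f^\e$ and the higher-order coefficient derivatives are $L^2$-moderate. Combining these with the inductive hypothesis and the energy estimates controls every $\|u_\e\|_{H^r}$ by a polynomial in $\e^{-1}$, which is precisely moderateness in $\mathcal E_{M,L^2}(\Omega_T)$.

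For uniqueness, let $(\tilde g_{\mu\nu}^\e),(\tilde a_\mu^\e),(\tilde b^\e)$ and $(\tilde u_0^\e),(\tilde u_1^\e),(\tilde f^\e)$ be a second choice of representatives of the same generalised data, and $\tilde u_\e$ the corresponding smooth solution. Set $w_\e:=u_\e-\tilde u_\e$. A direct calculation shows that $P_\e w_\e$ equals $f^\e-\tilde f^\e$ plus terms of the schematic form $(g^\e_{\mu\nu}-\tilde g^\e_{\mu\nu})\partial^2\tilde u_\e+(a_\mu^\e-\tilde a_\mu^\e)\partial\tilde u_\e+(b^\e-\tilde b^\e)\tilde u_\e$, each a product of a negligible coefficient difference with a moderate derivative of $\tilde u_\e$, hence negligible in $L^2([0,T],H^1)$. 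The initial data of $w_\e$ are likewise negligible in $H^2$ and $H^1$. Applying the basic energy estimate of Lemma \ref{energy_estimates} together with its higher-order analogues to $w_\e$ (with the coefficients of $P_\e$, which still satisfy (ii) and (iii)) yields negligibility of $w_\e$ in $\mathcal G_{L^2}$, so the solution class is well defined and unique.

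The principal obstacle is the careful accounting of log-type factors in the Gronwall exponent: the assumptions (ii) and (iii) are calibrated precisely so that, together with the structural observation that only first-order coefficient derivatives enter $\beta_r^\e$ while higher derivatives appear only through source terms, the resulting factor $e^{\beta_r^\e T}$ combines with moderate data to stay within the $\mathcal E_{M,L^2}$ growth class. Any weakening of either the ellipticity lower bound or the log-type assumption on the first derivatives of $g$ would upset this balance and break the argument.
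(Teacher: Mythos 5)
Your proposal is correct and follows essentially the same route as the paper: classical smooth solutions for each $\e$, moderateness via the higher-order energy estimates \eqref{higher_order_energy_estimate_spatial} and \eqref{mixed_derivatives_estimate} with the same structural observation about which coefficients enter the Gronwall exponent, and negligibility-preservation for uniqueness. One small difference in framing: the paper fixes a single (symmetric) representative of the coefficients once and for all, and then shows only that negligible perturbations of the \emph{data} lead to negligible changes in the solution; you additionally perturb the representatives of the \emph{coefficients}, estimating the cross term $(P_\e - \tilde P_\e)\tilde u_\e$ as a product of a $W^{\infty,\infty}$-negligible coefficient difference with the (already-established) moderate derivatives of $\tilde u_\e$. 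This extra step, while not carried out in the paper's own proof, is harmless and in fact gives a slightly stronger well-definedness statement at no extra cost, since the same energy estimate drives both versions of the negligibility argument.
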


\begin{proof}
  We fix a symmetric representative of $g$ and representatives of all
  lower-order coefficients, initial data, and right-hand side.  As
  noted above, we have smooth solutions $u_{\e}\in
  C^{\infty}([0,T],H^{\infty}(\mathbb R^n))$ to the corresponding
  classical initial value problems for each $\e< \e_0$.  To show
  moderateness, we apply Lemma \ref{energy_estimates} and obtain
$$
||u_{\eps}||_{C^0([0,T],H^{2}(\mathbb R^n))}+||u_{\eps}||_{C^1([0,T], H^{1}(\mathbb R^n))} 
\le e^{\beta_{\eps} T} \Big( \norm{ {u_0}_{\eps}}{H^1(\mathbb R^n)}+ \norm{ {u_1}_{\eps}}{L^2(\mathbb R^n)}+  ||f_{\eps}||_{L^{2}([0,T],H^{1}(\mathbb R^n))}\Big).  
$$
where $(e^{\beta^{\e}})$ is moderate thanks to the log-type condition
on the coefficients and their first derivatives as well as the
positivity condition (ii). Thus the estimate shows that
$(\norm{u_{\e}}{H^1(\Omega_T)})$ and $(\norm{\nabla
  u_{\e}}{H^1(\Omega_T)})$ is moderate.  For $r\geq 2$, the
higher-order estimates \eqref{higher_order_energy_estimate_spatial}
for the spatial derivative vector $\nabla^ru_{\e}$ then imply that
$(\norm{\nabla^r u_{\e}}{H^1(\Omega_T)})$ is also moderate (for any
$r\in \mathbb N$), since all principal coefficients of the operators
on the right-hand-side of equation \eqref{higher_order_equ} only
depend on $(g_{\mu\nu}^{\e}),(a_{\mu}^{\e}),(b^{\e})$ and on first
derivatives $(g^{\e}_{\mu\nu,\rho})$, all of which are of
$L^{\infty}-$log-type.

An iterative application of the higher-order estimates for spatial
derivatives \eqref{higher_order_energy_estimate_spatial},
  \begin{multline*}
 \sup_{0\leq t\leq T}\Big(\norm{\nabla^r u_{\e}(t,\cdot)}{H^1(\mathbb R^n)}^2+ \norm{\dt \nabla^r u_{\e}(t,\cdot)}{L^2(\mathbb R^n)}^2\Big) 
 \leq  
   \Bigg( C^{\e}_{\text{max}}
 \Big(\norm{\nabla^r {u_0}_{\e}}{H^1(\mathbb R^n)}^2+ \norm{\nabla^r {u_1}_{\e}}{L^2(\mathbb R^n)}^2 \Big)\\
 + C_{g^{\e '}_{00}}\int\limits_0^T\norm{f_{\e}(s,\cdot)}{H^r(\R^n)}^2 ds+C_{g^{(r)}_{\e},a^{(r)}_{\e},b^{(r)}_{\e}}\int\limits_0^T \Big(
 \norm{u_{\e}(s,\cdot)}{H^{r-1}(\R^n)}^2+\norm{\partial_s u_{\e}(s,\cdot)}{H^{r-1}(\R^n)} 
 \Big)  ds \Bigg) \frac{e^{\beta_r^{\e} T}}{C_{\text{min}}^{\e}},   
 \end{multline*}
 starting with $r\geq 2$ then establishes moderateness of
 $\norm{\nabla^r u_{\e}}{H^1(\Omega_T)}$ for all $r\in \mathbb N_0$,
 since $1/C_{\text{min}}^{\e}$ is of logarithmic growth in $\e$
 (and thus moderate), $ C^{\e}_{\text{max}}$, $C_{g'}^{\e}$,
 $C_{g^{(r)}_{\e},a^{(r)}_{\e},b^{(r)}_{\e}}$ are of moderate growth
 and $\beta_r^{\e}$ is of $L^{\infty}$-log-type (where
 $\beta_r^{\e}$ depends on $C_{\text{min}}^{\e}$ and (first
 derivatives of) $g_{\e},a_{\e}$, and $b_{\e}$).  Finally, the
 corresponding energy estimates for mixed derivatives following from
 equation \eqref{mixed_derivatives_estimate} yield moderateness of
 $\norm{\nabla^r \dt^l u_{\e}}{H^1(\Omega_T)}$ as well. Note that only the principal coefficients will be exponentiated in the Gronwall inequality. Thus the 
 important observation in all these higher-order estimates is that the
 principal coefficients on both sides of equation
 \eqref{mixed_derivatives_estimate} only depend on
 $g_{\mu\nu}^{\e},a_{\mu}^{\e},b^{\e}$ and on derivatives
 $g_{\mu\nu,\rho}^{\e}$, all of which are of
 $L^{\infty}$-log-type.

 In total this implies that for all $r,l\in\mathbb N_0$ there exists
 $m\in\mathbb N_0$ such that $\norm{\nabla^r
   \dt^lu_{\e}}{L^2(\Omega_T)}^2=O(\e^{-m})$ as $\e\to 0$ and thus
 $[(u_{\e})_{\e>0}]\in \mathcal G_{L^2}(\Omega_T)$. To show uniqueness
 of the generalised solution in $\mathcal G_{L^2}(\Omega_T)$, we
 assume negligible initial data $(\widetilde {u_0}_{\e}),(\widetilde
 {u_1}_{\e})\in \mathcal N_{L^2}(\mathbb R^n)$ and right-hand side
 $(\widetilde f_{\e})\in \mathcal N_{L^2}(\Omega_T)$.  Then the same
 energy estimates we used for moderateness yield negligibility of
 the solution $[(\widetilde u_{\e})_{\e>0}]$.
 \end{proof}

In the next lemma we want to identify conditions on the coefficients
and data of a low-regularity Cauchy problem \eqref{cauchy} such that
the corresponding generalised Cauchy problem, obtained via
regularisation, has a unique solution $u\in \mathcal
G_{L^2}(\Omega_T)$ and, moreover, this solution admits a
distributional shadow.

\begin{Lemma} \label{weaksol}
		
  We consider the Cauchy problem \eqref{cauchy} where all coefficients
  belong to $W^{1,\infty}(\R^{n+1})$ and the bounds \eqref{taumin} and
  \eqref{lambdamin} are satisfied.  Then, given initial data
  $(u_0,u_1)\in H^2(\R^n)\times H^{1}(\R^{n}) $ and right-hand side
  $f\in L^2([0,T],H^1(\R^n))$, we consider the corresponding
  generalised Cauchy problem obtained via convolution regularisation
  of all coefficients and data. Then the unique generalised according
  to Lemma \ref{gensol} has a distributional shadow $\widetilde u\in
  C^0([0,T],H^2(\R^n)) \cap C^1([0,T],H^1(\R^n)) \cap H^2(\Omega_T)$
  which is also the unique weak solution.


\end{Lemma}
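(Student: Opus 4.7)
The plan is to regularize all coefficients and data by convolution with a standard mollifier, feed the result into Lemma \ref{gensol} to obtain a unique generalized solution $[(u_\varepsilon)]\in\mathcal G_{L^2}(\Omega_T)$, and then exploit the fact that under the $W^{1,\infty}$ hypothesis the constant $\beta_\varepsilon$ appearing in the energy estimate of Lemma \ref{energy_estimates} stays \emph{uniformly} bounded as $\varepsilon\to 0$ (rather than merely of $\log$-type). This will let us extract a distributional limit $\tilde u$ by weak-$*$ compactness, identify it as the weak solution, and verify that it lives in the claimed function spaces.

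First I would check that the regularizations satisfy hypotheses (i)--(iii) of Lemma \ref{gensol}. Since $g,a,b\in W^{1,\infty}(\R^{n+1})$, the Sobolev embedding $W^{1,\infty}\hookrightarrow C^{0,1}$ makes $g$ continuous, so standard mollifier estimates give $g_\varepsilon^{\mu\nu}\to g^{\mu\nu}$ uniformly on compact sets with the first derivatives $g_{\mu\nu,\rho}^\varepsilon$ uniformly bounded in $L^\infty$. Thus \eqref{taumin} and \eqref{lambdamin} persist for $g_\varepsilon$ uniformly in $\varepsilon$ -- in particular the $(\log(1/\varepsilon))^{-1}$-type bound in (ii) holds with room to spare, while (iii) holds trivially because the relevant norms are uniformly bounded. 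Moreover $u_{0,\varepsilon}\to u_0$ in $H^2(\R^n)$, $u_{1,\varepsilon}\to u_1$ in $H^1(\R^n)$, and $f_\varepsilon\to f$ in $L^2([0,T],H^1(\R^n))$. Lemma \ref{gensol} then yields the unique generalized solution $[(u_\varepsilon)]$.

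Applying Lemma \ref{energy_estimates} to each $u_\varepsilon$ gives
\begin{equation*}
\|u_\varepsilon\|_{C([0,T],H^2(\R^n))}+\|u_\varepsilon\|_{C^1([0,T],H^1(\R^n))}\le e^{\beta_\varepsilon T}\bigl(\|u_{0,\varepsilon}\|_{H^2}+\|u_{1,\varepsilon}\|_{H^1}+\|f_\varepsilon\|_{L^2([0,T],H^1)}\bigr),
\end{equation*}
where $\beta_\varepsilon$ depends only on $L^\infty$-norms of at most first-order derivatives of the mollified coefficients and therefore stays bounded as $\varepsilon\to 0$; the right-hand side is uniformly bounded by the previous paragraph. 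Solving the PDE algebraically for $\partial_t^2 u_\varepsilon$ yields in addition a uniform $H^2(\Omega_T)$ bound. Weak-$*$ compactness in $L^\infty([0,T],H^2)\cap W^{1,\infty}([0,T],H^1)\cap H^2(\Omega_T)$ then produces a subsequence converging to some $\tilde u$ in that intersection, and an Aubin--Lions type argument upgrades this to strong convergence in $C([0,T],L^2_{\text{loc}}(\R^n))$, which is enough to transfer the initial conditions to $\tilde u$.

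To identify $\tilde u$ as a weak solution of \eqref{cauchy} I would pass to the distributional limit in $P_\varepsilon u_\varepsilon = f_\varepsilon$. Uniqueness in the claimed regularity class would then follow from the energy estimate of Lemma \ref{energy_estimates}, whose derivation only uses $L^\infty$-bounds on first derivatives of the coefficients and therefore carries over to $W^{1,\infty}$ coefficients (with a brief approximation argument to justify the integration-by-parts steps); this in turn forces the full family $(u_\varepsilon)$, not just a subsequence, to converge to $\tilde u$ in $\mathcal D'(\Omega_T)$, so $\tilde u$ is the distributional shadow of $[(u_\varepsilon)]$. The hard part is the passage to the distributional limit in the principal part $g_\varepsilon^{\mu\nu}\partial_\mu\partial_\nu u_\varepsilon$, where neither factor is strongly $L^2$-convergent; this is exactly where the Sobolev embedding $W^{1,\infty}\hookrightarrow C^0$ pays off, since it delivers strong $L^\infty_{\text{loc}}$ convergence of the top-order coefficients and thereby turns the weak-weak pairing into a genuine strong-weak pairing, closing the argument.
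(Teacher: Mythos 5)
Your overall strategy is sound and arrives at the right place, but it takes a structurally different route to the distributional shadow than the paper does, and the step where you lean on ``uniqueness in the claimed regularity class'' is the place where the two arguments diverge most. The paper does \emph{not} first extract a weak-$*$ limit and then argue the whole net converges by uniqueness; instead it proves directly that $(u_\varepsilon)$ is a \emph{Cauchy net} in $C([0,T],H^1(\R^n))\cap C^1([0,T],L^2(\R^n))$. The mechanism is to apply the energy estimate from Lemma \ref{energy_estimates}, with spatial Sobolev order reduced by one, to the difference $u_\varepsilon - u_{\tilde\varepsilon}$, using
$P_\varepsilon(u_\varepsilon - u_{\tilde\varepsilon}) = (f_\varepsilon - f_{\tilde\varepsilon}) - (P_\varepsilon - P_{\tilde\varepsilon})u_{\tilde\varepsilon}$
together with the uniform $H^2(\Omega_T)$ bound on $u_{\tilde\varepsilon}$ and $L^\infty$-convergence of the $W^{1,\infty}$ coefficients to control the commutator term. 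This is constructive: it yields convergence of the full net at once, with no diagonal/subsequence bookkeeping and no Aubin--Lions step, and weak-$*$ compactness is invoked only afterwards to upgrade the regularity of the already identified limit to $C^0([0,T],H^2)\cap C^1([0,T],H^1)\cap H^2(\Omega_T)$. Your compactness-first route is a legitimate alternative (the textbook vanishing-viscosity pattern), but it makes the proof logically dependent on uniqueness being established \emph{before} you can conclude the shadow exists, and you only gesture at that. On the uniqueness step itself, a concrete caution: the paper does not apply the energy estimate directly to the non-smooth problem (your ``brief approximation argument to justify integration by parts''); instead it regularises the competing weak solution $\widetilde w$, shows $P_\varepsilon w_\varepsilon \to f$ in $L^2(\Omega_T)$, and then applies the smooth-coefficient estimate to $u_\varepsilon - w_\varepsilon$. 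That construction is what actually makes the ``approximation argument'' precise, and since your scheme relies on uniqueness to pass from a subsequence to the full family, it would need to be spelled out rather than waved at. Your treatment of the principal part (strong $L^\infty$ convergence of $g_\varepsilon^{\mu\nu}$ paired against weak $L^2$ convergence of $\partial_\mu\partial_\nu u_\varepsilon$) is exactly what the paper does and is correct.
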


\begin{proof}
  We note that the regularised coefficients and data satisfy the
  assumptions in Lemma \ref{gensol} and we obtain a unique generalised
  solution. We aim at showing that any representative $(u_{\e})$ of
  the solution is a Cauchy net. To this end we apply the variant of
  the energy estimate in Lemma \ref{energy_estimates}, implicit in the
  proof (see also \eqref{energy_estimate_1}), with order of spatial
  Sobolev norms reduced by one (except for the initial data): Applying
  the operator $P_{\e}$ to the difference $u_{\e}-u_{\tilde \e}$, we
  obtain
\begin{multline}\label{Cauchy_estimate}
||u_{\e}-u_{\tilde \e}||_{C^0([0,T],H^{1}(\mathbb R^n))}+||u_{\e}-u_{\tilde \e}||_{C^1([0,T], L^{2}(\mathbb R^n))} 
\\
\le e^{\beta_{\e} T} \Big( \norm{{u_0}_{\e}-{u_0}_{\tilde\eps}}{H^1(\mathbb R^n)}+ \norm{{u_0}_{\e}-{u_0}_{\tilde \e}}{L^2(\mathbb R^n)}+  ||P(u_{\e}-u_{\tilde \e})||_{L^{2}(\Omega_T)}\Big),  
\end{multline}
 where  $\beta^{\e}$ is bounded uniformly in $\e$ thanks to the hypotheses on the non-smooth coefficients. We may write 
 \begin{eqnarray*}
 P_{\e}(u_{\e}-u_{\tilde\e}) &=& f_{\e}- P_{\e}u_{\tilde\e} = f_{\e}-f_{\tilde\e}- (P_{\e}-P_{\tilde\e})u_{\tilde\e}.
 \end{eqnarray*}
 Considering the regularity of the initial data and right-hand side,
 it is easy to see that $(u_{\e})$ is a Cauchy net, if for all
 $\eta>0$ there exists $\e_0$ such that
 $\norm{(P_{\e}-P_{\tilde\e})u_{\tilde\e}}{L^2(\Omega_T)}<\eta$ for
 all $\tilde\e<\e\leq \e_0$. We have
 \begin{multline*}
 \norm{(P_{\e}-P_{\tilde\e})u_{\tilde\e}}{L^2(\Omega_T)}
		\leq \\
		\norm{g^{\e}_{00}-g^{\tilde\e}_{00}}{L^{\infty}(\Omega_T)}
		\norm{\partial_t^2 u_{\tilde\e}}{L^2(\Omega_T)}
		+2\sum_{j=1}^n \norm{g^{\e}_{0j}-g^{\tilde \e}_{0j}}{L^{\infty}(\Omega_T)}\norm{\partial_{x_j}\partial_t u_{\tilde\e}}{L^2(\Omega_T)}
		\\+\sum_{i,j=1}^n \norm{h^{\e}_{ij}-h^{\tilde\e}_{ij}}{L^{\infty}(\Omega_T)}\norm{\partial_{x_i}\partial_{x_j} u_{\tilde\e}}{L^2(\Omega_T)} 
		+\sum_{j=1}^n \norm{a^{\e}_j-a_j^{\tilde\e}}{L^{\infty}(\Omega_T)}\norm{\partial_{x_j} u_{\tilde\e}}{L^2(\Omega_T)}\\
		 + \norm{a_0^{\e}-a_0^{\tilde\e}}{L^{\infty}(\Omega_T)}\norm{\partial_t u_{\tilde\e}}{L^2(\Omega_T)}
		+\norm{b^{\e}-b^{\tilde\e}}{L^{\infty}(\Omega_T)}
		\norm{u_{\tilde\e}}{L^2(\Omega_T)} 
		\leq\\
		 \sum_{\mu,\nu=0}^n \norm{g^{\e}_{\mu\nu}-g^{\tilde\e}_{\mu\nu}}{L^{\infty}(\Omega_T)}\norm{u_{\tilde \e}}{H^2(\Omega_T)}
		+ \sum_{\mu=0}^n 
		\norm{a^{\e}_{\mu}-a_{\mu}^{\tilde\e}}{L^{\infty}(\Omega_T)} \norm{u_{\tilde \e}}{H^1(\Omega_T)}\\
		 + \norm{b^{\e}-b^{\tilde\e}}{L^{\infty}(\Omega_T)} \norm{u_{\tilde \e}}{L^2(\Omega_T)}.
\end{multline*}
Since all coefficients belong to $W^{1,\infty}$ and thus converge in
the $L^{\infty}$-norm, it suffices to show that $\norm{u_{\tilde
    \e}}{H^2(\Omega_T)}$ is bounded uniformly in $\e$. First, observe
that the energy estimate \eqref{energy_estimate_1} implies that
$\norm{u_{\tilde \e}}{H^1(\Omega_T)}=O(1)$ as $\e\to 0$. The energy
estimate for $\nabla u$, inequality
\eqref{energy_estimate_spatial_derivatives},
\begin{multline*}
 \sup_{0\leq t\leq T}\Big(\norm{\nabla u_{\tilde \e}(t,\cdot)}{H^1(\mathbb R^n)}^2+ \norm{\dt \nabla u_{\tilde \e}(t,\cdot)}{L^2(\mathbb R^n)}^2\Big) 
 \leq  
  C_{\text{min}}^{-1} \Bigg( C_{\text{max}}
 \Big(\norm{\nabla u_{\tilde \e}^0}{H^1(\mathbb R^n)}^2+ \norm{\nabla u_{\tilde \e}^1}{L^2(\mathbb R^n)}^2 \Big)\\
 + C_{g_{00}'}\int\limits_0^T\norm{f_{\tilde \e}(s,\cdot)}{H^1(\R^n)}^2 ds+C_{g_{00}',a_0',b'}\int\limits_0^T\Big(\norm{ u_{\tilde \e}(s,\cdot)}{L^2(\R^n)}^2 + \norm{ \partial_{s}u_{\tilde \e}(s,\cdot)}{L^2(\R^n)}^2  \Big) ds \Bigg) e^{\widetilde\beta_1^{\tilde \e} T}
 \end{multline*}
 then implies that $\norm{u_{\tilde \e}}{H^2(\Omega_T)}=O(1)$ as
 $\e\to 0$ as well. Going back to \eqref{Cauchy_estimate}, we thus
 have shown that $(u_{\e})$ is indeed a Cauchy net in the norm
 $L^{\infty}([0,T],H^1(\R^n))$ and $(\dt u_{\e})$ is a Cauchy net in
 the norm $L^{\infty}([0,T],L^2(\R^n))$. Hence the unique generalised
 solution $u\in \mathcal G_{L^2}(\Omega_T)$ has a distributional limit
 $\widetilde u\in C ([0,T],H^1(\R^n))\cap
 C^1([0,T],L^2(\R^n))$. However, we can deduce even better regularity
 from the properties of the net $(u_{\e})$. For any $t\in [0,T]$ and
 any $\varphi\in\mathcal D(\R^n)$, we have
$$ |\langle \partial_{x_j}\partial_{x_k} \widetilde u(t,\cdot),\varphi\rangle |=
|\lim_{\e\rightarrow 0}\langle \partial_{x_j}\partial_{x_k}
u_{\e}(t,\cdot),\varphi\rangle|\leq \lim_{\e\rightarrow
  0}|\langle \partial_{x_j}\partial_{x_k}
u_{\e}(t,\cdot),\varphi\rangle|\leq C \norm{\varphi}{L^2(\mathbb
  R^n)}$$ 
since $\norm{u_{\e}(t,\cdot)}{H^2(\R^n)}=O(1)$ as $\e\to 0$
and we already have uniform convergence of $(u_{\e}(t,\cdot))$ in
$H^1(\R^n)$ by the Cauchy net estimate and hence as a distribution.
It follows that $\partial_{x_j}\partial_{x_k}u_0(t,\cdot )\in
L^2(\R^n)$ and therefore $u_0\in C^0([0,T],H^2(\R^n))$. Moreover we
have
$$ |\langle \partial_{x_j}\dt \widetilde u(t,\cdot),\varphi\rangle |=
|\lim_{\e\rightarrow 0}\langle \partial_{x_j}\dt u_{\e}(t,\cdot),\varphi\rangle|\leq
 \lim_{\e\rightarrow 0}|\langle \partial_{x_j}\dt u_{\e}(t,\cdot),\varphi\rangle|\leq
  \widetilde C \norm{\varphi}{L^2(\mathbb R^n)}$$
and thus $\widetilde u\in C^1([0,T],H^1(\R^n))$. Similarly we can
  show that $\partial_{x_{\mu}}\partial_{x_k} \widetilde u \in
  L^2(\Omega_T)$ for $0\leq\mu\leq n$ and $1\leq k\leq n$ In addition,
  multiplying the equation $P_{\e}u_{\e}=f_{\e}$ by
  $(g_{00}^{\e})^{-1}$, it is easy to see that $\norm{\dt^2
    u_{\e}}{L^2(\Omega_T)}=O(1)$ as $\e\to 0$, implying that $\dt^2
  \widetilde u\in L^2(\Omega_T)$ as well. Summing up, we have obtained
  a distributional shadow $u_0$ of the generalised solution with
  \begin{equation*}
  \widetilde u\in  C^0([0,T],H^2(\R^n)) \cap  C^1([0,T],H^1(\R^n)) \cap H^2(\Omega_T).
  \end{equation*}

  In the last part of the proof we show that the distributional shadow
  $\widetilde u$ of the generalised solution is the unique weak
  solution to the Cauchy problem. The proof follows the line of
  arguments in the proof of \cite[Corollary 4.6]{HS:12}.
  First note that both $\widetilde u$ and $\dt \widetilde u$ are
  continuous and thus, by construction of $\widetilde u$, the initial
  conditions are satisfied. The Cauchy net estimate
  \eqref{Cauchy_estimate} implies that $u_{\e}\to\widetilde u$ as
  $\e\to 0$ in the norm $H^1(\Omega_T)$. Our aim is to prove that
  $P\,\widetilde u = f$ in a suitable weak sense. Since the
  coefficients belong to $ W^{1,\infty}(\R^{n+1})$, we immediately get
  $L^2$-convergence of all first-order terms and $H^1$-convergence of
  all zero-order terms:
	$$
	 \sum_{\mu=0}^n a^{\e}_{\mu}\partial_{x_{\mu}} u_{\e} \overset{\e\to 0}{\longrightarrow}
	 \sum_{\mu=0}^n a_{\mu}\partial_{x_{\mu}} \widetilde u \quad \text{in}\:\, L^2(\Omega_T)\qquad \text{and}\quad b^{\e}u_{\e} \overset{\e\to 0}{\longrightarrow} b\,\widetilde u \quad \text{in}\:\,   H^1(\Omega_T)
$$
We claim that 
\begin{multline*}
P^{(2)}_{\e}u_{\e}:=g_{00}^{\e}\partial_t^2 u_{\e}+2\sum_{j=1}^n g_{0j}^{\e}\partial_{x_j}\partial_t u_{\e}-\sum_{i,j=1}^n h^{\e}_{ij}\partial_{x_i}\partial_{x_j} u_{\e}\\
\overset{\e\to 0}{\longrightarrow} g_{00}\partial_t^2 \widetilde u+2\sum_{j=1}^n g_{0j}\partial_{x_j}\partial_t \widetilde u-\sum_{i,j=1}^n h_{ij}\partial_{x_i}\partial_{x_j} \widetilde u=:  P^{(2)}\widetilde u
\end{multline*}
weakly* in $L^2(\Omega_T)$. Since there exists $C>0$ such that
$\norm{u_{\e}}{H^2(\Omega_T)}\leq C$ for all $\e<\e_0$, we know that
the set $\{ u_{1/n}|\,n\in\mathbb N \}$ is bounded in
$H^2(\Omega_T)$. By the weak compactness theorem, this implies that
there exists a weakly$\ast$ convergent subsequence
$(u_{1/n_k})_{k\in\mathbb N}$ with limit $\widetilde v\in
H^2(\Omega_T)$ (cf.\ \cite[Theorem 6.64]{RR:03}) and indeed
$\widetilde v = \widetilde u$ since we already know from the Cauchy
net estimate that $u_{1/n_k}\to \widetilde u$ in $H^1(\Omega_T)$ as
$k\to \infty$. Therefore $\partial_{\mu}\partial_{\nu}u_{1/n_k}$
converges weakly* to $\partial_{\mu}\partial_{\nu}\widetilde u$ in
$L^2(\Omega_T)$ and we have for any $\varphi \in L^2(\Omega_T)$:
 \begin{eqnarray*}
|\langle P^{(2)}_{1/n_k}u_{1/n_k},\varphi\rangle- \langle  P^{(2)}\widetilde u,\varphi\rangle| &=&
|\langle (P^{(2)}_{1/n_k}- \widetilde P^{(2)})u_{1/n_k},\varphi\rangle- \langle  P^{(2)}(\widetilde u-   u_{1/n_k}),\varphi\rangle|\\
&\leq & |\langle (P^{(2)}_{1/n_k}-  P^{(2)}) u_{1/n_k},\varphi\rangle| + |\langle  P^{(2)}(\widetilde u-   u_{1/n_k}),\varphi\rangle| \to 0
\end{eqnarray*}
as $k \to \infty$. To see this, observe that the first term goes to
zero because the coefficients of $P_{1/n_k}$ converge to those of
$\widetilde P$ in $L^{\infty}(\Omega_T)$ as $k\to \infty$ and
$(u_{1/n_k})_{k\in \mathbb N}$ is bounded in $H^2(\Omega_T)$; the
second term vanishes as well in the limit $k\to \infty$ since
$u_{1/n_k}$ converges to $\widetilde u$ in $H^2(\Omega_T)$ as $k\to
\infty$. We provide the explicit calculation for the $g_{00}$-term
(the others can be treated similarly):
  \begin{multline*}
|\langle g_{00}^{1/n_k}\dt^2 u_{1/n_k},\varphi\rangle- \langle  g_{00}\,\dt^2\widetilde u,\varphi\rangle|=
|\langle (g_{00}^{1/n_k}- g_{00})\dt^2 u_{1/n_k},\varphi\rangle- \langle g_{00}(\dt^2\widetilde u-\dt^2u_{1/n_k}),\varphi\rangle|\\
\leq  \norm{g_{00}^{1/n_k}- g_{00}}{L^{\infty}(\Omega_T)}\norm{u_{1/n_k}}{L^2(\Omega_T)} \norm{\varphi}{L^2(\Omega_T)}
\\+ 
\norm{ g_{00}}{L^{\infty}(\Omega_T)}\norm{\dt^2\widetilde u - \dt^2 u_{1/n_k}}{L^2(\Omega_T)}\norm{\varphi}{L^2(\Omega_T)} \to 0 \quad (k\to \infty).
\end{multline*}
This shows that for any $\varphi\in L^2(\Omega_T)$,
$$ \langle  P \, \widetilde u, \varphi\rangle = \lim_{\e\to 0} \langle  P_{\e} \,  u_{\e}, \varphi\rangle  =  \lim_{\e\to 0} \langle f_{\e}, \varphi\rangle  =\langle  f,\varphi \rangle
$$
and thus $\widetilde u$ is indeed a weak solution of the initial value problem. \\ 

To show uniqueness of the weak solution, we suppose that there exists
another solution $\widetilde w \in H^2(\Omega_T)\cap
C([0,T],H^1(\R^n))\cap C^1([0,T],L^2(\R^n))$ such that $P\,\widetilde
w= f$ and $(\widetilde w,\dt \widetilde w)|_{t=0}=(u_0,u_1)$. We may
regularise this solution so that $w_{\e}\rightarrow \widetilde w$ as
$\e\to 0$ in $H^2(\Omega_T))$ and $(w_{\e},\dt w_{\e})|_{t=0}\to
(u_0,u_1)$ in $H^1(\R^n)\times L^2(\R^n)$. The following estimate then
shows that $P_{\e}w_{\e}\to P\,\widetilde w=f$ in $L^2(\Omega_T)$ as
$\e\to 0$:
      \begin{multline*}
 \norm{P_{\e}w_{\e}- P \,\widetilde w}{L^2(\Omega_T)} \\= \norm{P_{\e}w_{\e}-P_{\e}\,\widetilde  w + P_{\e}\,\widetilde  w
  - P\, \widetilde w}{L^2(\Omega_T)} 
  \leq  \norm{P_{\e}(w_{\e}-\widetilde w)}{L^2(\Omega_T)} +
    \norm{(P_{\e}- P)\widetilde w}{L^2(\Omega_T)} 
 \\
  \leq C \norm{w_{\e}-\widetilde w}{H^2(\Omega_T)}+\sum_{\mu,\nu=0}^n\norm{g_{\mu\nu}^{\e}- g_{\mu\nu}}{L^{\infty}(\Omega_T)} \norm{\widetilde w}{H^2(\Omega_T)}
\\  +
  \sum_{\mu=0}^n \norm{a_{\mu}^{\e}- a_{\mu}}{L^{\infty}(\Omega_T)} \norm{\widetilde w}{H^1(\Omega_T)}  + \norm{b^{\e}-b}{L^{\infty}(\Omega_T)}\norm{\widetilde w}{L^2(\Omega_T)} 
  \to 0\quad (\e\to 0).
  \end{multline*}
  Here we have only used the $H^2(\Omega_T)$-convergence of $w_{\e}$
  to $\widetilde w$ as $\e\to 0$.  Denoting by $(u_{\e})_{\e}$ a
  representative of generalised solution and applying the basic energy
  estimate \eqref{energy_estimate_1} to the difference $u_{\e}-w_{\e}$
  then yields
  \begin{multline*}
\frac{1}{C} \sup_{0\leq t \leq T}\Big(\norm{u_{\e}(t,\cdot)-w_{\e}(t,\cdot)}{H^1(\R^n)}^2+ \norm{\dt u_{\e}(t,\cdot) -\dt w_{\e}(t,\cdot)}{L^2(\R^n)}^2 \Big)
 \leq  \\ 
 \norm{(u_{\e}-w_{\e})|_{t=0}}{L^2(\mathbb R^n)}^2+ \norm{(\dt u_{\e}-\dt w_{\e} )|_{t=0}}{L^2(\mathbb R^n)}^2 +\norm{\nabla (u_{\e}-w_{\e})|_{t=0}}{L^2(\mathbb R^n)}^2 + \norm{f_{\e}-P_{\e} \, w_{\e}}{L^2(\Omega_{T})}^2. \end{multline*}
Letting $\e\rightarrow 0$ we obtain
 $\widetilde u=\widetilde w$ in $  C([0,T],H^1(\mathbb R^n))\cap C^1([0,T],L^2(\mathbb R^n))$. 
\end{proof}

\begin{rem}
  The required conditions for the existence of a distributional shadow
  (and weak solution) are weaker than those that would be required in
  a similar result based on transforming the equation \eqref{wave_eq}
  into a first-order system as in \cite{HHSS:13}, since the
  lower-order coefficients of this system would contain derivatives of
  the principal coefficients of equation \eqref{wave_eq} and therefore
  $g_{\mu\nu}\in W^{2,\infty}(\R^{n+1})$ would be necessary (instead
  of $ g_{\mu\nu}\in W^{1,\infty}(\R^{n+1})$).  However, for a wave
  equation derived from the Laplace-Beltrami operator of a Lorentzian
  metric $g$,
 \begin{equation*}
 \Box_gu =\sum_{\mu,\nu=0}^n |\det g|^{-\frac{1}{2}} \partial_{\mu}\big( |\det g|^{\frac{1}{2}} g^{\mu\nu}\partial_{\nu}u\big)=g^{\mu\nu}(\partial_{\mu}\partial_{\nu}u+\Gamma^{\rho}_{\mu\nu}\partial_{\rho}u)=f,
\end{equation*}
where 
$ \Gamma^{\rho}_{\mu\nu}=\frac{1}{2}g^{\rho\sigma}(g_{\sigma\mu,\nu}+g_{\sigma\nu,\mu}-g_{\mu\nu,\sigma})
$, the lower-order coefficients  contain derivatives of the metric and thus the metric  has to be $W^{2,\infty}$ ($C^{1,1}$) anyway in order to obtain a distributional shadow of the generalised solution.

\end{rem}

\begin{Lemma}\label{causalsupport_Rn}
  Consider the Cauchy problem \eqref{cauchy} where all coefficients
  belong to $W^{1,\infty}(\R^{n+1})$ and the bounds \eqref{taumin} and
  \eqref{lambdamin} are satisfied.  Then, given vanishing initial data and right-hand side
  $f\in L^2([0,T]),H^1(\R^n))$ with $\supp(f) \subset J^+(K)$ for some compact set $K$, the unique distributional weak
  solution $\bar u^+$ of the advanced problem given by Lemma
  \ref{weaksol} satisfies the causal support condition
$$
{\supp}(\bar u^+) \subset J^+({\supp}(f))
$$
\end{Lemma}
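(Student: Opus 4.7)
The strategy is to transfer the smooth causal support statement to the limit through the Chru\'sciel-Grant regularisation. Concretely, fix a convolution regularisation $f_{\varepsilon}=f*\rho_{\varepsilon}$ of the source and let $g_{\varepsilon}$ be the family of smooth metrics from Lemma \ref{gensol}. By Lemma \ref{weaksol}, the associated smooth Cauchy solutions $u_{\varepsilon}$ (with vanishing initial data) converge to the weak solution $\bar u^{+}$ in $L^{\infty}([0,T],H^{1}(\mathbb R^{n}))$, and therefore also in $\mathcal D'(\Omega_{T})$.

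For each $\varepsilon$ the smooth theory applies (see e.g.\ \cite[Theorem 2.6.4]{Ring} or the argument in \cite[Cor.\ 3.4.3]{bgp}), so
\[
\supp(u_{\varepsilon})\subset J^{+}_{g_{\varepsilon}}(\supp(f_{\varepsilon})).
\]
Since the Chru\'sciel-Grant regularisation satisfies $J^{+}_{\varepsilon}(L)\subset J^{+}(L)$ for every compact $L$, and since $\supp(f_{\varepsilon})\subset \overline{B_{\varepsilon}(\supp(f))}$, I combine these to obtain
\[
\supp(u_{\varepsilon})\subset J^{+}\bigl(\overline{B_{\varepsilon}(\supp(f))}\bigr).
\]

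The final step is a duality argument. Pick $\varphi\in \mathcal D(\Omega_{T})$ with $\supp(\varphi)\cap J^{+}(\supp(f))=\emptyset$. The plan is to show that for all sufficiently small $\varepsilon$ one has $\supp(\varphi)\cap J^{+}(\overline{B_{\varepsilon}(\supp(f))})=\emptyset$, which immediately gives $\langle u_{\varepsilon},\varphi\rangle=0$ and, in the limit, $\langle \bar u^{+},\varphi\rangle=0$. That identity for every such $\varphi$ is equivalent to the claim $\supp(\bar u^{+})\subset J^{+}(\supp(f))$.

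The remaining, and only genuinely delicate, point is the upper semicontinuity statement
\[
\bigcap_{\varepsilon>0} J^{+}\bigl(\overline{B_{\varepsilon}(\supp(f))}\bigr)=J^{+}(\supp(f)),
\]
i.e.\ if a point $p$ is not in $J^{+}(\supp(f))$ then it is not in $J^{+}$ of a small thickening either. This is the main obstacle: it is exactly where the $C^{1,1}$ causality theory is needed. I would argue by contradiction using a limit-curve argument: if there were a sequence $q_{k}\in \overline{B_{1/k}(\supp(f))}$ and future-directed causal curves $\gamma_{k}$ from $q_{k}$ to $p$, the uniform bounds \eqref{taumin}, \eqref{lambdamin} on the metric give a uniform Lipschitz bound on the $\gamma_{k}$ in the natural parametrisation, and a standard Arzel\`a-Ascoli / limit-curve lemma (applicable for $C^{1,1}$ metrics, as in \cite{KSSV}) produces a limit future-directed causal curve from a point of $\supp(f)$ to $p$, contradicting $p\notin J^{+}(\supp(f))$. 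Coupled with compactness of $\supp(\varphi)$ and closedness of $J^{+}(\supp(f))$, this furnishes the required uniform-in-$\varepsilon$ separation and closes the proof.
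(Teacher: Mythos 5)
Your proof is correct, but it takes a genuinely different route from the paper's. You combine the metric regularisation and the source regularisation into a single parameter $\varepsilon$, so the support inclusion you inherit from the smooth theory carries the thickened set $\overline{B_\varepsilon(\supp(f))}$, and you must then establish the upper semicontinuity statement $\bigcap_{\varepsilon>0} J^+\bigl(\overline{B_\varepsilon(\supp(f))}\bigr) = J^+(\supp(f))$; this amounts to closedness of the causal relation for the $C^{1,1}$ metric $g$ itself, which you rightly attack with a limit-curve argument. The paper instead decouples the two parameters: at fixed $\varepsilon$ it lets a source mollification $\alpha \to 0$ for the \emph{fixed smooth} metric $g_\varepsilon$, invokes closedness of the $g_\varepsilon$-causal relation (standard for a smooth metric, \cite[Lemma A.5.5]{bgp}) to contract $J^+_\varepsilon(\supp(f_\alpha)) \searrow J^+_\varepsilon(\supp(f))$, and only afterwards applies the directional inclusion $J^+_\varepsilon \subset J^+$ of the Chru\'sciel--Grant regularisation. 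That two-step structure never needs causal closedness for the non-smooth metric at all. Your route works, but it loads onto one step a piece of $C^{1,1}$ causality (limit curves, for which you would still need to verify non-imprisonment on $\R^{n+1}$ from the uniform bounds \eqref{taumin}--\eqref{lambdamin}) that the paper deliberately sidesteps. The technical observation you miss is that applying $J^+_\varepsilon \subset J^+$ to the \emph{thickened} set throws away exactly what keeps the argument elementary; holding onto $J^+_\varepsilon$ until the very end and first shrinking the source support inside the smooth $g_\varepsilon$-world avoids the hard causality input entirely.
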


\begin{proof}

Let  $u^+_\eps$ be the unique solution of the corresponding advanced problem for  the regularised metric $g_\eps$ where the right-hand side $f$ is not necessarily  smooth. Note that upon taking $K$ slightly larger, we may assume
that $\supp(f) \subset J^+_\varepsilon(K)$ for $\varepsilon$ sufficiently small.

By Lemmas \ref{energy_estimates}, \ref{gensol}  we can obtain $u_{\varepsilon}^+$ as a limit $\alpha\to 0$ of $u_{\varepsilon,\alpha}^+$ where $(u_{\varepsilon,\alpha}^+)_{\alpha > 0}$ is a Colombeau representative of the unique generalised solution with fixed smooth $g_{\varepsilon}$-coefficients and right-hand side being the class of a convolution regularisation $(f_{\alpha})_{\alpha > 0}$. By the smooth theory we have for every $\varepsilon$ and every $\alpha$ that 
\begin{equation}\label{suppalpha}
\text{supp}(u_{\varepsilon,\alpha}^+)\subseteq J_{\varepsilon}^+(\text{supp}(f_{\alpha})).
\end{equation}
 At fixed  $\varepsilon$ and as $\alpha\to 0$ we have
in terms of monotonically decreasing sets $\bigcap_{\alpha > 0}\text{supp}(f_{\alpha})=\supp(f)$, i.e.\ $\supp(f_{\alpha})\searrow \supp(f)$. By closedness of the causal relation \cite[Lemma A.5.5]{bgp} we obtain that 
\begin{equation}\label{suppalphalim}
J_{\varepsilon}^+(\text{supp}(f_{\alpha})) \searrow J_{\varepsilon}^+(\text{supp}(f)).
\end{equation}

 Let $\psi$ be a test function such that $\supp(\psi)\cap J_{\varepsilon}^+(\supp(f))=\emptyset$. By \eqref{suppalpha} and \eqref{suppalphalim} there exists some $\alpha_0>0$ such that $\supp(\psi)\cap J_{\varepsilon}^+(\text{supp}(f_{\alpha})) =\emptyset $ for all $\alpha<\alpha_0$. Therefore $\langle u_{\varepsilon,\alpha}^+,\psi \rangle=0$ for every $\alpha<\alpha_0$; taking the limit $\alpha \to 0$ we obtain $\langle u_{\varepsilon}^+,\psi \rangle=0$ and therefore 
 \begin{equation*}
 \supp(u_{\varepsilon}^+)\subseteq J_{\varepsilon}^+(\text{supp}(f)). 
 \end{equation*}
  Now let $\psi$ have support disjoint from $J^+(\supp(f))$. Then, by the causal properties of the Chru\'{s}ciel-Grant regularisation, we have that $\supp(\psi)$ is also  disjoint from $J_{\varepsilon}^+(\supp(f))$ for every $\varepsilon >0$.

 Therefore, 
  $\langle u_{\varepsilon}^+, \psi \rangle=0 $ for every such $\psi$  and for every $\varepsilon >0$, showing that $\supp(u^+)\subseteq J_{\varepsilon}^+(\supp(f))$ for every $\varepsilon >0$, hence $\supp(\bar u^+)\subset J^{+}(\supp(f))$.
\end{proof}

\begin{rem}\label{initial}  More generally for non-zero initial data one has 
$$
\supp(\bar u^+) \subset J^+(\supp(f) \cup \supp(u_0) \cup \supp(u_1)).
$$
The proof of this follows from the above by recasting the smoothed
version of the problem as an equivalent inhomogeneous problem with
non-zero initial data.
\end{rem}

\begin{Theorem}\label{Rn}
  Consider the Cauchy problem \eqref{cauchy} where all coefficients
  belong to $W^{1,\infty}(\R^{n+1})$ and the bounds \eqref{taumin} and
  \eqref{lambdamin} are satisfied.  Then, given initial data
  $(u_0,u_1) \in H^2(\Real^n) \times H^1(\Real^n)$ and $f\in
  L^2([0,T]),H^1(\R^n)$, there exists a unique distributional weak
  solution $u\in C([0,T],H^2(\R^n))$ $\cap C^1([0,T],H^1(\R^n))\cap
  H^2(\Omega_T)$. Furthermore $u$ satisfies the causal support
  condition
$$
\supp(u) \subset J(\supp(f) \cup \supp(u_0) \cup \supp(u_1)).
$$
\end{Theorem}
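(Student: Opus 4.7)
The plan is to assemble the theorem from the lemmas just established. Existence, uniqueness, and the claimed regularity $u\in C([0,T],H^2(\R^n))\cap C^1([0,T],H^1(\R^n))\cap H^2(\Omega_T)$ are immediate from Lemma \ref{weaksol}, whose hypotheses on coefficients, initial data, and source match those of the theorem verbatim; the solution $u$ is constructed there as the distributional shadow of the Colombeau solution produced by Lemma \ref{gensol}, and uniqueness is the final clause of that lemma.

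For the causal support condition I would reduce the inhomogeneous Cauchy problem to one with vanishing initial data, which is directly covered by Lemma \ref{causalsupport_Rn}. Following the strategy of Remark \ref{initial}, choose a cut-off $\chi\in C^\infty_c(\R)$ with $\chi\equiv 1$ in a neighbourhood of $0$ and set
\[
v(t,x):=\chi(t)\bigl(u_0(x)+t\,u_1(x)\bigr).
\]
Under the $W^{1,\infty}$ assumption on the coefficients, $v$ lies in $C([0,T],H^2(\R^n))\cap C^1([0,T],H^1(\R^n))$ with $Pv\in L^2([0,T],H^1(\R^n))$, and $\supp(v)\subset[0,\delta]\times(\supp(u_0)\cup\supp(u_1))$ for some small $\delta>0$. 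The difference $w:=u-v$ then solves $Pw=f-Pv$ with zero Cauchy data and source supported in $S:=\supp(f)\cup\supp(u_0)\cup\supp(u_1)$, so Lemma \ref{causalsupport_Rn} yields $\supp(w)\subset J^+(S)$; since $\supp(v)$ itself sits in $J^+(S)$ (each of its points lies in the causal future of its own projection onto $\{t=0\}\cap S$), adding back $v$ gives $\supp(u)\subset J^+(S)\subset J(S)$, which is the required causal support condition.

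The main delicate point I anticipate is verifying that $v$ really has the regularity demanded by Lemma \ref{weaksol} and that $Pv\in L^2([0,T],H^1(\R^n))$. This is a routine computation given the $W^{1,\infty}$ regularity of the coefficients; one needs only that $\dt v=\chi'(u_0+tu_1)+\chi u_1$ and $\dt^2 v=\chi''(u_0+tu_1)+2\chi'u_1$, together with the spatial derivatives of $v$ up to second order, can be fed through each term of $P$ and controlled in the $L^2([0,T],H^1(\R^n))$ norm, which is guaranteed by the hypotheses $u_0\in H^2(\R^n)$ and $u_1\in H^1(\R^n)$ already present in the theorem.
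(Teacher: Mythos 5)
Your decomposition into Lemma \ref{weaksol} for existence, uniqueness, and regularity, plus a reduction to zero initial data fed into Lemma \ref{causalsupport_Rn}, mirrors the paper's strategy (which cites those two lemmas together with Remark \ref{initial}). However, the specific reduction you propose has a genuine regularity gap.

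The auxiliary function $v(t,x)=\chi(t)\bigl(u_0(x)+t\,u_1(x)\bigr)$ does \emph{not} satisfy $Pv\in L^2([0,T],H^1(\R^n))$ under the hypotheses $u_0\in H^2(\R^n)$, $u_1\in H^1(\R^n)$. The term $-\sum_{i,j}h_{ij}\partial_{x_i}\partial_{x_j}v$ in $Pv$ contains $\chi(t)\,t\,h_{ij}\partial_{x_i}\partial_{x_j}u_1$, and for $u_1\in H^1(\R^n)$ the distribution $\partial_{x_i}\partial_{x_j}u_1$ lives only in $H^{-1}(\R^n)$, not in $L^2(\R^n)$. So $Pv$ is not even an $L^2$ function for $t\neq 0$, let alone in $L^2([0,T],H^1(\R^n))$. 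Even setting that aside, the cross term $2\sum_j g_{0j}\partial_{x_j}\partial_t v$ produces $g_{0j}\chi(t)\partial_{x_j}u_1$, which is in $L^2$ spatially but not in $H^1$ as required. Your closing claim that the needed control is ``guaranteed by the hypotheses $u_0\in H^2(\R^n)$ and $u_1\in H^1(\R^n)$'' is therefore incorrect: $u_1$ is one spatial derivative short. There is also no obvious repair that preserves supports: a nonlocal smoothing of $u_1$ that gains a derivative (e.g.\ via $\sin(t\langle D\rangle)\langle D\rangle^{-1}u_1$) would restore regularity but spoil the containment $\supp(v)\subset\supp(u_0)\cup\supp(u_1)$ that the causal support argument relies on.

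What Remark \ref{initial} actually indicates, and what avoids this obstruction, is to perform the reduction on the \emph{smoothed} problems, where the mollified data $u_{0,\alpha}, u_{1,\alpha}$ and the regularised coefficients are all smooth: either subtract an auxiliary $v_{\alpha}(t,x)=\chi(t)(u_{0,\alpha}+tu_{1,\alpha})$ from the smooth solution $u_{\eps,\alpha}$ (so $P_\eps v_{\alpha}$ is smooth and the reduction is harmless), or invoke directly the smooth causal support theorem, which already handles non-zero initial data and gives $\supp(u_{\eps,\alpha})\subset J^+_\eps\bigl(\supp(u_{0,\alpha})\cup\supp(u_{1,\alpha})\cup\supp(f_\alpha)\bigr)$. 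One then passes to the distributional shadow by the same test-function and closedness-of-causal-relation argument already used in the proof of Lemma \ref{causalsupport_Rn}. Your proof needs to be rewritten to carry out the reduction at that regularised level rather than for the limiting Sobolev data.
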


\begin{proof}
  The existence and uniqueness of the weak solution follow from Lemma
  \ref{weaksol} and the causal support condition follows from Lemma
  \ref{causalsupport_Rn} together with Remark \ref{initial} applied to
  both the past and future.
\end{proof}


\section{The Cauchy problem for $C^{1,1}$ globally hyperbolic
  spacetimes}\label{GH}

\subsection{Causality results for $C^{1,1,}$ spacetimes}

In this paper we will be considering solutions of the wave equation on
orientable spacetimes $(M,g)$ endowed with a $C^{1,1}$ metric. Note
that although the metric is only $C^{1,1}$ we will always assume that the
manifold has a smooth structure. The concept of global
hyperbolicity (for smooth metrics) was introduced by Leray
\cite{Leray} as a condition to ensure the existence of unique
solutions to hyperbolic equations and in particular the Cauchy problem
for the wave equation is well-posed for smooth globally hyperbolic
spacetimes \cite[Theorem 3.2.11 p. 84ff]{bgp}. For our situation it is
therefore natural to consider globally hyperbolic spacetimes with
$C^{1,1}$ metrics. Global hyperbolicity is the strongest of the
conditions in the causal hierarchy of spacetimes \cite{ladder} and
recently there has been considerable interest in looking at the causal
properties of low-regularity spacetimes \cite{conespace}
\cite{KSSV}. It was shown explicitly by Chrus\'ciel \cite{crusciel}
that essentially all of causality theory for smooth spacetimes goes
through to the $C^2$ case. However in the proofs of these results an
important role is played by the existence of totally normal (convex)
neighbourhoods and the Gauss Lemma whose existence is not automatic in
the $C^{1,1}$ case. However it was shown in \cite{KSSV}, \cite{KSS}
\cite{mingnormal} that such neighbourhoods do exist and the
exponential map gives a local lipeomorphism. This enables essentially
the whole of the results of standard smooth causality theory to go
through to the $C^{1,1}$ case (see \cite{KSSV} and \cite{mingnormal}
for details).

For spacetimes with a $C^2$ metric there are four equivalent notions
of global hyperbolicity (see for example \cite[Section 3.11
p. 340ff.]{ladder}). These are:
\begin{enumerate}
\item compactness of the causal diamonds and causality\footnote{As shown by Bernal and S\'anchez the
    requirement of strong causality in the classical definition of
    global hyperbolicity can be weakened to only require causality
    \cite{bernalsanchez}.},
\item compactness of the space of causal curves connecting two points
 and causality \cite{Leray}, 
\item existence of a Cauchy hypersurface,
\item the metric splitting of the spacetime.
\end{enumerate}
For $C^{1,1}$ spacetimes we will adopt the first definition. However
for non-totally imprisoned \cite{nontotallyimp} $C^0$ spacetimes (and
hence in particular for globally hyperbolic $C^{1,1}$ spacetimes) these
four definitions remain equivalent \cite{saemann}. See also
\cite{conespace} Theorem 2.45 for a more general notion formulated in
terms of closed cone structures.

In our constructions below we will make use of time functions and
temporal functions. A \emph{time function} is a function that is
strictly increasing along every causal curve while a \emph{temporal
  function} has the additional property that its gradient is
everywhere past-directed and timelike. It is shown by Minguzzi
\cite[Theorem 2.30]{conespace} (see also \cite{FathiSiconolfi}) that
for a stably causal closed cone structure (and hence in particular for
a $C^{1,1}$ globally hyperbolic spacetime) there exists a
\emph{smooth} temporal function $t:M\rightarrow
\mathbb{R}$. Furthermore every globally hyperbolic closed cone
structure is the domain of dependence of a stable Cauchy surface (see
definition below) $\Sigma$, so that $M=D(\Sigma)$ and that $M$ is the
topological product $\Real \times \Sigma$ where the first projection is $t$
and the level surfaces $\Sigma_\tau=\{x \in M: t(x)=\tau\}$ are
\emph{diffeomorphic} to $\Sigma$ \cite[Theorem 2.42]{conespace}.  So
that although the metric is only $C^{1,1}$ the topological splitting
remains smooth.

In the case of a smooth metric Bernal and Sanchez
  \cite{sanchezcauchy} show that given a smooth spacelike Cauchy
  hypersurface $\Sigma$ there exists a smooth temporal function $t$
  such that $\Sigma=t^{-1}(0)$. However in the case of a non-smooth
  metric the temporal function they construct will not be smooth. To
  generalise the results of \cite{sanchezcauchy} to the non-smooth
  case we need the concept of a \emph{stable Cauchy hypersurface}
  introduced by Minguzzi in \cite{conespace}. These are Cauchy
  hypersurfaces which are also Cauchy hypersurfaces for some metric
  $g' \succ g$ with strictly wider lightcones than $g$. 

  Bernard and Suhr \cite[Corollary 2.4]{BernardSuhr} show that a
  smooth spacelike Cauchy hypersurface is a stable Cauchy hypersurface
  and that furthermore one can construct a smooth temporal function
  such that $\Sigma=t^{-1}(0)$ \cite[Theorem 1]{BernardSuhr}. A full
  discussion of this issue is given in the paper by Minguzzi
  \cite{MinguzziAddenda}. The approach in \cite{MinguzziAddenda} is
  complementary to that in \cite{BernardSuhr} and consists of using
  topological arguments to show that the causal cones can be widened
  while preserving the Cauchy property of the hypersurface. One may
  then use the methods of Bernal Sanchez \cite{sanchezcauchy} to
  construct a smooth time function with
  $\Sigma=t^{-1}(0)$ which as shown in \cite{conespace} is a smooth
  temporal function for the original spacetime $(M,g)$.  See
  \cite[Theorem 2.22]{MinguzziAddenda} for details.  Indeed given two
  smooth spacelike Cauchy hypersurfaces $\Sigma_0$ and $\Sigma_1$ with
  $\Sigma_1 \subset J^+(\Sigma_0) \setminus \Sigma_0$ one can find a smooth
  temporal function $t$ that interpolates between them so that
  $\Sigma_0 \subset t^{-1}(0)$ and $\Sigma_1 \subset t^{-1}(1)$
  \cite[Theorem 2.23]{MinguzziAddenda}.

\subsection{Existence and Uniqueness}

In this section we extend the results of Section \ref{RN} to a
globally hyperbolic $C^{1,1}$ spacetime $(M,g)$. The main result is
Theorem \ref{globalexist} which establishes the existence and
uniqueness of $H^2_{\text{loc}}(M)$ solutions to the wave equation for a
globally hyperbolic $C^{1,1}$ spacetime. We start by obtaining an
energy inequality which we use to establish uniqueness and the causal
support properties of solutions to the wave equation.

\begin{Lemma}\label{HE7.4.4}{\rm (Energy inequality) \cite[Lemma 7.4.4]{HandE}}\\
  Let $(M,g)$ be an $(n+1)$-dimensional $C^{1,1}$ globally hyperbolic
  spacetime with $\Sigma$ a smooth spacelike $n$-dimensional Cauchy surface
  and $t$ a smooth temporal function such that $\Sigma=t^{-1}(0)$. Let $U \subset M$ be an open
  set with compact closure and let $U^+:=U \cap J^+(\Sigma)$ be such that
  $\partial U \cap {\bar U}^+$ is achronal. Then if $u$ is a (weak)
  $H_{\text{loc}}^2(M)$ solution of $\square_g u=f$ where $f \in
  L^2_{\text{loc}}(M)$ then
\begin{equation}\label{enegyin}
||u||_{\tilde H^1(\Sigma_\tau\cap U^+)} \leq K \left(||u||_{\tilde H^1(\Sigma_0\cap U^+)} 
+||f||_{L^2(U_\tau)}\right)
\end{equation} 
where $U_\tau=\left\{ q \in U: 0 \leq t(q) \leq \tau\right\}$. 
\end{Lemma}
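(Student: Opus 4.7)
The plan is to reduce the $C^{1,1}$ estimate to the classical smooth Hawking-Ellis argument via the Chruściel-Grant regularisation already used in Section~\ref{RN}. The key geometric ingredient is that achronality of $\partial U \cap \bar U^+$ forces the flux through the lateral boundary of $U_\tau$ to have the correct sign (via the dominant energy condition), so only the fluxes through the spacelike slices $\Sigma_0 \cap U^+$ and $\Sigma_\tau \cap U^+$ and the bulk source term survive.

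\textbf{Smooth case.} For a smooth metric and $u \in C^\infty$, define the stress-energy tensor $T_{ab} = \nabla_a u\,\nabla_b u - \tfrac{1}{2} g_{ab}\, g^{cd} \nabla_c u\,\nabla_d u$, which satisfies $\nabla^a T_{ab} = (\square_g u)\nabla_b u = f\nabla_b u$ and the dominant energy condition. Pick a smooth future-directed unit timelike vector field $V$ (for instance the unit normal to the level sets of $t$, smooth because the splitting is smooth) and set $J^a = T^{ab} V_b$. Then
\[
\nabla_a J^a = f\,V(u) + T^{ab}\nabla_a V_b,
\]
which is pointwise bounded by $C(|f|^2 + |u|^2 + |\nabla u|^2)$ with $C$ depending only on $L^\infty$-bounds on $V$, $\nabla V$ and $g^{-1}$. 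Integrating over $U_\tau$ and applying the divergence theorem, the flux on the spacelike slices $\Sigma_0\cap U^+$ and $\Sigma_\tau\cap U^+$ is, up to equivalence with an auxiliary Riemannian metric, the squared $\tilde H^1$-norm of $u$; the lateral flux $T(V,\nu)$ is non-negative by achronality combined with the dominant energy condition and is discarded. The result is a Gronwall-form inequality
\[
\|u\|^2_{\tilde H^1(\Sigma_\tau\cap U^+)} \leq C\Bigl( \|u\|^2_{\tilde H^1(\Sigma_0\cap U^+)} + \|f\|^2_{L^2(U_\tau)} + \int_0^\tau \|u\|^2_{\tilde H^1(\Sigma_s\cap U^+)}\, ds \Bigr),
\]
and Gronwall's inequality yields the bound with $K = K(C,T)$.

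\textbf{Passage to $C^{1,1}$ and main obstacle.} Apply a Chruściel-Grant regularisation to produce smooth $g_\varepsilon \to g$ in $C^1_{\text{loc}}$ with $J^+_\varepsilon(K')\subset J^+(K')$; the narrower cones of $g_\varepsilon$ guarantee that any set achronal for $g$ stays achronal for $g_\varepsilon$, in particular the lateral boundary $\partial U \cap \bar U^+$. Mollify $u$ in spacetime to obtain smooth $u_\varepsilon \to u$ in $H^2_{\text{loc}}(M)$ and set $f_\varepsilon := \square_{g_\varepsilon} u_\varepsilon$; the Cauchy-net-type computation from the proof of Lemma~\ref{weaksol} gives $f_\varepsilon \to f$ in $L^2_{\text{loc}}$ because $g_\varepsilon \to g$ in $C^1$ and $u_\varepsilon \to u$ in $H^2$. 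Apply the smooth inequality to $(u_\varepsilon, g_\varepsilon, f_\varepsilon)$ over $U_\tau$; the constant $K_\varepsilon$ stays uniformly bounded since it depends only on $L^\infty$-bounds of $g$ and $dg$ on the compact set $\bar U$, which are controlled for a $C^{1,1}$ metric. The limit uses $u_\varepsilon|_{\Sigma_s} \to u|_{\Sigma_s}$ in $H^1$, a consequence of the trace theorem together with $H^2$-convergence on $M$. The main obstacle I foresee is the lateral-flux argument at $C^{1,1}$ regularity: since $\partial U \cap \bar U^+$ is only a Lipschitz hypersurface with outward conormal in $L^\infty$, justifying that $T(V,\nu)\ge 0$ a.e.\ on it requires either an approximation from inside by strictly spacelike $C^\infty$-hypersurfaces (exploiting the stability of achronality under the narrower-cone regularisation) or, equivalently, a reformulation of the estimate via a foliation of $U^+$ for which the sign of the lateral flux is automatic.
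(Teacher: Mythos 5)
Your strategy diverges from the paper's in a substantive way: the paper does \emph{not} regularise the metric or the solution at all in this proof. Instead it applies the divergence theorem directly to the weak $H^2_{\text{loc}}$ solution on the $C^{1,1}$ background, noting (with reference to \cite{crusdiv}) that this is legitimate whenever the metric and the chosen timelike covector field are $C^{0,1}$ and the function has two derivatives in $L^2_{\text{loc}}$. The difficulty you flag as your ``main obstacle'' --- controlling the sign of the lateral flux across the Lipschitz boundary after the double regularisation $(g_\eps,u_\eps)$ and passing to the limit --- is precisely what the direct argument avoids. Your route is not wrong in principle, but it is considerably heavier, and the trace and flux convergences on the Lipschitz lateral boundary of $U_\tau$ (let alone the limit of the discarded nonnegative lateral term) would require real work to make rigorous; the one-step argument in the paper sidesteps all of it.

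There is also a genuine gap in your smooth-case step: you use the unmodified scalar-field stress tensor $T_{ab}=\nabla_au\,\nabla_bu-\tfrac12 g_{ab}g^{cd}\nabla_cu\,\nabla_du$. Its flux through a spacelike slice controls only the first-order terms (time and tangential derivatives of $u$), \emph{not} $\|u\|^2_{L^2}$, so it yields a homogeneous Sobolev seminorm on $\Sigma_\tau\cap U^+$ rather than the full $\tilde H^1$-norm \eqref{restricted}. Your asserted inequality with $\|u\|^2_{\tilde H^1(\Sigma_\tau\cap U^+)}$ on the left therefore does not follow from the flux computation you write down. The paper (following Hawking--Ellis) repairs this by enhancing the tensor with $\overset{(0)}{T}_{\alpha\beta}=-\tfrac12 g_{\alpha\beta}u^2$, forming $S_{\alpha\beta}=\overset{(0)}{T}_{\alpha\beta}+\overset{(1)}{T}_{\alpha\beta}$, which still satisfies the dominant energy condition and whose flux is equivalent to the full $\tilde H^1$-density; the extra contribution of $\overset{(0)}{T}$ to the bulk divergence is lower order and is absorbed by Gronwall. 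Alternatively you could supplement your flux bound with a separate fundamental-theorem-of-calculus estimate on $\|u(\tau)\|_{L^2}^2$ and feed both into Gronwall, in the spirit of the computation in Lemma \ref{energy_estimates}, but as written this step is missing.
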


\begin{center}
\includegraphics[scale=0.7]{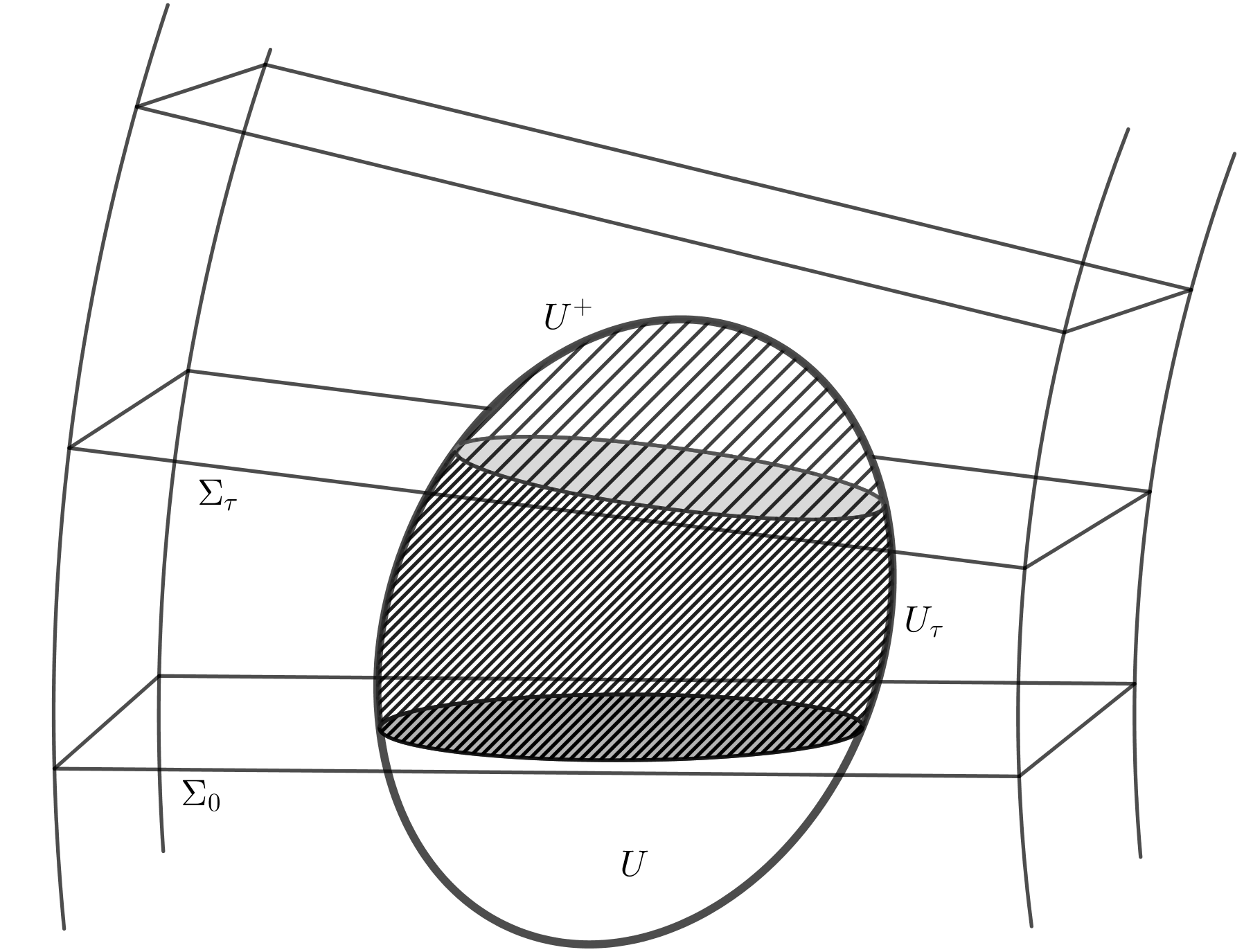}
\end{center}

\begin{proof} 
  To establish the energy inequality we follow Hawking and Ellis by
  applying the divergence theorem to an enhanced energy-momentum
  tensor \cite[Lemma 7.4.4]{HandE}. Let
\begin{equation*}
\overset{(1)}{T}_{\alpha\beta}:=\nabla_{\alpha}u\nabla_{\beta}u-\frac{1}{2}\left(g^{\rho\sigma}\nabla_{\rho}u\nabla_{\sigma}u\right)g_{\alpha\beta}.
\end{equation*}
be the energy momentum tensor of the scalar field. Then
$\overset{(1)}{T}_{\alpha\beta}$ has vanishing divergence and
satisfies the dominant energy condition. We now follow \cite{HandE}
and modify this by adding on the term
\begin{equation*}
\overset{(0)}{T}_{\alpha\beta}:=-\frac{1}{2}g_{\alpha\beta}u^2
\end{equation*}
to obtain  
\begin{equation*}
S_{\alpha\beta}=\overset{(0)}{T}_{\alpha\beta}+\overset{(1)}{T}_{\alpha\beta}
\end{equation*}
which still satisfies the dominant energy condition. Let
$\xi_\alpha=\nabla_\alpha t$ then we obtain the required inequality by
applying the divergence theorem to $S^{\alpha\beta}\xi_\alpha$ over
the region $U_\tau=\left\{ q \in U: 0 \leq t(q) \leq \tau\right\}$.
In order to do this we require that ${\rm
  div}(S^{\alpha\beta}\xi_\alpha)$ should be integrable with respect
the volume form $\nu_g$, and this is guaranteed by the compactness of
$\bar U$ and the fact that our solution is in $H^2_{\text{loc}}(M)$. In fact
it is enough that the weak solutions have two derivatives in
$L^{2}_{\text{loc}}(M,g)$ if the metric and the timelike vector field are in the
space $C^{0,1}$ (see \cite{crusdiv}). The boundary of $U_{\tau}$
consists of three parts; the level surface $\Sigma_\tau\cap {\bar U}^+$,
the level surface $\Sigma_0\cap {\bar U}^+$ and the remainder which we
denote ${\mathcal H}$. Because of the dominant energy condition and
the fact that $\partial U \cap {\bar U}^+$ is achronal, the
contribution to the surface integral from ${\mathcal H}$ is
positive. We therefore obtain the following inequality:
\begin{equation}\label{div}
\int_{\Sigma_\tau \cap {\bar U}^+}S^{\alpha\beta}\xi_\alpha\xi_\beta\mu_{\tau}-\int_{\Sigma_0 \cap {\bar U}^+}S^{\alpha\beta}\xi_\alpha\xi_\beta\mu_{0} \leq \int_{U_\tau} \nabla_\alpha(S^{\alpha\beta}\xi_\beta)\nu_g
\end{equation}
where $\nu_g$ is the volume form on $U_\tau$ given by $g$, and
$\mu_{\tau}$ is the volume form induced by $g$ on the $\Sigma_\tau$. 

We now define an energy type integral 
\begin{equation*}
E(\tau)=\int_{\Sigma_\tau \cap {\bar U}^+}S^{\alpha\beta}\xi_\alpha\xi_\beta\mu_{\tau}.
\end{equation*}
Then on $\bar U$ this is equivalent \cite{wilson} to the restricted Sobolev norm
(\ref{restricted})
\begin{equation*}
C_1||u||_{\tilde H^1(\Sigma_\tau \cap {\bar U}^+)} \leq E(\tau) \leq C_2||u||_{\tilde H^1(\Sigma_\tau \cap {\bar U}^+)}.
\end{equation*}
Note that since the solution $u$ is in $H^2_{\text{loc}}(M)$ we have well-defined traces in ${\tilde H}^1(\Sigma)$. 
In terms of the energy norm we may write (\ref{div}) in the form
\begin{equation*}
E(\tau) \leq E(0) + \int_{U_\tau}((\nabla_{\alpha}S^{\alpha\beta})\xi_\beta+S^{\alpha\beta}\nabla_\alpha\xi_\beta)\nu_g.
\end{equation*}
Repeated application of the Cauchy-Schwartz inequality and $\square_g u=f$ then gives \cite{wilson}.
\begin{equation}\label{energyin}
E(\tau) \leq E(0) + C_1||f||^2_{L^2(U_\tau)} + C_2 \int_0^\tau E(s)ds
\end{equation} 
which on applying Gronwall's inequality gives
\begin{equation*}
E(\tau) \leq \left(E(0) + C_1||f||^2_{L^2(U_\tau)} \right)e^{C_2\tau}.
\end{equation*}
In terms of the Sobolev type norms this gives 
\begin{equation*}
||u||_{\tilde H^1(\Sigma_\tau\cap U^+)} \leq K \left(||u||_{\tilde H^1(\Sigma_0\cap U^+)} 
+||f||_{L^2(U_\tau)}\right).
\end{equation*} 
\end{proof}

We now use the energy inequality (\ref{enegyin}) to prove uniqueness
of the solution as well as the causal support properties of the
solution to the Cauchy problem.

\begin{Proposition}\label{uniqueness}{\rm (Uniqueness)} \\
  Let $(M,g)$ be a (connected, oriented, time oriented,) globally
  hyperbolic $(n+1)$-dimensional Lorentzian manifold with $C^{1,1}$
  metric and $\Sigma$ a smooth spacelike $n$-dimensional spacelike Cauchy
  surface. Let $u$ be a (weak) $H_{\text{loc}}^2(M)$
  solution of
$$
\square_g u=f
$$
with $f \in H^1_{\text{loc}}(M)$, which satisfies the initial conditions 
\begin{align*}
u|_\Sigma&=u_0,\\
\nabla_nu|_\Sigma&=u_1, \quad \hbox{where $n$ is the unit normal to $\Sigma$}
\end{align*}
with $(u_0,u_1) \in H^2(\Sigma) \times H^1(\Sigma)$. Then $u$ is unique.
\end{Proposition}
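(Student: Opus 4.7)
The plan is to reduce uniqueness to a vanishing statement for a single function and then exploit the energy inequality of Lemma~\ref{HE7.4.4} on a family of localising sets that cover $M$. Concretely, suppose $u$ and $v$ are two weak $H^2_{\text{loc}}(M)$ solutions with the same data $(u_0,u_1)$ and source $f$. Set $w:=u-v$. Then $w\in H^2_{\text{loc}}(M)$, satisfies $\square_g w=0$ in the weak sense, and has vanishing Cauchy data on $\Sigma$. It suffices to show $w\equiv 0$ on $M$.

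First I would handle the future of $\Sigma$. For each point $p\in J^+(\Sigma)\setminus\Sigma$, the $C^{1,1}$ causality results quoted in the previous subsection (compactness of causal diamonds, existence of totally normal neighbourhoods from \cite{KSSV,KSS,mingnormal}, and local lipeomorphism of the exponential map) allow the construction of a bounded open set $U\subset M$ with $p\in U$, compact closure $\bar U$, and the property that $\partial U\cap \bar U^+$ is achronal; the standard choice is a slightly enlarged truncated causal diamond whose lateral boundary is generated by null geodesics emanating from a spacelike $(n-1)$-sphere. One checks achronality using the $C^{1,1}$ analogues of the smooth Gauss lemma and the limit curve theorem. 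Lemma~\ref{HE7.4.4} applied to $w$ on this $U$ then gives
\begin{equation*}
\|w\|_{\tilde H^1(\Sigma_\tau\cap U^+)} \;\le\; K\bigl(\|w\|_{\tilde H^1(\Sigma_0\cap U^+)}+\|\square_g w\|_{L^2(U_\tau)}\bigr)\;=\;0
\end{equation*}
for all $\tau$ with $\Sigma_\tau\cap U^+\ne\emptyset$, where we have used the vanishing of the Cauchy data on $\Sigma_0$ and of $\square_g w$. Hence $w$ vanishes on $U^+$, and in particular in a neighbourhood of $p$.

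Since $p\in J^+(\Sigma)\setminus\Sigma$ was arbitrary, $w$ vanishes on all of $J^+(\Sigma)\setminus\Sigma$; by continuity of the $H^1$-trace on the Cauchy surfaces $\Sigma_\tau$ (available because $w\in H^2_{\text{loc}}(M)$) and by the assumed vanishing of the Cauchy data on $\Sigma$ itself, $w\equiv 0$ on $J^+(\Sigma)$. The same argument with the time orientation reversed, using the past analogue of Lemma~\ref{HE7.4.4}, shows $w\equiv 0$ on $J^-(\Sigma)$. Because $\Sigma$ is a Cauchy surface we have $M=J^+(\Sigma)\cup J^-(\Sigma)$, and therefore $w\equiv 0$ on $M$, i.e.\ $u=v$.

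The main obstacle I expect is the geometric one: producing, for every $p\in M$, a bounded open set $U$ of the sort required by Lemma~\ref{HE7.4.4} whose future part $U^+$ actually reaches $p$ and whose lateral boundary $\partial U\cap\bar U^+$ is achronal. In the smooth case this is standard via truncated light cones, but for $C^{1,1}$ metrics one must invoke the nontrivial fact that normal neighbourhoods exist and behave causally as in the smooth theory (\cite{KSSV,mingnormal}), together with the closedness of the causal relation in globally hyperbolic $C^{1,1}$ spacetimes; once these ingredients are in place the rest of the argument is essentially a verbatim transcription of the smooth uniqueness proof.
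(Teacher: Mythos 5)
Your argument is correct and follows the same overall strategy as the paper: reduce to the difference $w=u-v$, localise around an arbitrary point, apply the energy inequality of Lemma~\ref{HE7.4.4}, and conclude by exhausting $J^\pm(\Sigma)$. The one place you make life harder than necessary is the construction of the localising region $U$. You propose ``a slightly enlarged truncated causal diamond whose lateral boundary is generated by null geodesics emanating from a spacelike $(n-1)$-sphere,'' and worry (correctly) that the $C^{1,1}$ Gauss lemma and limit-curve theorem are needed to verify achronality. The paper's proof sidesteps this entirely by taking
$U := I^-(p)\cap I^+(\Sigma)$
for a suitable $p$ with $q\in U$: global hyperbolicity in the $C^{1,1}$ setting (\cite{KSSV}) already gives that $\bar U$ is compact, $U^+=U$, and the lateral part of $\partial U$ lies on $\partial I^-(p)$, which is automatically achronal. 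No normal neighbourhoods, exponential map, or light-cone generators are invoked. So the ``main obstacle'' you flag is not actually an obstacle for the choice of $U$ the paper makes, and the rest of your argument (vanishing on $U^+$, exhaustion, time reversal) matches the paper's.
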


\begin{proof}
  Let $q \in M$ and without loss of generality suppose that $q \in I^+(\Sigma)$. Then since our
  spacetime is globally hyperbolic we may find a $p$ such that $q \in
  I^-(p) \cap I^+(\Sigma):=U$ where by $C^{1,1}$ causality theory $U$ has
  compact closure \cite{KSSV}. Now suppose there exist two solutions
  $u$ and $\tilde u$ to the above initial value problem. Then applying
  Lemma \ref{HE7.4.4} to $\hat u:=u-\tilde u$ over the region $U^{+}$
  gives
\begin{equation*}
||\hat u||_{\tilde H^1(\Sigma_\tau\cap U^+)} \leq K ||\hat u||_{\tilde H^1(\Sigma_0\cap U^+)}=0 
\end{equation*} 
Hence $||\hat u||_{\tilde H^1(\Sigma_\tau \cap U^+)}=0$ so that $\hat u$
and $\nabla \hat u$ vanish in $U^{+}$. Since $q$ is arbitrary the solution is unique in $D^+(\Sigma)$. A
similar result applies to $D^-(\Sigma)$, so we have uniqueness in the whole
of $M=D(\Sigma)$.
\end{proof}

\begin{Proposition}\label{causalsupport}{\rm (Causal Support)}\\
  Let $(M,g)$ be a connected, oriented, time oriented, globally
  hyperbolic $(n+1)$-dimensional Lorentzian manifold with $C^{1,1}$
  metric, $\Sigma_0$ a smooth  spacelike $n$-dimensional Cauchy
  surface and $t$ a smooth temporal function such that $t^{-1}(0)=\Sigma_0$. 
 Let 
$u \in C^0(\R, H^2(\Sigma_t))\cap C^1(\R, H^1(\Sigma_t)) \cap H_{\text{loc}}^2(M)$ 
be a (weak) solution to the initial value problem
\begin{align*}
\square_g u&=f \quad \hbox{on $M$,}\\
u&=u_0 \quad \hbox{on $\Sigma$,}\\
\nabla_n&=u_1 \quad \hbox{on $\Sigma$.}
\end{align*}
Then $\supp(u) \subset J(\supp(u_0) \cup \supp(u_1) \cup \supp(f))$.
\end{Proposition}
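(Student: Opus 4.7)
The plan is to derive the support statement from the energy inequality of Lemma \ref{HE7.4.4} by a causal-domain-of-dependence argument. Let $K := \supp(u_0) \cup \supp(u_1) \cup \supp(f)$ and fix $q \in M \setminus J(K)$. By the time-reversal symmetry of the statement I may restrict to $q \in J^+(\Sigma_0)$, and the subcase $q \in \Sigma_0$ is immediate since $u(q) = u_0(q) = 0$. So assume $q \in I^+(\Sigma_0)$ with $\tau := t(q) > 0$, and aim to apply Lemma \ref{HE7.4.4} on an open set $U \ni q$ of compact closure, with $\partial U \cap \bar U^+$ achronal and with $\bar U^+$ disjoint from $K$; the right-hand side of \eqref{enegyin} will then vanish, forcing $\|u\|_{\tilde H^1(\Sigma_\tau \cap U^+)} = 0$, so that $u$ vanishes on a neighbourhood of $q$.

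The key geometric step is the construction of $U$. The causal diamond $D := J^-(q) \cap J^+(\Sigma_0)$ is compact by $C^{1,1}$ global hyperbolicity \cite{KSSV}, and the hypothesis $q \notin J^+(K)$ gives $D \cap K = \emptyset$. I then pick a point $p \in I^+(q)$ and a small $\delta > 0$ and set
\begin{equation*}
U := I^-(p) \cap I^+(\Sigma_{-\delta}),
\end{equation*}
where $\Sigma_{-\delta}$ is a smooth spacelike Cauchy surface slightly to the past of $\Sigma_0$, which exists by \cite[Theorem 2.23]{MinguzziAddenda}. For $p$ sufficiently close to $q$, the enlarged diamond $\overline{I^-(p) \cap J^+(\Sigma_0)}$ remains disjoint from the closed set $K$; this uses the upper semi-continuity of the causal relation in $C^{1,1}$ globally hyperbolic spacetimes from \cite{KSSV}. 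The closure of $U$ is compact, and $\partial U \cap \bar U^+$ lies on the past null cone $\partial I^-(p)$, so is achronal.

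Lemma \ref{HE7.4.4} applied on this $U$ then yields
\begin{equation*}
\|u\|_{\tilde H^1(\Sigma_\tau \cap U^+)} \leq K\bigl(\|u\|_{\tilde H^1(\Sigma_0 \cap U^+)} + \|f\|_{L^2(U_\tau)}\bigr),
\end{equation*}
and both terms on the right vanish by the disjointness of $\bar U^+$ from $K$: the traces of $u_0$ and $u_1$ on $\Sigma_0 \cap U^+$ are zero, and $f$ vanishes on $U_\tau \subset \bar U^+$. Since the same inequality holds at every intermediate time $0 \leq s \leq \tau$, I obtain $u \equiv 0$ on the open set $U \cap I^+(\Sigma_0) \ni q$. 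The analogous argument applied on $J^-(\Sigma_0)$ then completes the proof that $\supp(u) \subset J(K)$. I expect the main obstacle to be the separation-with-achronality step, i.e.\ enclosing the compact diamond $D$ in an open set whose top boundary remains achronal while still avoiding the (possibly non-compact) closed set $K$; this is precisely where the $C^{1,1}$ causality toolkit of \cite{KSSV} is crucial, since upper semi-continuity of $J$ and compactness of causal diamonds are no longer automatic below $C^2$ regularity.
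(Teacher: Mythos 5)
Your proposal is correct and takes essentially the same approach as the paper: both proofs fix $q$ outside $J(K)$, pick $p \in I^+(q)$ with $J^-(p) \cap J^+(\Sigma_0)$ still avoiding $K$, and apply the energy inequality of Lemma \ref{HE7.4.4} on $U = I^-(p) \cap I^+(\cdot)$ to conclude that $u$ vanishes on a neighbourhood of $q$. You supply a bit more detail than the paper on the existence of such a $p$ (via compactness of causal diamonds, closedness of $J$ in $C^{1,1}$ globally hyperbolic spacetimes, and separation from the closed set $K$), and you shift the bottom slice to $\Sigma_{-\delta}$, but neither changes the substance of the argument.
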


\begin{proof}
  We prove the result for $J^+$. A similar proof holds for $J^-$.  Let
  $V=J^+(\supp(u_0) \cup \supp(u_1) \cup \supp(f))$ and suppose $q \in
  M \setminus V$. Then we may find a point $p \in D^+(\Sigma)$ such that $q
  \in I^-(p) \cap I^+(\Sigma)$ and $u$ and $\nabla u$ vanish on $J^-(p)
  \cap \Sigma$ and $f$ vanishes on $J^-(p) \cap J^+(\Sigma)$. Now let $U=I^-(p)
  \cap I^+(\Sigma)$ and apply (\ref{energyin}) on this region to obtain
\begin{equation*}
||u||_{\tilde H^1(\Sigma_\tau \cap J^+(\Sigma_0) \cap I^-(p))} \leq K 
\left(||u||_{\tilde H^1(\Sigma_0 \cap J^-(p))} +||f||_{L^2(J^+(\Sigma)\cap J^-(p))}\right), \quad \hbox{for $0 \leq \tau \leq t(p)$}.
\end{equation*}

But by the choice of $p$ the right hand side vanishes, so $u$ must
also vanish in the region $U$ with $\tau$ in
the range $0 \leq \tau \leq t(p)$. So that $u$ vanishes on a neighbourhood of q. Since, $q$ was arbitrary $u$ vanishes on $M
\setminus V$ which proves the result.
\end{proof}

\bigskip

To establish existence on $M$ we need the following two Lemmas from
Ringstr\"om \cite{Ring}. In both cases the proof given in \cite{Ring}
for the smooth case goes through to that of a $C^{1,1}$ metric
unchanged.

\begin{Lemma}\label{12.5}{\rm (Ringstr\"om \cite[Lemma 12.5]{Ring})}\\
  Let $(M,g)$ be an $(n+1)$-dimensional Lorentzian manifold with
  $C^{1,1}$ metric and $\Sigma$ a smooth spacelike $n$-dimensional
  submanifold. If $p \in S$ there is a chart $(U,x)$ with $p \in U$
  and $x=(x^0,x^1,\ldots,x^n)$ such that $q \in U \cap \Sigma$ if and only
  if $q \in U$ and $x^0(q)=0$. Furthermore we may choose $x$ so that
  $\frac{\partial}{\partial x^0}$ is the future directed unit normal
  to $\Sigma$ for $q \in \Sigma \cap U$. 

  If we fix $\epsilon >0$ and let
  $g_{\mu\nu}:=g(\frac{\partial}{\partial
    x^\mu},\frac{\partial}{\partial x^\nu})$, then we can assume $U$
  to be such that $|g_{0i}| \leq \epsilon$ $i=1,\ldots,n$ on $U$. If
  we let $a=g_{00}(p)$ and $b>0$ be such that $g_{ij}(p)$ regarded as
  a positive definite matrix is bounded below by $b$ (i.e.
  $g_{ij}(p)\xi^i\xi^j>b|\xi|^2$) then we may assume that
  $g_{00}(q)<a/2$ and $g_{ij}(q)$ regarded as a positive definite
  matrix is bounded below by $b/2$ for $q \in U$.
\end{Lemma}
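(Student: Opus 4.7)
The plan is to construct the chart in two stages. First produce a smooth chart adapted to $\Sigma$, then perform a $C^{1,1}$ change of coordinates that aligns $\partial/\partial x^0$ with the future-directed unit normal along $\Sigma$; the metric bounds will then fall out from continuity of $g$.

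Stage one uses only the smooth structures of $M$ and $\Sigma$: the submanifold chart theorem gives a smooth chart $(V,y)=(y^0,\ldots,y^n)$ about $p$ with $\Sigma\cap V = \{y^0 = 0\}$ and $\partial/\partial y^0$ future-directed. In general $\partial/\partial y^0$ is transverse but not orthogonal to $\Sigma$. For stage two, since $g$ is $C^{1,1}$ and $\Sigma$ is smooth, the future-directed unit normal $N$ along $\Sigma\cap V$ is a $C^{1,1}$ vector field, which we expand as $N = \alpha\,\partial_{y^0} + \sum_i \beta^i\,\partial_{y^i}$ with $\alpha, \beta^i \in C^{1,1}(\Sigma\cap V)$ and $\alpha > 0$. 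Extend the coefficients to $V$ so that they are independent of $y^0$, and define
\begin{equation*}
x^0 := \frac{y^0}{\alpha}, \qquad x^i := y^i - \frac{\beta^i}{\alpha}\,y^0, \qquad i=1,\ldots,n.
\end{equation*}
The Jacobian of $(y^\mu)\mapsto(x^\mu)$ restricted to $\Sigma$ has determinant $1/\alpha > 0$, so the $C^{1,1}$ inverse function theorem supplies, after shrinking $V$, a $C^{1,1}$ chart $(U,x)$ with $\Sigma\cap U = \{x^0 = 0\}$. Differentiating the inverse transformation at $x^0 = 0$ yields $\partial/\partial x^0|_\Sigma = \alpha\,\partial_{y^0} + \sum_i \beta^i\,\partial_{y^i} = N$ and $\partial/\partial x^i|_\Sigma \in T\Sigma$, as required.

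The metric bounds are then immediate by continuity. Orthogonality of $N$ to $\Sigma$ gives $g_{0i}\equiv 0$ on $\Sigma\cap U$, while $g_{00}(p) = a$ and the spatial block $(g_{ij}(p))_{i,j\geq 1}$ equals the induced metric on $\Sigma$ at $p$, definite with lower bound $b$. Continuity of the $C^{1,1}$ components $g_{\mu\nu}$ and of the smallest eigenvalue of the (symmetric) spatial block lets us shrink $U$ further to secure all three inequalities in the statement simultaneously.

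The main conceptual point, and the one that distinguishes the $C^{1,1}$ setting from the classical smooth one, is that the chart $(U,x)$ will only be of class $C^{1,1}$ because the change of variables is built from the normal field $N$ whose regularity is limited by that of $g$. This is harmless for everything that follows: the energy estimates of Section~\ref{RN} and the Sobolev spaces used in the paper are invariant under $C^{1,1}$ coordinate changes at the regularity levels in play.
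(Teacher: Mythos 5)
Your two-stage construction matches what the paper (and Ringstr\"om's Lemma 12.5, to which the paper simply defers) intends; the paper offers no explicit proof and merely asserts that the smooth-case argument carries over. Your Jacobian calculation and the verifications that $\partial/\partial x^0|_\Sigma = N$ and $\partial/\partial x^i|_\Sigma \in T\Sigma$ are correct, so the construction of the chart is sound.

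The gap is in your final paragraph. You observe, correctly and usefully, that because $N$ inherits only the $C^{1,1}$ regularity of $g$, the chart $(U,x)$ is of class $C^{1,1}$ rather than smooth. But the claim that this is ``harmless for everything that follows'' because the relevant structures are ``invariant under $C^{1,1}$ coordinate changes'' fails for the \emph{coefficient regularity} of the wave operator. A $C^{1,1}$ change of variables has merely Lipschitz Jacobian; the pulled-back components $g_{\mu\nu}$ are therefore only $C^{0,1}$, the Christoffel symbols only $L^\infty$, and so the lower-order coefficients $a_\mu = g^{\alpha\beta}\Gamma^\mu_{\alpha\beta}$ of $\square_g$ are only $L^\infty$, which violates the $W^{1,\infty}$ hypotheses of Lemma \ref{energy_estimates} and Lemma \ref{weaksol}. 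The Remark following Lemma \ref{weaksol} makes precisely this point: for the Laplace--Beltrami operator one needs the metric to be $C^{1,1}$ \emph{in the chart}. A chart built as you describe therefore cannot be the one in which the energy estimates are run.

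What actually protects the paper is that the existence proof (Proposition \ref{ex1}) and the local energy estimate (Lemma \ref{tlocallemma}) invoke Lemma \ref{12.16}, not Lemma \ref{12.5}. That chart has $x^0 = t$ for a \emph{smooth} temporal function and makes no claim that $\partial/\partial x^0$ is the unit normal, so it may be taken smooth and the $C^{1,1}$ regularity of $g_{\mu\nu}$ is preserved. The $\epsilon$-smallness $|g_{0i}|\le\epsilon$ asserted in Lemma \ref{12.5} can also be obtained from a smooth chart: arrange $\partial/\partial x^0|_p = N(p)$ at the single point $p$ by a constant linear change of coordinates, then shrink $U$ using continuity of $g$. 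The honest conclusion from your observation is that insisting on exact normality along all of $\Sigma\cap U$, as the statement demands, costs one degree of chart regularity (and hence of metric regularity), and this trade-off must not be made where the energy estimates are formulated; it is not that the loss is harmless.
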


\begin{Lemma}\label{12.16}{\rm (Ringstr\"om \cite[Lemma 12.16]{Ring})}\\
  Let $(M,g)$ be a (connected, oriented, time oriented,) globally
  hyperbolic $(n+1)$-dimensional Lorentzian manifold with $C^{1,1}$
  metric and $\Sigma$ a smooth spacelike $n$-dimensional Cauchy
  surface. Let $t$ be a (smooth) temporal function (as given by
  \cite{BernardSuhr} \cite{MinguzziAddenda}) with $t^{-1}(0)=\Sigma_{t_0}$. If $p \in \Sigma_{t_0}$ there is an
  $\epsilon>0$ and open neighbourhoods $U$, $W$ of $p$ such that
\begin{enumerate}
\item the closure of $W$ is compact and contained in $U$;
\item if $q \in W$ and $\tau \in [t_0-\epsilon, t_0+\epsilon]$, then
  $J^+(\Sigma_\tau) \cap J^-(q)$ is compact and contained in $U$;
\item there is a chart $(U,\phi)$ with $\phi=(x^0,\ldots,x^n)$ and $x^0=t$,
  such that there exist $a,b>0$ with $g_{00}(q)<-a$ and
  $g_{ij}(q)\xi^i\xi^j\geq b\delta_{ij}\xi^i\xi^j$ for $q \in U$;
\item for any compact $K \subset U$ there is a  $ C^{1,1}$ matrix valued
  function $h$ on $\Real^{n+1}$ such the $h_{\mu\nu}=g_{\mu,\nu}\circ
  \phi^{-1}$ on $\phi^{-1}(K)$ and such that there are positive
  constants $a_1$, $b_1$ $c_1$ with $h_{00} \leq -a_1$,
  $h_{ij}\xi^i\xi^j \geq b_1\delta_{ij}\xi^i\xi^j$ and $|h_{\mu\nu}|
  \leq c_1$ on all of $\Real^{n+1}$.
\end{enumerate}
\end{Lemma}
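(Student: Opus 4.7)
My plan is to construct a smooth chart whose zeroth coordinate is the given temporal function $t$, then use compactness of causal diamonds in the $C^{1,1}$ globally hyperbolic setting to arrange items (1) and (2), and finally produce the global $C^{1,1}$ extension $h$ in (4) by multiplying by a smooth cutoff and patching with a fixed Minkowski metric.

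First I would pick smooth coordinates $y^1,\ldots,y^n$ on the smooth Cauchy surface $\Sigma_{t_0}$ near $p$, chosen so that the induced metric at $p$ is $\delta_{ij}$. Using the smooth product splitting $M\cong\R\times\Sigma$ (whose existence in the $C^{1,1}$ setting was recalled in Section~4.1 via \cite{conespace,MinguzziAddenda}) together with a smooth transverse extension, for instance by transporting the $y^i$ along the integral curves of any smooth vector field $X$ with $X(t)=1$ that is timelike at $p$, I obtain a smooth chart $\phi=(x^0,x^1,\ldots,x^n)$ on a neighborhood $U_0$ of $p$ with $x^0=t$. A linear change of the spatial coordinates which leaves $x^0$ fixed lets me additionally arrange $g_{00}(p)=-1$, $g_{0i}(p)=0$, and $g_{ij}(p)=\delta_{ij}$. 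Continuity of the $C^{1,1}$ coefficients $g_{\mu\nu}$ then yields the inequalities in (3) on a sufficiently small open neighborhood $U\subset U_0$ of $p$, for some $a,b>0$.

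Given $U$, I would take any precompact open neighborhood $W\Subset U$ of $p$, so that (1) is automatic, and turn to (2). Here I invoke global hyperbolicity of $(M,g)$ in the $C^{1,1}$ sense \cite{KSSV,saemann}: sets of the form $J^+(\Sigma_\tau)\cap J^-(q)$ are compact, and the multivalued map $(q,\tau)\mapsto J^+(\Sigma_\tau)\cap J^-(q)$ is upper semicontinuous with respect to set inclusion (the $C^{1,1}$ analogue of \cite[Lemma A.5.5]{bgp}, available through the causality results of \cite{KSSV}). Since at $(p,t_0)$ this diamond degenerates to $\{p\}\subset U$, an upper semicontinuity argument on the compact parameter space $\overline{W}_0\times[t_0-\epsilon_0,t_0+\epsilon_0]$ produces an $\epsilon>0$ and a possibly smaller $W$ so that $J^+(\Sigma_\tau)\cap J^-(q)\subset U$ and is compact for all $q\in W$ and $\tau\in[t_0-\epsilon,t_0+\epsilon]$. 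This is the step I expect to be the main obstacle: uniform control over both the base point $q$ and the slice parameter $\tau$ is what ties the chart construction to the causality theory, and it relies on the smoothness of $t$ (so the slices $\Sigma_\tau$ vary continuously) combined with the Lipschitz regularity of the causal cones of $g$.

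For (4), given a compact $K\subset U$ I would choose a smooth cutoff $\chi\in C_c^\infty(\phi(U))$ which is equal to $1$ on an open neighborhood of $\phi(K)$, and set
\[
h_{\mu\nu}(x):=\chi(x)\bigl(g_{\mu\nu}\circ\phi^{-1}\bigr)(x)+\bigl(1-\chi(x)\bigr)\eta_{\mu\nu},
\]
with $\eta=\mathrm{diag}(-1,1,\ldots,1)$. Since $g_{\mu\nu}\circ\phi^{-1}$ is $C^{1,1}$ on its domain and $\chi$ is smooth, $h$ is globally $C^{1,1}$ on $\R^{n+1}$ and coincides with $g_{\mu\nu}\circ\phi^{-1}$ on $\phi(K)$. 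By shrinking $U$ slightly if necessary so that the bounds in (3) hold with a uniform margin on $\phi^{-1}(\mathrm{supp}\,\chi)$, the positivity and boundedness inequalities of (4) follow from the convex combination with $\eta$, producing uniform constants $a_1,b_1,c_1>0$ as required.
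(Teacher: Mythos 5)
Your proposal is correct and follows the same route the paper indicates: reproduce Ringstr\"om's argument (chart adapted to the temporal function, causal diamond compactness for item (2), cutoff extension of the coefficients for item (4)) with the smooth causality results replaced by the $C^{1,1}$ ones of \cite{KSSV} and the smooth extension replaced by a $C^{1,1}$ one. One small technical slip worth flagging: a purely spatial linear change of coordinates (keeping $x^0 = t$ fixed) cannot alter $g_{00}(p)$ and cannot kill $g_{0i}(p)$, only normalise $g_{ij}(p)$; however this is harmless here, since item (3) imposes no condition on $g_{0i}$, the negativity $g_{00}(p)<0$ already follows from $\partial_{x^0}=X$ being timelike at $p$, and if one did want $g_{0i}(p)=0$ it is obtained by choosing $X(p)$ proportional to the future timelike normal of $\Sigma_{t_0}$ at $p$ rather than by a linear change.
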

Note that that although the proof is identical to that in \cite{Ring} it relies on the $C^{1,1}$  causality results of \cite{KSSV}  and \cite[Theorem 1]{BernardSuhr}. Note also that in point (4) above the matrix valued function is only
$C^{1,1}$ rather than smooth as it is in \cite[Lemma 12.16]{Ring}. 

We are now in a position to establish existence.

\begin{Proposition}\label{ex1}{\rm (Existence for compactly supported source and initial data)}\\
  Let $(M,g)$ be a time oriented $(n+1)$-dimensional Lorentzian
  manifold with $C^{1,1}$ metric and $\Sigma$ a smooth spacelike
  $n$-dimensional hypersurface. Let $t$ be a smooth temporal function with $t^{-1}(0)=\Sigma$ and let $n$ be the future directed
  timelike unit normal to $\Sigma$. 

  Given initial data $(u_0,u_1) \in H^2_{\text{comp}}(\Sigma) \times
  H^1_{\text{comp}}(\Sigma)$ and source $f \in H^1_{\text{comp}}(M)$,
  then there exists a (weak) solution $u \in C^0(\Real,
  H^2(\Sigma_t))\cap C^1(\Real, H^1(\Sigma_t)) \cap
  H_{\text{loc}}^2(M)$ to the initial value problem
\begin{align*}
\square_g u&=f \quad \hbox{on $M$,}\\
u&=u_0 \quad \hbox{on $\Sigma$,}\\
\nabla_nu&=u_1 \quad \hbox{on $\Sigma$.}
\end{align*}

\end{Proposition}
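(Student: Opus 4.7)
The plan is to construct $u$ locally in coordinate charts by reducing to the $\R^{n+1}$ problem handled by Theorem~\ref{Rn}, patch the local solutions together using the uniqueness statement Proposition~\ref{uniqueness} and the causal support statement Proposition~\ref{causalsupport}, and then iterate in the temporal direction to extend to all of $M$. This closely follows Ringstr\"om's strategy in \cite{Ring}, adapted to the $C^{1,1}$ setting using the causality results of \cite{KSSV} and Lemma~\ref{12.16}.

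First, I would cover the compact set $K := \supp(u_0) \cup \supp(u_1) \cup (\supp(f)\cap \Sigma)$ by finitely many sets $W_i \Subset U_i$ with interval widths $\epsilon_i > 0$ and chart maps $\phi_i = (t, x^1,\dots,x^n)$ supplied by Lemma~\ref{12.16}. The crucial output of part~(4) of that lemma is that the coordinate representation $g_{\mu\nu}\circ \phi_i^{-1}$ can be extended, off a compact subset of $\phi_i(U_i)$, to a matrix $h^{(i)}$ of $W^{1,\infty}(\R^{n+1})$ coefficients satisfying the hyperbolicity bounds \eqref{taumin}--\eqref{lambdamin} required by Theorem~\ref{Rn}. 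Using a partition of unity subordinate to $\{W_i\}$, split the data as $(u_0, u_1, f) = \sum_i (u_0^{(i)}, u_1^{(i)}, f^{(i)})$ with each piece supported in $\overline{W_i}$, push forward via $\phi_i$, extend by zero to all of $\R^{n+1}$, and invoke Theorem~\ref{Rn} to obtain unique weak solutions $\tilde v_i$ on $\R^{n+1}$ with controlled causal support. Pulling back and summing gives a candidate $u := \sum_i \phi_i^*\tilde v_i$ on the slab $M_\epsilon := t^{-1}([-\epsilon,\epsilon])$ with $\epsilon := \min_i \epsilon_i$. Items~(2) and~(4) of Lemma~\ref{12.16}, combined with the causal support property in Theorem~\ref{Rn}, ensure that the pulled-back supports stay inside $U_i$ throughout $M_\epsilon$, so the extended metric coefficients never influence the physical region; the regularity $u\in C^0(H^2)\cap C^1(H^1)\cap H^2_{\text{loc}}$ is inherited from Theorem~\ref{Rn}, and chart-independence on overlaps follows from Proposition~\ref{uniqueness} applied to the difference of two candidates (which has vanishing Cauchy data on $\Sigma$).

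To extend $u$ to all of $M$, take the trace $(u|_{\Sigma_\epsilon}, \nabla_n u|_{\Sigma_\epsilon}) \in H^2(\Sigma_\epsilon) \times H^1(\Sigma_\epsilon)$, which still has compact support by Proposition~\ref{causalsupport}, as new initial data on $\Sigma_\epsilon$ and repeat the local construction on $t^{-1}([\epsilon,2\epsilon])$; an analogous argument handles the past. The main obstacle is to guarantee that the step size $\epsilon_i$ supplied by Lemma~\ref{12.16} does not shrink to zero as one iterates, since otherwise the procedure would fail to reach all of $M$. This is resolved by observing that, on any finite temporal slab $t^{-1}([T_1,T_2])$ of the globally hyperbolic $C^{1,1}$ spacetime, the causal future of the (compactly supported) data intersected with that slab is relatively compact by the compact-causal-diamond property of \cite{KSSV}; hence only finitely many charts of the type in Lemma~\ref{12.16} are needed to cover it and they provide a common positive step size, so an induction on slabs extends the solution to every value of $t$.
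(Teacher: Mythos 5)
Your overall strategy is sound and deliberately mirrors Ringstr\"om's inductive scheme (which is exactly what the paper follows), but there is a concrete gap in the set you choose to cover at the very first step. You take $K := \supp(u_0)\cup\supp(u_1)\cup(\supp(f)\cap\Sigma)$, which is a compact subset of $\Sigma$, cover it by $\{W_i\}$, and then split the data by a partition of unity $\{\chi_i\}$ subordinate to $\{W_i\}$, writing $f = \sum_i \chi_i f$. Since $\sum_i\chi_i = 1$ only on (a neighbourhood of) $K$, this identity fails on the part of $\supp(f)$ lying inside the slab $M_\epsilon$ but away from $\Sigma$; for example, $f$ could be supported entirely in $t^{-1}((\epsilon/2,\epsilon))$ over a region of $\Sigma$ disjoint from $\supp(u_0)\cup\supp(u_1)$, in which case your $K$ is unaffected by $f$ and $\sum_i f^{(i)} = 0 \ne f$ there. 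Your candidate $u = \sum_i \phi_i^*\tilde v_i$ then solves $\square_g u = (\sum_i\chi_i) f$, not $\square_g u = f$. The paper avoids this by explicitly taking the compact set $K_1\cup K_3$ where $K_3 = \supp(f)\cap R_{t_1}$ is the source support in the time slab; you need to do the same, covering a compact subset of $M$ (not just of $\Sigma$) that contains the source support on the slab, and choosing a partition of unity on $M$ that sums to $1$ there.

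Aside from that repair, your route differs from the paper's in two minor but genuine ways, both of which are legitimate. First, the decomposition: you split the data by a partition of unity and obtain the global solution as a direct sum of the local solutions (using linearity), whereas the paper localises around individual points $p$ with cutoffs $\chi$ equal to $1$ on $K_p = J^-(p)\cap J^+(\Sigma_{s_1})$, solves in the chart, restricts to $V_p = I^-(p)\cap J^+(\Sigma_{s_1})$, and patches these local pieces via uniqueness (Proposition \ref{uniqueness}); the partition-of-unity device is used by the paper later, in the proof of Lemma \ref{tlocallemma}, but not here. Second, the termination argument: you argue for a uniform positive step size on each compact temporal slab via compactness and finitely many charts from Lemma \ref{12.16}, and then induct on slabs; the paper instead runs the standard continuity argument, showing that the set $\mathcal A$ of times up to which a solution exists is open, closed, and nonempty in $[0,\infty)$. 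Both arguments are standard and correct; the open-closed version sidesteps having to verify explicitly that the chart data of Lemma \ref{12.16} supply a single uniform $\epsilon$ applicable at every intermediate slice, which in your version requires a little more care to spell out (the $\epsilon_p$ in Lemma \ref{12.16} are attached to points on fixed level surfaces, so one has to apply the lemma at varying base times and take a joint minimum over a finite subcover of the full spacetime causal diamond).
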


\begin{proof}
  We again closely follow Ringstr\"om \cite[Theorem 12.17]{Ring}.\\
Let $K_1 \subset \Sigma$ be a compact set such that $\supp(u_0) \cup  \supp(u_1) \subset K_1$ and $K_2 \subset M$ a compact set such that $\supp(f) \subset K_2$. 
  Let $t_1>0$ and define $R_{t_1}$ to be the set of $q$ such that $0
  \leq t(q) \leq t_1$. Then $R_{t_1}$ is closed and $K_3=K_2 \cap
  R_{t_1}$ is compact. The union of $I^+(p)$ for $p \in I^-(\Sigma)$ is an
  open cover of $K_1 \cup K_3$ so there is a finite number of points
  $p_1,\ldots,p_\ell$ such that the $I^+(p_i)$ are finite subcover of
  $K_1 \cup K_3$. Note that the set $F=\bigcup_{i=1}^\ell J^+(p_i)
  \cap J^-(\Sigma_{t_1})$ is compact and that if there is a solution in
  $R_{t_1}$ it has to be zero in $R_{t_1} \setminus F$ by Proposition
  \ref{causalsupport}. We now show that there is a solution in the compact set
  $R_{t_1} \cap F$.  

  Let $F_\tau:=F \cap \Sigma_\tau$. Let $0 \leq \tau <t_1$ and assume we
  have a solution in the function space specified in the proposition up to time $\tau$,
  i.e. on $R_\tau$ or for every $R_s$ with $0 \leq s <\tau$. We now
  extend the solution to the future of $\tau$. For every $p \in F_\tau$
  there are neighbourhoods $U_p$, $W_p$ and $\epsilon_p$ with the
  properties of Lemma \ref{12.16}. By compactness there is a finite
  number of points $\tilde p_i,\ldots, \tilde p_N$ such that the
  $W_{\tilde p_i}$ cover $F_\tau$. Let $0 <\epsilon \leq
  \min\left\{\epsilon_{\tilde p_1},\ldots ,\epsilon_{\tilde
      p_N}\right\}$ be such that
  \begin{equation}
  F_s \subset \bigcup_{i=1}^NW_{\tilde p_i}
  \end{equation}
  for all $s \in [\tau-\epsilon, \tau+\epsilon]$. Now let $s_1 \in  [\tau-\epsilon, \tau]$ be such that there is a solution, in the
  function space specified in the proposition 
  up to and including $s_1$ and let $p \in F_s$ for any $s \in [s_1,
  \tau+\epsilon]$. Then $K_p:=J^-(p)\cap J^+(\Sigma_{s_1})$ is compact and
  contained in one of the charts, say $(U_{\tilde p_k}, \phi)$. Let
  $\chi \in C^\infty_0(U_{{\tilde p}_k})$ be such that $\chi(q)=1$ for
  all $q \in K_p$. Then we use our solution up to time $s_1$ to define
  new initial data on $\Sigma_{s_1}$ given by $\tilde u_0:=(\chi
  u)|_{\Sigma_{s_1}} \in H^2(\Sigma_{s_1})$ and $\tilde
  u_1:=(\chi\nabla_nu)|_{S_{s_1}} \in H^1(\Sigma_{s_1})$ and source $\tilde
  f:=\chi f$.  These all have their support within $U_{\tilde p_k}$ so
  we may use the chart $(U_{\tilde p_k}, \phi)$ to regard these are
  data and source on the whole of $\Real^n$ and $\Real^{n+1}$
  respectively. We may also extend the Lorentz metric $g_{\mu \nu}
  \circ \phi^{-1}$ to a Lorentz matrix-valued function $h_{\mu\nu}$ on
  the whole of $\Real^{n+1}$ which coincides with $g_{\mu\nu} \circ
  \phi^{-1}$ on $K_p$. We may therefore regard the tildered version as
  an initial value problem on $\Real^{n+1}$. The third condition of
  Lemma \ref{12.16} and the fact that the solution is in $C^0(\Real,
  H^2(\Sigma_t))\cap C^1(\Real, H^1(\Sigma_t)) \cap H_{\text{loc}}^2(M)$ ensures that
  we may apply Proposition \ref{weaksol} to obtain a solution on
  $\Real^{n+1}$ which on $\phi(U_{\tilde p_k})$ may be transferred back
  to give a solution on $K_p$.
In the region $V_p:=I^-(p) \cap J^+(\Sigma_{s_1})$ we
    define $u$ to be this solution.
In the region $V_p \cap V_q$ then uniqueness
  ensures that the two potential solutions coincide. We now define
  $O_1$ to be the union of the $V_p$ for $p \in F_s$ , $s \in
  [s_1,\tau+\epsilon]$ then the above construction defines a unique
  solution in $O_1$. Note that the interior of $O_1$ contains $F_s$
  for all $s \in (s_1,\tau+\epsilon)$. Now define $O_2$ to be the set
  of points for which $s_1 \leq t(q) <\tau+\epsilon$ for which $q
  \notin F$. We want to define the solution to be zero in this set,
  however we need to check that there is no contradiction for points
  in both $O_1$ and $O_2$. If $q \in O_2 \cap O_1$ with $t(q)>s_1$
  then both $u$ and $\nabla u$ vanish at $J^-(q) \cap S_{s_1}$ and $f$
  vanishes in $J^-(q) \cap J^+(\Sigma_{s_1})$. Furthermore there is an
  $r$ such that $q \in V_r \subset O_1$. So by uniqueness the solution
  defined on $O_1$ has to vanish for a sufficiently small neighbourhood at $q$.


  In summary we have shown that if there exists a solution
   for all $s<\tau$, or up to time $\tau$, in the
  required function space we get a solution in the same space on the
  larger region $R_{\tau+\epsilon}$ for some $\epsilon >0$.  Let
  $\mathcal A$ be the set of $s \in [0,\infty)$ such that there is a
    solution up to time $s$.  Taking $\tau=0$ in the
      above we have a solution on $R_{\epsilon}$ so $\mathcal A$ is
      not empty. We have also shown that if $\tau \in {\mathcal A}$ then a solution exists for $[0,\tau+\epsilon)$, so that for any $\tau>0$ we may find an open interval containing $\tau$ in which a solution exists. Thus $\mathcal A$ is open in the  relative topology of $ [0,\infty)$. Finally we note that by definition  $\mathcal A$ is also closed in $[0,\infty)$ because it
      contains its limit points. Then this set is open, closed
    and non-empty so must be the whole of $[0,\infty)$ and we have a
      solution for all future times. By time
      reversal we also have a solution for all past times and hence on  the whole of $M$.
\end{proof}

\begin{Theorem}\label{globalexist}{\rm (Global Existence and Uniqueness)}\\
  Let $(M,g)$ be a connected, oriented, time oriented
  $(n+1)$-dimensional Lorentzian globally hyperbolic manifold with
  $C^{1,1}$ metric and $\Sigma$ a smooth spacelike $n$-dimensional
  Cauchy hypersurface. Let $t$ be a smooth temporal function with
  $t^{-1}(0)=\Sigma$ and let $n$ be the future directed timelike unit
  normal to $\Sigma$.

  Given initial data $(u_0,u_1) \in H^2(\Sigma) \times H^1(\Sigma)$ and 
source  $f \in C^0(\R,H^1(\Sigma_t)) \cap H^1_{\text{loc}}(M)$
then there exists a unique (weak) solution $u \in C^0(\Real,
    H^2(\Sigma_t))\cap C^1(\Real, H^1(\Sigma_t) \cap H_{\text{loc}}^2(M)$ to the initial
    value problem
\begin{align*}
\square_g u&=f \quad \hbox{on $M$,}\\
u&=u_0 \quad \hbox{on $\Sigma$,}\\
\nabla_nu&=u_1 \quad \hbox{on $\Sigma$.}
\end{align*}
Moreover $\supp(u) \subset J(\supp(u_0) \cup \supp(u_1) \cup \supp(f))$.

\end{Theorem}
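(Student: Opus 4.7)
The plan is to bootstrap Proposition \ref{ex1} (which already handles compactly supported data and source) by a partition-of-unity decomposition, summing the resulting locally supported solutions and controlling the sum through the causal support property guaranteed by Proposition \ref{causalsupport}. Uniqueness is immediate from Proposition \ref{uniqueness}, and the causal support statement in the theorem then follows from Proposition \ref{causalsupport}, so the entire content of the proof is the existence of a solution for non-compactly-supported data and source.

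First I would choose a locally finite open cover $(W_i)_{i \in \N}$ of $\Sigma$ by relatively compact open sets with subordinate smooth partition of unity $(\chi_i)$, and a locally finite open cover $(U_j)_{j \in \N}$ of $M$ by relatively compact open sets with subordinate smooth partition of unity $(\phi_j)$. Then $\chi_i u_0 \in H^2_{\text{comp}}(\Sigma)$, $\chi_i u_1 \in H^1_{\text{comp}}(\Sigma)$, and $\phi_j f \in H^1_{\text{comp}}(M)$, using the smoothness and compact support of the partition functions together with the local regularity of $u_0, u_1, f$. For each $i$, Proposition \ref{ex1} applied to the initial data $(\chi_i u_0, \chi_i u_1)$ with zero source yields a solution $v^i \in C^0(\R, H^2(\Sigma_t)) \cap C^1(\R, H^1(\Sigma_t)) \cap H^2_{\text{loc}}(M)$ with $\supp(v^i) \subset J(\supp(\chi_i u_0) \cup \supp(\chi_i u_1)) \subset J(\overline{W_i})$. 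For each $j$, applying Proposition \ref{ex1} with zero initial data and source $\phi_j f$ gives $u^j$ in the same regularity class with $\supp(u^j) \subset J(\supp(\phi_j f)) \subset J(\overline{U_j})$.

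The candidate solution is $u := \sum_i v^i + \sum_j u^j$. The crux is that this sum is locally finite. For any compact set $K \subset M$, in a globally hyperbolic spacetime $J(K) \cap \Sigma$ is compact and $J(K) \cap \overline{U_j}$ is compact for each $j$; this is a consequence of the $C^{1,1}$ causality results of \cite{KSSV} recalled in Section \ref{GH}. By local finiteness of the covers, $J(K) \cap \Sigma$ meets only finitely many $W_i$, and $K$ meets $J(\overline{U_j})$ for only finitely many $j$. Consequently only finitely many $v^i$ and $u^j$ have support intersecting $K$, so on every compact subset the sum reduces to a finite one, the regularity $C^0(\R, H^2(\Sigma_t)) \cap C^1(\R, H^1(\Sigma_t)) \cap H^2_{\text{loc}}(M)$ is preserved, and the equation $\square_g u = \sum_i 0 + \sum_j \phi_j f = f$ together with the initial conditions $u|_\Sigma = \sum_i \chi_i u_0 = u_0$ and $\nabla_n u|_\Sigma = u_1$ hold in the distributional sense.

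The main obstacle I anticipate is the local finiteness argument and its compatibility with the regularity class, which reduces to verifying that $J(K) \cap \Sigma$ (and $J(K) \cap \overline{U_j}$) remains compact in the $C^{1,1}$ globally hyperbolic setting; although stated in \cite{KSSV}, some care is needed to invoke the right formulation. A subsidiary technical point is checking that the partition-of-unity pieces of $f$ genuinely lie in $H^1_{\text{comp}}(M)$ and satisfy any additional hypotheses implicit in Proposition \ref{ex1} (in particular continuity in the time variable is inherited from $f \in C^0(\R, H^1(\Sigma_t))$). Once these are in hand, existence follows and the causal support statement $\supp(u) \subset J(\supp(u_0) \cup \supp(u_1) \cup \supp(f))$ is read off from the individual support bounds of the $v^i$ and $u^j$ combined with the global support of the partitioned pieces.
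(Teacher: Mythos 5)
Your strategy — decompose the data by a partition of unity, solve each compactly supported piece via Proposition \ref{ex1}, and sum — is a genuinely different route from the paper's, which instead cuts off the data over causal diamonds $K_p = J^-(p) \cap J^+(\Sigma)$, solves for the cutoff data, sets $u := u'$ on $I^-(p) \cap J^+(\Sigma)$, and uses uniqueness (Proposition \ref{uniqueness}) to show the definitions agree on overlaps. Your route would be equally legitimate if the local finiteness of the sum were correctly established, but as written that step has a genuine gap.

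The problem is the assertion that ``$K$ meets $J(\overline{U_j})$ for only finitely many $j$,'' attributed to local finiteness of the cover $(U_j)$. Note that $K \cap J(\overline{U_j}) \neq \emptyset$ if and only if $\overline{U_j} \cap J(K) \neq \emptyset$. Local finiteness guarantees that only finitely many $\overline{U_j}$ meet a fixed \emph{compact} set, but $J(K) = J^+(K) \cup J^-(K)$ is closed, not compact: it is unbounded along null directions. In Minkowski $\R^{1,1}$ with $K = \{(0,0)\}$ and $(U_j)$ a cover by unit balls centred on integer lattice points, infinitely many $\overline{U_j}$ meet the light cone $J(K)$. (The observation that $J(K) \cap \overline{U_j}$ is compact for each $j$ is true but does not help; what one would need is compactness of $J(K)$, which fails.) The parallel argument for the initial-data terms $v^i$ is fine, because there the relevant set is $J(K) \cap \Sigma$, which \emph{is} compact since $\Sigma$ is a Cauchy surface. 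The gap is confined to the source terms $u^j$.

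The conclusion you want is in fact true, but it needs a sharper support bound than the one you quote. Because $u^j$ has zero initial data on $\Sigma$, one has the refined one-sided bound $\supp(u^j) \cap J^+(\Sigma) \subset J^+\bigl(\supp(\phi_j f) \cap J^+(\Sigma)\bigr)$ (and analogously to the past). Consequently $u^j$ is nonzero somewhere on a compact $K$ only if $\supp(\phi_j f)$ meets the set $\bigl(J^-(K) \cap J^+(\Sigma)\bigr) \cup \bigl(J^+(K) \cap J^-(\Sigma)\bigr)$, which \emph{is} compact by global hyperbolicity, and only then does local finiteness of $(U_j)$ give the finiteness you need. You would have to prove or at least explicitly invoke this one-sided refinement of Proposition \ref{causalsupport}. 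The paper's cutoff-and-patch construction avoids any summation, and hence sidesteps the local finiteness question entirely; that is the real economy of its approach.
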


\begin{proof}
  Let $p$ be any point to the future of $\Sigma$. Then $K_p=J^-(p) \cap
  J^+(\Sigma)$ is a compact set.  Let $\chi \in C^\infty_0(M)$ be such that
  $\chi(q)=1$ for all $q \in K_p$.  Now define $f'=\chi f$, $u_0'=\chi
  u_0$ and $u_1'=\chi u_1$.  Then by Proposition \ref{ex1} there is a
  unique solution $u'$ to the primed initial value problem. Now set
  $u=u'$ in $I^-(p) \cap J^+(\Sigma)$. If now  $r \in K_p \cap K_q$ then by
  uniqueness the two potential solutions agree, so there is no
  contradiction. Thus we have a solution on the whole of  $D^+(\Sigma)$. A similar argument gives us a solutions on $D^-(\Sigma)$ and hence on the whole of $M$. The
  solution is unique by Proposition (\ref{uniqueness}) and satisfies
  the causal support condition by Proposition (\ref{causalsupport}).

 \end{proof}

We also want to show that the initial value problem is
  well-posed. For solutions in $H^1_{\text{loc}}(M)$ this follows immediately
  from the energy estimate (\ref{enegyin}). But well-posedness in
  $H^2_{\text{loc}}(M)$ requires a higher order estimate which we now establish.

\begin{Lemma}\label{tlocallemma}
  For each compact subset $K \subset M$ there exists a $\delta > 0$  with the following property: If
   $(u_0,u_1) \in H^2(\Sigma) \times H^1(\Sigma)$
  with $\supp(u_j) \subset K \cap \Sigma$ ($j=1,2$) and $f \in H^1(M)$ with
  $\supp(f) \subset K$, then  the (weak) solution $u \in C^0(\Real,
    H^2(\Sigma_t))\cap C^1(\Real, H^1(\Sigma_t)) \cap H_{\text{loc}}^2(M)$ 
of  $\square_g u=f$ with initial data $(u_0,u_1)$ satisfies the energy inequality 
\begin{equation}\label{tlocaleqn}
||u||_{C^0([0,\delta],H^{2}(\Sigma_t))}+||u||_{C^1([0,\delta], H^{1}(\Sigma_t))} 
\le C \Big( \norm{ u_0}{H^2(\Sigma)}+ \norm{u_1}{H^1(\Sigma)}+  \norm{f}{H^1(M)}\Big).  
\end{equation}

\end{Lemma}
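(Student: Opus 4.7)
The plan is to localize the solution to finitely many coordinate charts supplied by Lemma \ref{12.16}, apply the $\Real^{n+1}$ energy estimate of Lemma \ref{energy_estimates} (whose extension to $C^{1,1}$ weak solutions is established in the proof of Lemma \ref{weaksol}) in each chart, and then glue the resulting bounds with a partition of unity. By Theorem \ref{globalexist} the solution exists and satisfies $\supp(u) \subset J(K)$, so all the analysis can be confined to the compact region $J(K) \cap R_\delta$, where $R_\delta := \{q \in M : 0 \leq t(q) \leq \delta\}$.

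First I would cover $K$ by finitely many neighbourhoods $W_{p_1},\ldots,W_{p_N}$ as furnished by Lemma \ref{12.16}, with associated charts $(U_{p_i},\phi_i)$ and time parameters $\epsilon_{p_i} > 0$. Then I choose a single $\delta \in (0,\min_i \epsilon_{p_i}]$ small enough that for every $q \in J(K) \cap R_\delta$ the past set $J^-(q) \cap J^+(\Sigma)$ lies inside some $U_{p_i}$; the existence of such a uniform $\delta$ rests on the compactness of $K$ together with the upper semicontinuity of the causal past in globally hyperbolic $C^{1,1}$ spacetimes (\cite{KSSV}). Fix a partition of unity $\{\chi_i\}$ subordinate to $\{W_{p_i}\}$ on a neighbourhood of $J(K) \cap R_\delta$ and decompose the data as $u_0 = \sum_i \chi_i u_0$, $u_1 = \sum_i \chi_i u_1$, $f = \sum_i \chi_i f$. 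In each chart, property (4) of Lemma \ref{12.16} extends the coordinate expression of $g$ to a $C^{1,1}$ Lorentzian structure on all of $\Real^{n+1}$ satisfying \eqref{taumin} and \eqref{lambdamin} with uniform constants depending only on $K$, and the $i$-th pushed-forward Cauchy problem then falls under Theorem \ref{Rn}.

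By Proposition \ref{uniqueness} together with Proposition \ref{causalsupport}, the pullback of the $i$-th $\Real^{n+1}$-solution coincides on the relevant domain of dependence with the contribution of $\chi_i u_0, \chi_i u_1, \chi_i f$ to $u$. The $\Real^{n+1}$ energy estimate of Lemma \ref{energy_estimates} (applied through the Cauchy-net argument of Lemma \ref{weaksol}) yields for each $i$
\begin{equation*}
\| u^{(i)} \|_{C^0([0,\delta],H^2(\Real^n))} + \| u^{(i)} \|_{C^1([0,\delta],H^1(\Real^n))} \leq C_i e^{\beta_i \delta} \Big( \norm{\chi_i u_0}{H^2(\Sigma)} + \norm{\chi_i u_1}{H^1(\Sigma)} + \norm{\chi_i f}{H^1(M)} \Big),
\end{equation*}
with $C_i,\beta_i$ depending only on the uniform bounds of Lemma \ref{12.16}(4) on the $i$-th chart. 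Since the restricted $H^k(\Sigma_t)$ norms are equivalent to the chart-wise $H^k(\Real^n)$ norms of the pullbacks on the compact region in question (by the bounded geometry of the charts), summing over the finite index set and absorbing the cutoffs into constants depending only on the fixed partition of unity produces \eqref{tlocaleqn}.

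The main obstacle I anticipate is the uniformity of $\delta$ with respect to the data: a single $\delta$ must work for every admissible $(u_0,u_1,f)$ with support in $K$, which forces the choice of $\delta$ to depend only on $K$ and the geometry of the finite cover, not on the data. This is where the compactness of $K$ and the $C^{1,1}$ causality results of \cite{KSSV} are essential. A secondary technical point, easier to handle, is the bookkeeping needed to check that the chart-wise $H^k$ equivalences and the partition-of-unity cutoff estimates combine into a single constant $C = C(K)$ in front of the three norms on the right-hand side of \eqref{tlocaleqn}.
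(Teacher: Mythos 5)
Your overall strategy---covering $K$ by finitely many charts from Lemma~\ref{12.16}, cutting the data with a subordinate partition of unity, solving each local problem after transferring to $\Real^{n+1}$, applying the basic energy estimate of Lemma~\ref{energy_estimates}, and summing---matches the paper's proof. However, the condition you impose to choose a uniform $\delta$ does not actually deliver the localisation you need, and this is a genuine gap as written.

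You ask that for every $q \in J(K)\cap R_\delta$ the truncated past cone $J^-(q)\cap J^+(\Sigma)$ lies inside some single $U_{p_i}$. Even granting existence of such a $\delta$, this does not place the support of the $j$-th localised solution $u_j$ inside the $j$-th chart $U_{p_j}$: for $q\in\supp(u_j)\subset J(K_j)$ you only get $J^-(q)\cap J^+(\Sigma)\subset U_{p_{i(q)}}$ for \emph{some} index $i(q)$, which need not coincide with $j$, so you cannot identify a single chart in which to transfer the $j$-th problem. What is actually needed---and what the paper extracts directly from Lemma~\ref{12.16}(2), combined with the causal support condition of Proposition~\ref{causalsupport}---is that for each $j$, with $K_j:=\supp(\chi_j u_0)\cup\supp(\chi_j u_1)\cup\supp(\chi_j f)\subset W_{p_j}$ the \emph{domain of influence} of $K_j$ stays inside $U_{p_j}$ for a uniform time interval, i.e.\ $\bigl((-2\epsilon_{p_j},2\epsilon_{p_j})\times\Sigma\bigr)\cap J(K_j)\subset U_{p_j}$, whence $\supp(u_j)\cap R_{2\delta}\subset U_{p_j}$ with $\delta:=\min_j\epsilon_{p_j}$. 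This is the property that makes each $u_j$ an $\Real^{n+1}$-problem via a single chart and Lemma~\ref{12.16}(4). Replacing your condition by this one, the remainder of your argument (linearity and uniqueness give $u=\sum_j u_j$, chart-wise norm equivalence, partition-of-unity bookkeeping) coincides with the paper's proof.
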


\begin{proof} We use a similar approach to that in the existence proof given in \cite[Theorem 3.2.11]{bgp}. \\
  For every $p \in K$ there are neighbourhoods $U_p$, $W_p$ and
  $\epsilon_p > 0$ with the properties of Lemma \ref{12.16}. By
  compactness there is a finite number of points $p_1,\cdots, p_N$
  such that the corresponding $W_{p_j}$ cover $K$. Now let
  $\{\chi_j\}_{j=1}^N$ be a partition of unity of $K$ subordinate to
  the $W_{p_j}$.

  We now define
\begin{equation*}
u_{0,j}:=\chi_ju_0, \qquad u_{1,j}:=\chi_ju_1, \qquad f_j:=\chi_j f.
\end{equation*}
So that 
\begin{equation*}
u_0=\sum_{j=1}^N u_{0,j}, \qquad u_1=\sum_{j=1}^N u_{1,j}, \qquad f=\sum_{j=1}^N f_j.
\end{equation*}
We also define
\begin{equation*}
K_j:=\supp(u_{0,j}) \cup \supp(u_{1,j}) \cup \supp(f_j) \subset K \cap \bar \supp(\chi_j) \subset W_{p_j}.
\end{equation*} 
Let $u_j$ be the (weak) solution of the IVP
\begin{equation*}
\square u_j=f_j, \qquad u_j|_{\Sigma} = u_{0,j}, \qquad \nabla_n u_j|_{\Sigma}=u_{1,j}.
\end{equation*}

We will employ (an implicit choice of a temporal function in terms of)
the diffeomorphism $M \cong \R \times \Sigma$ and slightly abuse
notation from now on by considering all functions $u$, $u_j$ etc.\ to
be defined already on products $I \times \Sigma$, where $I$ is some
open real interval, thus suppressing the transfers of functions via
restrictions of the underlying global diffeomorphism.

By point two of Lemma \ref{12.16} there exists an $\epsilon_{p_j}>0$ such that 
\begin{equation}\label{chart}
\left((-2\epsilon_{p_j},2\epsilon_{p_j}) \times \Sigma \right) \cap J(K_j) \subset U_{p_j}.
\end{equation}

Let $\delta=\min\{\epsilon_{p_1},\cdots, \epsilon_{p_N}\}$. Given the solutions
$u_j$ of the local problem we may extend them by zero on all of
$(-2\delta,2\delta) \times \Sigma$ and sum them to give our \emph{unique} solution
\begin{equation*}
u=\sum_{j=1}^N u_j, \hbox{ on } (-2\delta,2\delta) \times \Sigma.
\end{equation*}
Since by (\ref{chart}) each of the $u_j$ lie entirely within some chart $(U_{p_j},
\phi_{p_j})$ we may regard the initial value problem as one on $I \times
\Real^n$ where $I$ is  an interval chosen sufficiently large such that the images of $(-2\delta,2\delta) \times \Sigma$ under all the $\phi_{p_j}$ are contained in $I \times \Real^n$.
 
Then the third condition of Lemma \ref{12.16} enables us to transfer the basic energy estimate according to Lemma \ref{energy_estimates} from $I \times \Real^n$ to ones for $u_j$ on $U_j \subset M$ to give
\begin{equation*}
||u_j||_{C^0([0,\delta],H^{2}(\Sigma))}+||u_j||_{C^1([0,\delta], H^{1}(\Sigma))} 
 \le C_j\left( \norm{u_{0,j}}{H^2(\Sigma)}+ \norm{u_{1,j}}{H^1(\Sigma)}+  \norm{f_j}{L^2([0,\delta], 
H^1(\Sigma))}\right).
\end{equation*}
Now $u=\sum_ju_j$ so that
\begin{align*}
||u||_{C^0([0,\delta],H^{2}(\Sigma))}+||u||_{C^1([0,\delta], H^{1}(\Sigma))}
&\le \sum_{j=1}^N\left\{||u_j||_{C^0([0,\delta],H^{2}(\Sigma))}+||u_j||_{C^1([0,\delta], H^{1}(\Sigma))} \right\}\\
& \le \sum_{j=1}^N C_j\left( \norm{ u_{0,j}}{H^2(\Sigma)}+ \norm{u_{1,j}}{H^1(\Sigma)}+  \norm{f_j}{L^2([0,\delta], H^1(\Sigma))}\right)\\
& \le \sum_{j=1}^N C_j\left( \norm{ u_{0}}{H^2(\Sigma)}+ \norm{u_{1}}{H^1(\Sigma)}+  \norm{f}{L^2([0,\delta], H^1(\Sigma))}\right)\\
& \le C \left( \norm{ u_{0}}{H^2(\Sigma)}+ \norm{u_{1}}{H^1(\Sigma)}+  \norm{f}{L^2([0,\delta], 
H^1(\Sigma))}\right),
\end{align*}
where we may replace $\norm{f}{L^2([0,\delta], H^1(\Sigma))}$ by the larger value $\norm{f}{H^1(M)}$, since  $f  \in H^1_{\text{comp}}(M)$.
\end{proof}

We remark that in the above proof (and formulation of the result) we have replaced the norm $\norm{f}{L^2([0,\delta], H^1(\Sigma))}$ by $\norm{f}{H^1(M)}$, which is valid for $f  \in H^1_{\text{comp}}(M)$, to avoid the need to specify a particular choice of a temporal function.

\begin{Proposition}\label{globalv2}{\rm (Global higher energy estimates)}\\
  Let $(M,g)$ be a time oriented $(n+1)$-dimensional Lorentzian
  manifold with $C^{1,1}$ metric and $\Sigma$ a smooth spacelike
  $n$-dimensional hypersurface. Let $t$ be a smooth temporal function with $\Sigma=t^{-1}(0)$ and let $n$ be the future directed
  timelike unit normal to $\Sigma$. 

Given initial data  $(u_0,u_1) \in H^2_{\text{comp}}(\Sigma) \times H^1_{\text{comp}}(\Sigma)$
and source $f
    \in H^{1}_{\text{comp}}(M))$, then
    the (weak) solution $u \in C^0(\Real,
    H^2(\Sigma_t))\cap C^1(\Real, H^1(\Sigma_t)) \cap H_{\text{loc}}^2(M)$ satisfies

\begin{equation*}
||u||_{C^0([0,T],H^{2}(\Sigma))}+||u||_{C^1([0,T], H^{1}(\Sigma))} 
\le C \Big( \norm{ u_0}{H^2(\Sigma)}+ \norm{u_1}{H^1(\Sigma)}+  ||f||_{H^{1}(M)}\Big)  
\end{equation*}
for any  interval $[0,T]$.

\end{Proposition}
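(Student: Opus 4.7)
The plan is to bootstrap the short-interval estimate of Lemma \ref{tlocallemma} along the foliation $\{\Sigma_t\}_{t\in[0,T]}$ by a finite chaining argument, using the causal support property of Theorem \ref{globalexist} to keep all the relevant supports inside a single compact subset of $M$.

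First, I would fix the compact set
\[
K := J\bigl(\supp(u_0)\cup\supp(u_1)\cup\supp(f)\bigr)\cap\bigl([0,T]\times\Sigma\bigr),
\]
which is compact by global hyperbolicity of $(M,g)$ together with $C^{1,1}$ causality theory \cite{KSSV}. By Theorem \ref{globalexist} the solution $u$ satisfies $\supp(u)\subset J(\supp(u_0)\cup\supp(u_1)\cup\supp(f))$, so in particular the traces $u(t,\cdot)$ and $\partial_t u(t,\cdot)$ on each Cauchy slice $\Sigma_t$ with $t\in[0,T]$ have support in $K\cap\Sigma_t$. Choosing a slightly larger compact neighbourhood $\tilde K\supset K$, I would then invoke Lemma \ref{tlocallemma} to obtain a fixed $\delta>0$ with the property that the local energy estimate \eqref{tlocaleqn} holds for any Cauchy problem whose data and source have supports inside $\tilde K$.

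Next, I partition $[0,T]$ into finitely many subintervals $0=t_0<t_1<\cdots<t_N=T$ with $t_{k}-t_{k-1}\le\delta$, where $N\le \lceil T/\delta\rceil+1$. On the $k$-th slab $[t_{k-1},t_k]\times\Sigma$ the restriction of $u$ is the unique weak solution of
\[
\square_g w = f,\qquad w|_{\Sigma_{t_{k-1}}}=u(t_{k-1},\cdot),\qquad \nabla_n w|_{\Sigma_{t_{k-1}}}=\partial_t u(t_{k-1},\cdot),
\]
the initial data and source having supports contained in $\tilde K$. Shifting the origin of the temporal function to $t_{k-1}$ (the proof of Lemma \ref{tlocallemma} only uses the local chart structure from Lemma \ref{12.16}, which is available uniformly on $\tilde K$), I can apply \eqref{tlocaleqn} to obtain
\[
\|u\|_{C^0([t_{k-1},t_k],H^2(\Sigma_t))}+\|u\|_{C^1([t_{k-1},t_k],H^1(\Sigma_t))}
\le C\Bigl(\|u(t_{k-1},\cdot)\|_{H^2(\Sigma)}+\|\partial_t u(t_{k-1},\cdot)\|_{H^1(\Sigma)}+\|f\|_{H^1(M)}\Bigr),
\]
with a constant $C$ independent of $k$ (this is the crucial uniformity point: it follows because Lemma \ref{tlocallemma} depends only on the compact set $\tilde K$ and the $C^{1,1}$ geometry on it).

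Finally, I would iterate. Writing $A_k:=\|u(t_k,\cdot)\|_{H^2(\Sigma)}+\|\partial_t u(t_k,\cdot)\|_{H^1(\Sigma)}$, the slab estimate above yields
\[
A_k\le C\bigl(A_{k-1}+\|f\|_{H^1(M)}\bigr),
\]
and a straightforward induction gives $A_k\le C^k A_0+(C^k-1)\|f\|_{H^1(M)}/(C-1)$ for $1\le k\le N$. Summing the slab estimates for $k=1,\dots,N$ and absorbing the geometric factors into a single constant $C_T$ depending on $T$ (via $N\sim T/\delta$) yields the desired global bound
\[
\|u\|_{C^0([0,T],H^2(\Sigma_t))}+\|u\|_{C^1([0,T],H^1(\Sigma_t))}\le C_T\bigl(\|u_0\|_{H^2(\Sigma)}+\|u_1\|_{H^1(\Sigma)}+\|f\|_{H^1(M)}\bigr).
\]
The main obstacle I expect is the bookkeeping needed to verify that the constant in the local estimate really is uniform in the base slice $\Sigma_{t_{k-1}}$: one must check that the covering of $\tilde K$ by the neighbourhoods supplied by Lemma \ref{12.16}, together with the associated partition of unity and the extension of the coefficients to $\R^{n+1}$, can be arranged so that the constants in Lemma \ref{energy_estimates} depend only on $\tilde K$ and not on which slice plays the role of ``initial'' surface. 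Once this uniformity is secured, the rest is the routine finite iteration sketched above.
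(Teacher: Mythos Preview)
Your proposal is correct and follows essentially the same strategy as the paper: iterate the short-time estimate of Lemma~\ref{tlocallemma} across finitely many slabs covering $[0,T]$, using the trace of the solution on each intermediate slice as new Cauchy data and then combining the resulting inequalities. The only organizational difference is that the paper, instead of fixing a single compact set $\tilde K$ and extracting one uniform $\delta$, applies Lemma~\ref{tlocallemma} at each $\tau\in[0,T]$ to obtain a $\tau$-dependent $\delta(\tau)$ and then invokes compactness of $[0,T]$ to pass to a finite subcover; this neatly sidesteps the uniformity concern you flag at the end, since with only finitely many slabs one can simply take the maximum of the finitely many constants.
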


\begin{proof}
  We first use Lemma \ref{tlocallemma} to obtain an estimate for the
  data $\hat u_0:=u|_{\Sigma_\delta}$ and $\hat u_1 :=\nabla _n
  u|_{\Sigma_\delta}$ induced by $u$ on $\Sigma_\delta$. It follows
  from (\ref{tlocaleqn}) and the fact that that $u \in
  C^0(\Real,H^2(\Sigma_t))\cap C^1(\Real, H^1(\Sigma_t)) \cap
  H_{\text{loc}}^2(M)$ that
\begin{equation*}
\norm{\hat u_0}{H^2(\Sigma)}+ \norm{\hat u_1}{H^1(\Sigma)}
\le \tilde C \left(\norm{ u_0}{H^2(\Sigma)}+ \norm{u_1}{H^1(\Sigma)}+  ||f||_{H^{1}([0,\delta] \times \Sigma)} \right). 
\end{equation*}
Now applying Lemma \ref{tlocallemma} to the
initial surface $\Sigma_\delta$ we obtain a $\hat \delta >\delta$ such
that
\begin{multline*}
||u||_{C^0([\delta,\hat\delta],H^{2}(\Sigma))}+||u||_{C^1([\delta,\hat\delta], H^{1}(\Sigma))} 
\le \hat C \left( \norm{\hat u_0}{H^2(\Sigma)}+ \norm{\hat u_1}{H^1(\Sigma)}+  ||f||_{H^{1}([\delta, \hat\delta] \times \Sigma)}\right) \nonumber \\
\le \hat C \left\{\tilde C \left(\norm{ u_0}{H^2(\Sigma)}+ \norm{u_1}{H^1(\Sigma)}+  ||f||_{H^{1}([0,\delta] \times \Sigma)} \right)
+ ||f||_{H^{1}([\delta, \hat\delta] \times \Sigma)}\right\} \nonumber \\
\le C_1 \left(\norm{ u_0}{H^2(\Sigma)}+ \norm{u_1}{H^1(\Sigma)}+  ||f||_{H^{1}([0,\hat\delta] \times \Sigma)} \right).
\end{multline*}

Combining the two energy inequalities on $[0,\delta]$ and $[\delta, \hat\delta]$ we have
\begin{multline*}
||u||_{C^0([0,\hat\delta],H^{2}(\Sigma))}+||u||_{C^1([0,\hat\delta], H^{1}(\Sigma))} \\
 \le ||u||_{C^0([0,\delta],H^{2}(\Sigma))}+||u||_{C^1([0,\delta], H^{1}(\Sigma))} +
||u||_{C^0([\delta ,\hat\delta],H^{2}(\Sigma))}+||u||_{C^1([\delta,\hat\delta], H^{1}(\Sigma))} \\
\le C \left( \norm{ u_0}{H^2(\Sigma)}+ \norm{\tilde u_1}{H^1(\Sigma)}+  ||f||_{H^{1}([0, \delta] \times \Sigma)}\right) + C_1 \left(\norm{ u_0}{H^2(\Sigma)}+ \norm{u_1}{H^1(\Sigma)}+  ||f||_{H^{1}([0,\hat\delta] \times \Sigma)} \right) \\
\le C_2 \left( \norm{ u_0}{H^2(\Sigma)}+ \norm{\tilde u_1}{H^1(\Sigma)}+  ||f||_{H^{1}([0, \hat\delta] \times \Sigma)}\right).
\end{multline*}
This shows that we may extend the energy inequality from $[0, \delta]$
to the larger time interval $[0,\hat\delta]$  by repeatedly applying Lemma \ref{tlocallemma}. 

Similarly, for every  $\tau\in [0,T]$ we may find  a $\delta(\tau)>0$ such that
Lemma \ref{tlocallemma} applies to the time interval 
$(\tau-\delta(\tau), \tau+\delta(\tau))$  with initial data given on $\Sigma_\tau$. 
By compactness, finitely many intervals 
$(\tau_k-\delta(\tau_k),\tau_k+\delta(\tau_k))$ ($k=1,...,m$) cover $[0,T]$ and the energy inequalities on these may be combined.
\end{proof}

We now use the above proposition to obtain a spacetime energy inequality

\begin{Proposition}(Higher order spacetime energy estimates) \\
  Let $(M,g)$ be a time oriented $(n+1)$-dimensional Lorentzian
  manifold with $C^{1,1}$ metric and $\Sigma$ a smooth spacelike
   $n$-dimensional hypersurface. Given initial data $(u_0,u_1) \in
  H^2_{\text{comp}}(\Sigma) \times H^1_{\text{comp}}(\Sigma)$  and source $f \in H^{1}_{\text{comp}}(M)$, then the (weak)
  solution $u$ satisfies the energy-inequality
\begin{equation}\label{stee}
||u||_{H^{2}(K)}\le C \Big( \norm{ u_0}{H^2(\Sigma)}+ \norm{u_1}{H^1(\Sigma)}+  
\norm{f}{H^{1}(M)}\Big).  
\end{equation}
for any compact $K\subset M$.
\end{Proposition}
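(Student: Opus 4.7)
My strategy is to use Proposition \ref{globalv2} to control all spacetime derivatives of $u$ up to order two \emph{except} for the pure second time derivative $\partial_t^2 u$, and then to recover $\partial_t^2 u$ from the equation itself. Via the product decomposition $M \cong \R \times \Sigma$ induced by the temporal function $t$, any compact $K \subset M$ is contained in a slab $[-T, T] \times \Sigma_K$ for some $T > 0$ and compact $\Sigma_K \subset \Sigma$. Applying Proposition \ref{globalv2} on $[0, T]$ together with its time-reversed analogue on $[-T, 0]$ yields
$$\|u\|_{C^0([-T,T], H^2(\Sigma))} + \|u\|_{C^1([-T,T], H^1(\Sigma))} \le C_1 \big(\|u_0\|_{H^2(\Sigma)} + \|u_1\|_{H^1(\Sigma)} + \|f\|_{H^1(M)}\big).$$
Via the continuous embedding $C^0([-T,T]; X) \hookrightarrow L^2([-T,T]; X)$, this bounds the $L^2(K)$ norms of $u$, of all its first spacetime derivatives, of its pure second spatial derivatives, and of all mixed second derivatives containing exactly one $\partial_t$.

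To control the remaining $\|\partial_t^2 u\|_{L^2(K)}$, I would cover $\Sigma_K$ by finitely many charts of the type provided by Lemma \ref{12.16}, localise using a subordinate partition of unity, and work chart by chart. In each adapted chart $(t, x^1, \ldots, x^n)$ the equation $\square_g u = f$ reads
$$g^{00}\partial_t^2 u = f - 2\sum_{i=1}^n g^{0i}\partial_t \partial_i u - \sum_{i,j=1}^n g^{ij}\partial_i\partial_j u - \sum_{\mu=0}^n b^\mu \partial_\mu u,$$
where $b^\mu$ is built from first derivatives of the metric. Since $g \in C^{1,1}$, each $g^{\mu\nu}$ and $b^\mu$ lies in $L^\infty$, and Lemma \ref{12.16}(3) provides a uniform positive lower bound on $|g^{00}|$ on the compact set. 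Hence, chart by chart,
$$\|\partial_t^2 u\|_{L^2(K)} \le C_2 \Big(\|f\|_{L^2(K)} + \sum_{|\alpha|\le 2,\; \alpha_0\le 1} \|\partial^\alpha u\|_{L^2(K)}\Big),$$
and every term on the right is already controlled by the first step, together with $\|f\|_{L^2(K)} \le \|f\|_{H^1(M)}$. Summing the partition-of-unity contributions and combining with the derivatives bounded in the first step yields the desired estimate $\|u\|_{H^2(K)} \le C\big(\|u_0\|_{H^2(\Sigma)} + \|u_1\|_{H^1(\Sigma)} + \|f\|_{H^1(M)}\big)$.

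The main technical obstacle is the chart-wise algebraic recovery of $\partial_t^2 u$ from the equation: one must manage the coordinate bookkeeping carefully and check that the partition-of-unity localisation introduces only commutator terms of order $\leq 2$ with at most one $\partial_t$, which are already controlled by the first step. The low regularity of the metric causes no difficulty here, since $u \in H^2_{\text{loc}}(M)$ a priori (by Theorem \ref{globalexist}) implies $\partial_t^2 u \in L^2_{\text{loc}}(M)$ as a distribution and the equation $\square_g u = f$ holds almost everywhere; the $C^{1,1}$ hypothesis is exactly enough to keep the coefficients $g^{\mu\nu}$ and $b^\mu$ in $L^\infty$, while Lemma \ref{12.16}(3) provides precisely the lower bound on $|g^{00}|$ needed for the inversion.
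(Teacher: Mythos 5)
Your proposal is correct and follows essentially the same route as the paper: use Proposition \ref{globalv2} to control $u$, all first derivatives, the pure spatial second derivatives, and the mixed second derivatives involving one time derivative, and then isolate $g^{00}\partial_t^2 u$ from the equation $\square_g u = f$ and divide by $g^{00}$ to recover the remaining second time derivative. The paper does this directly in the product coordinates supplied by the temporal splitting $M \cong \R \times \Sigma$, writing $\|g^{00}\partial_{tt}u\|_{L^2(K)} = \|f + (-g^{ti}\partial_t\partial_i - g^{ij}\partial_i\partial_j + g^{\alpha\beta}\Gamma^\gamma_{\alpha\beta}\partial_\gamma)u\|_{L^2(K)}$ and invoking the boundedness of the inverse metric and Christoffel symbols in $L^\infty$ on the compact set; you instead carry out the same inversion chart by chart with a subordinate partition of unity and cite Lemma \ref{12.16}(3) explicitly for the uniform lower bound on $|g^{00}|$. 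The chart-wise localisation is a harmless extra layer of bookkeeping (and is arguably more careful, since it makes the source of the bound on $g^{00}$ explicit and handles the fact that the spatial coordinates are only local on $\Sigma$), but the underlying idea and the resulting estimate are identical. Your remark that you also need the time-reversed analogue of Proposition \ref{globalv2} on $[-T,0]$ is correct and corresponds to the paper's tacit ``without loss of generality $K \subset [0,T]\times\Sigma$''.
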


\begin{proof}
Without loss of generality we may assume that $K\subset[0,T]\times \Sigma$. 
Due to the regularity of $u \in C^0(\Real, H^2(\Sigma_t))\cap
  C^1(\Real, H^1(\Sigma_t))$ we have control of the second order
  spatial derivatives, the second order mixed derivatives and the lower order terms
\begin{align}
\max(||\partial_i \partial_j u||_{L^{2}(K)},||\partial_j u||_{L^{2}(K)})&\le C \left(||u||_{C^0([0,T],H^{2}(\Sigma))}\right),\label{h2estimatelow1} \\
\max(||\partial_i \partial_t u||_{L^{2}(K)},||\partial_t u||_{L^{2}(K)})&\le C \left(||u||_{C^1([0,T],H^{1}(\Sigma))}\right),
\label{h2estimate2}
\end{align}

In order to obtain the required estimate we also need to control the $\partial_{tt}u$ in the $L^{2}(K)$ norm.
Using $\square_g u=f$ we have
\begin{equation*}
||g^{00}\partial_{tt}u||_{L^{2}(K)}=||f+(-g^{ti}\partial_t\partial_i-g^{ij}\partial_j\partial_i+g^{\alpha\beta}\Gamma^\gamma_{\alpha\beta}\partial_\gamma) u||_{L^{2}(K)}   
\end{equation*}

From \eqref{h2estimatelow1},\eqref{h2estimate2}, the regularities of
$f$ and $g$, we obtain an $L^2$ estimate for $\partial_{tt}u$,
\begin{align*}
||\partial_{tt}u||_{L^{2}(K)}&=C_1||f+(-g^{ti}\partial_t\partial_i-g^{ij}\partial_j\partial_i+g^{\alpha\beta}\Gamma^\gamma_{\alpha\beta}\partial_\gamma) u ||_{L^{2}(K)}  \\
 &\le C_2\left(  ||u||_{C^0([0,T],H^{2}(\Sigma))}+||u||_{C^1([0,T], H^{1}(\Sigma))} +||f||_{L^{2}(K)}\right).
\end{align*}
Combining the above with Proposition \ref{globalv2} completes the proof.
\end{proof}

Equation (\ref{stee}) implies the following result.
\begin{Corollary}\label{sol}
  The solution to the Cauchy problem described in Theorem
  \ref{globalexist} is well-posed in the sense that the
  \emph{solution map} 
$$ 
\text{Sol}\colon H^2(\Sigma) \times H^1(\Sigma) \times H^1_{\text{comp}}(M) \to H^2_{\text{loc}}(M), 
(u_0, u_1,  f) \mapsto  u,
$$
is continuous  in the topologies coming from the respective Sobolev spaces (see Appendix B).
\end{Corollary}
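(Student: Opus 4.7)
The plan is to exploit linearity of the solution map together with the spacetime energy estimate \eqref{stee} and the causal support property from Proposition \ref{causalsupport}. The map $\text{Sol}$ is linear by uniqueness (Proposition \ref{uniqueness}), so continuity is equivalent to boundedness in the sense of suitable semi-norms. Recall that the topology on $H^2_{\text{loc}}(M)$ is generated by the semi-norms $u\mapsto \|u\|_{H^2(K)}$ for compact $K\subset M$, while $H^1_{\text{comp}}(M)$ carries the inductive limit topology with respect to the closed subspaces $H^1_L(M)=\{f\in H^1(M)\colon \supp(f)\subset L\}$ for $L\subset M$ compact. By the universal properties of these topologies, it suffices to exhibit, for each pair of compact sets $K,L\subset M$, a constant $C=C(K,L)$ with
\begin{equation*}
  \|\text{Sol}(u_0,u_1,f)\|_{H^2(K)}\leq C\Big(\|u_0\|_{H^2(\Sigma)}+\|u_1\|_{H^1(\Sigma)}+\|f\|_{H^1(M)}\Big)
\end{equation*}
for all $(u_0,u_1)\in H^2(\Sigma)\times H^1(\Sigma)$ and all $f\in H^1_L(M)$.

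The first step is to reduce to compactly supported initial data so that the spacetime energy estimate \eqref{stee}, which was proved under the hypothesis $(u_0,u_1)\in H^2_{\text{comp}}(\Sigma)\times H^1_{\text{comp}}(\Sigma)$, applies directly. By the $C^{1,1}$ causality results cited in Section \ref{GH}, the set $S:=J(K)\cap \Sigma$ is compact. Pick $\chi\in\mathcal{D}(\Sigma)$ with $\chi\equiv 1$ on $S$, and set $u_0':=\chi u_0\in H^2_{\text{comp}}(\Sigma)$, $u_1':=\chi u_1\in H^1_{\text{comp}}(\Sigma)$. Denote by $u'$ the weak solution with data $(u_0',u_1',f)$ provided by Theorem \ref{globalexist}. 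By linearity, $u-u'$ solves the homogeneous wave equation with vanishing source and initial data $((1-\chi)u_0,(1-\chi)u_1)$, both of which are supported away from $S$. Proposition \ref{causalsupport} then forces $\supp(u-u')\cap K=\emptyset$, so $u|_K=u'|_K$ and in particular $\|u\|_{H^2(K)}=\|u'\|_{H^2(K)}$.

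The second step is to apply the spacetime energy inequality \eqref{stee} to $u'$, yielding
\begin{equation*}
 \|u\|_{H^2(K)}=\|u'\|_{H^2(K)}\leq \tilde C\Big(\|u_0'\|_{H^2(\Sigma)}+\|u_1'\|_{H^1(\Sigma)}+\|f\|_{H^1(M)}\Big).
\end{equation*}
Since multiplication by the fixed cutoff $\chi\in\mathcal{D}(\Sigma)$ is a bounded operation on $H^2(\Sigma)$ and $H^1(\Sigma)$, one has $\|u_0'\|_{H^2(\Sigma)}\leq C_\chi\|u_0\|_{H^2(\Sigma)}$ and similarly for $u_1'$, giving the required bound with $C=C(K,L):=\tilde C\max(C_\chi,1)$. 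The dependence of $\tilde C$ on a time interval $[0,T]$ (through Proposition \ref{globalv2}) is absorbed into $C(K,L)$ by choosing $T$ large enough that $K\cup L\subset [-T,T]\times\Sigma$.

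No serious obstacle is expected: the entire content is that the energy inequality \eqref{stee}, when combined with finite propagation speed, transfers the quantitative control from the compactly supported setting to the general one. The only minor care needed is to verify that the cutoff argument is compatible with the manifold Sobolev norms recalled in Appendix B, which is immediate since $\chi$ is smooth and compactly supported.
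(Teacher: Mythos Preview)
Your argument is correct in spirit and considerably more detailed than the paper's, which simply asserts that \eqref{stee} implies the corollary. However, there is a mismatch with the paper's conventions: you equip $H^1_{\text{comp}}(M)$ with the strict inductive limit (LF) topology, whereas Appendix~B explicitly declares that in this paper $H^1_{\text{comp}}(M)$ carries the subspace norm inherited from $H^1(M)$. With the norm topology, continuity into $H^2_{\text{loc}}(M)$ requires, for each compact $K$, a constant $C=C(K)$ independent of $\supp(f)$, not merely $C=C(K,L)$. Since the inductive limit topology is finer than the norm topology, the statement you actually establish is formally weaker than the one claimed.

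The fix is a small extension of your own cutoff trick: apply it to the source as well as to the initial data. The portion of $f$ that influences $u|_K$ lies in $R_K:=\big(J^-(K)\cap J^+(\Sigma)\big)\cup\big(J^+(K)\cap J^-(\Sigma)\big)$, which is compact by global hyperbolicity (cf.\ the $C^{1,1}$ causality results cited in Section~\ref{GH}). Choose $\psi\in\mathcal{D}(M)$ with $\psi\equiv 1$ on $R_K$ and set $f':=\psi f$; then Proposition~\ref{causalsupport} applied to the difference $u-u'$, where $u':=\text{Sol}(u_0',u_1',f')$, gives $u|_K=u'|_K$. Now all three inputs to \eqref{stee} have supports contained in compact sets determined by $K$ alone, so the constant $\tilde C$ from \eqref{stee} depends only on $K$, and $\|f'\|_{H^1(M)}\leq C_\psi\|f\|_{H^1(M)}$ yields the required uniform bound $C=C(K)$.
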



\section{Green operators for $C^{1,1,}$ spacetimes}

In this section we will define Green operators for $\square_g$ on
globally hyperbolic manifolds $M$ with $C^{1,1}$ metrics. We will show
existence and uniqueness of Green operators via the existence of
solutions to the wave equation with appropriate regularity and causal
support. We define below the notion of generalised
hyperbolicity which will give us the required conditions in this situation.

\begin{Definition} \label{genhyp2} {\bf (Generalised hyperbolicity)} A
  spacetime $(M,g_{ab})$ is said to satisfy the condition of
  {\emph{generalised hyperbolicity}} if the inhomogeneous wave equation for zero
  Cauchy data is
  well-posed and causal. \\

  The precise choice of function spaces in the definition of
  well-posedness depends upon the regularity of the metric. In our
  case we require the following conditions:
     
  \noindent
  {\bf Existence, uniqueness and support of solutions} \label{existence}
  For every $f\in H_{\text{comp}}^{1}(M)$ there exists a unique future solution $u^+\in H_{\text{loc}}^{2}(M,g)$  such that 

\begin{equation*} 
  \square_{g}u^+=f \text{ on } M
 \end{equation*} 
 which satisfies the causal support condition $\supp(u^+) \subset J^+(\supp(f))$.  Note that this condition implies that on
   a Cauchy surface to the past of $\supp(f)$ one must have zero initial
   data. However the particular choice of such Cauchy surface makes no
   difference to the solution (see proof of Theorem
   \ref{gws} below for more details).

We also require that for every $f\in H_{\text{comp}}^{1}(M)$   there exists a unique past solution $u^-\in H_{\text{loc}}^{2}(M,g)$  such that 
\begin{equation*} 
  \square_{g}u^-=f \text{ on } M
 \end{equation*} 
 which satisfies $\supp(u^-) \subset J^-(\supp(f))$ where we can
 choose any Cauchy surface to the future of $\supp(f)$ and solve the
 Cauchy problem going back in time.

Moreover, we require that the maps $f \mapsto u^+$  and  $f \mapsto u^-$ are continuous maps from $H^1_{\text{comp}}(M) \to H^2_{\text{loc}} (M)$ equipped with suitable topologies.

\end{Definition}

\begin{Theorem}\label{gws}
  Let $(M,g)$ be a time oriented $(n+1)$-dimensional Lorentzian
  manifold with $C^{1,1}$ metric and $\Sigma$ a smooth spacelike
  $n$-dimensional hypersurface. Then $(M,g)$ satisfies the condition
  of generalised hyperbolicity.
\end{Theorem}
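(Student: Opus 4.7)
The plan is to construct $u^+$ directly from Theorem \ref{globalexist} by solving the Cauchy problem with zero initial data on a Cauchy surface lying strictly to the past of $\supp(f)$, and then to upgrade the two-sided causal support inclusion provided by that theorem to the one-sided inclusion $\supp(u^+) \subset J^+(\supp(f))$ using the proof technique of Proposition \ref{causalsupport}. First I would fix $f \in H^1_{\text{comp}}(M)$, use the smooth temporal function $t : M \to \R$ whose existence on a $C^{1,1}$ globally hyperbolic spacetime is recalled in Subsection~4.1, and exploit compactness of $\supp(f)$ to pick $\tau^- \in \R$ with $\supp(f) \subset t^{-1}((\tau^-,\infty))$; set $\Sigma^- := t^{-1}(\tau^-)$. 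Applying Theorem \ref{globalexist} with initial data $(u_0,u_1) = (0,0)$ on $\Sigma^-$ and source $f$ produces a unique solution $u^+ \in C^0(\R,H^2(\Sigma_t)) \cap C^1(\R,H^1(\Sigma_t)) \cap H^2_{\text{loc}}(M)$ with $\supp(u^+) \subset J(\supp(f))$.

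To sharpen this to $\supp(u^+) \subset J^+(\supp(f))$, I would split $M = D^-(\Sigma^-) \cup D^+(\Sigma^-)$. On $D^-(\Sigma^-) = J^-(\Sigma^-)$ both the initial data and the source vanish, so Proposition \ref{uniqueness} (applied in the time-reversed direction) gives $u^+ \equiv 0$ there. On $D^+(\Sigma^-)$, for any $q \notin J^+(\supp(f))$ the argument in the proof of Proposition \ref{causalsupport} produces a point $p$ with $q \in I^-(p) \cap I^+(\Sigma^-)$ such that the data and source vanish on $J^-(p) \cap J^+(\Sigma^-)$, and the energy inequality (\ref{enegyin}) then forces $u^+(q) = 0$. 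Independence of $u^+$ from the particular choice of $\Sigma^-$ follows by a domain-of-dependence argument: if $\Sigma^-_1,\Sigma^-_2$ are two admissible surfaces, choose a third Cauchy surface $\Sigma^-_3$ lying to the past of both; both candidate solutions then vanish together with their normal derivatives on $\Sigma^-_3$, and Proposition \ref{uniqueness} applied to their difference yields coincidence on all of $M$.

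For uniqueness among future solutions, suppose $v^+ \in H^2_{\text{loc}}(M)$ also satisfies $\square_g v^+ = f$ with $\supp(v^+) \subset J^+(\supp(f))$. Then $v^+$ and $u^+$ both vanish (with vanishing normal derivative) on any Cauchy surface $\Sigma^-$ chosen strictly below $\supp(f)$, so Proposition \ref{uniqueness} applied to $u^+ - v^+$ gives $u^+ = v^+$. Continuity of $f \mapsto u^+$ from $H^1_{\text{comp}}(M)$ into $H^2_{\text{loc}}(M)$ is then immediate from Corollary \ref{sol}: for a convergent sequence $f_n \to f$ in $H^1_{\text{comp}}(M)$ all supports lie in a common compact $K \subset M$, so a single $\Sigma^-$ below $K$ works for every $f_n$ and for $f$, and the continuity of the solution map with fixed zero Cauchy data yields $u^+_n \to u^+$ in $H^2_{\text{loc}}(M)$. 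The construction and properties of $u^-$ follow by an entirely symmetric argument, choosing a Cauchy surface to the future of $\supp(f)$ and propagating backward in time.

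The main obstacle, and the step which genuinely uses the flexibility of the framework, is the passage from the two-sided causal support $J(\supp(f))$ delivered by Theorem \ref{globalexist} to the required one-sided support $J^+(\supp(f))$; it is here that the freedom to push the auxiliary Cauchy surface $\Sigma^-$ arbitrarily far into the past, combined with the localised energy-inequality argument of Proposition \ref{causalsupport}, is essential.
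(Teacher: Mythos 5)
Your proposal is correct and follows essentially the same strategy as the paper: solve the Cauchy problem with zero data on a smooth spacelike Cauchy hypersurface lying strictly to the past of $\supp(f)$ (the paper phrases this as $J^+(\supp(f))\cap\Sigma_+=\emptyset$), invoke Theorem \ref{globalexist} for existence, Proposition \ref{uniqueness} for uniqueness and independence of the chosen Cauchy surface, Proposition \ref{causalsupport} for the one-sided causal support, and Corollary \ref{sol} for continuity. The paper's proof is terser and does not spell out the passage from the two-sided inclusion $\supp(u)\subset J(\supp(f))$ to the required one-sided inclusion $\supp(u^+)\subset J^+(\supp(f))$, but your elaboration of that step is exactly what is implicitly needed and is in the same spirit.
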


\begin{proof}
  Theorem \ref{globalexist}  shows that a globally hyperbolic $C^{1,1}$
  spacetime satisfies the condition of generalised hyperbolicity to the future by considering the forward initial value problem 
\begin{equation*}
  \square_{g}u^+=f \text{ on } M, \:\:
 {u^+}(\Sigma_{+})=0,   \:\:
  \nabla_n{{u^+}}(\Sigma_{+})=0,
\end{equation*} 
where $f\in H_{\text{comp}}^{1}(M)$  and $\Sigma_{+}$ is a smooth spacelike Cauchy hypersurface such that $J^{+}(\supp(f))\cap \Sigma_{+}=\emptyset$. 

{\bf Note:} If we were to choose some other smooth spacelike Cauchy
hypersurface ${\tilde \Sigma}_+$, which also satisfies $J^+(\supp(f))
\cap {\tilde \Sigma}_+=\emptyset$, then the corresponding solution is
the same, since the divergence theorem arguments used in Lemma
\ref{HE7.4.4} apply and yield that the solution must vanish in the
region between $\Sigma_+$ and ${\tilde \Sigma}_+$.

Similarly, Theorem \ref{globalexist} shows it satisfies the condition of generalised hyperbolicity to the past by 
considering the backwards initial value problem
$$  
  \square_{g}u^-=f \text{ on } M,\:\:
{u^-}(\Sigma_{-})=0, \:\:
  \nabla_n{{u^-}}(\Sigma_{-})=0.
$$
where  $f\in H_{\text{comp}}^{1}(M)$  and $\Sigma_{-}$ is a smooth spacelike Cauchy hypersurface such that $J^{-}(\supp(f))\cap \Sigma_{-}=\emptyset$. Again the solution is independent of the choice of Cauchy surface as long as it satisfies the causal support condition. 
\end{proof}

\subsection{Green operators}

The definition of the Green operators in the non-smooth setting will
require us to choose suitable spaces of functions as domain and range
(see Theorem \ref{exact}). We therefore define the following spaces:
\begin{align}
V_0=&\{\phi\in H_{\text{comp}}^{2}(M) \text{ s.t. } \square_g\phi\in  H^{1}_{\text{comp}}(M) \} \nonumber\\
U_0=&H^{1}_{\text{comp}}(M)\\\nonumber
V_{sc}=&\{\phi\in  H_{\text{loc}}^{2}(M) \text{ s.t. } \square_g\phi\in H^{1}_{\text{loc}}(M) \\
&\text{ and } \text{supp}(\phi) \subset J(K) \text{ where K is a compact subset of } M\}\nonumber
\end{align}

\begin{rem}
Note that none of the spaces defined above depend upon the choice of background metric used in the definition of the Sobolev spaces.
\end{rem}

\begin{Definition} \label{Green} 
 A linear map $$G^{+}:H_{\text{comp}}^{1}(M)\rightarrow H_{\text{loc}}^{2}(M)$$ satisfying the  properties
 \begin{enumerate}
   \item $\square_{g}G^{+}=\text{id}_{H_{\text{comp}}^{1}(M)}$, 
   \item $ G^{+}\square_{g}|_{V_0}=\text{id}_{V_0}$,
   \item $\supp(G^{+}(f))\subset J^{+}(\supp(f))$ for all $f\in H_{\text{comp}}^{1}(M)$,
 \end{enumerate}
 \noindent
 is called an \emph{advanced  Green operator} for $\square_{g}$. A \emph{retarded 
 Green operator} $G^{-}$  is defined similarly.

\end{Definition}

\begin{rem}\label{weakremark}

(i)  Clearly,  the regularity condition in the definition of the space $V_0$ was chosen to guarantee that  $\square_g f$, given $f\in V_0$, belongs to the domain $H_{\text{comp}}^{1}(M)$ of the Green operators. 

(ii)
In several proofs below we will show the identities in Properties 1 and 2 to hold weakly, i.e.,  when evaluated on test functions in ${\mathcal D}(M)$. However it can be shown using the results of
H\"ormander \cite[Theorem 1.25]{Hoermander:1} that if two such Sobolev functions have the same effect on test functions then they are actually equal as Sobolev functions.  

(iii) The function spaces $H^2_{\text{loc}}(M)$ and
$H^1_{\text{comp}}(M)$ used as target space and domain for the Green
operators are in perfect accordance with the theory of so-called
\emph{regular fundamental solutions} for hyperbolic operators with
constant coefficients
\footnote{Of course, in case of the wave operator
  we even have explicit representations for the advanced and retarded
  fundamental solutions $E_+$ and $E_-$, e.g., in \cite[Sections 6.2
    and 7.4]{Hoermander:1}, but these are not required here.}
(cf.~\cite[Section 12.5]{Hoermander:2}) as we sketch briefly in the
following: Let $M$ be $(n+1)$-dimensional Minkowski space so that
$\square$ has the symbol $p(\tau,\xi) = \tau^2 -|\xi|^2$, which is
hyperbolic with respect to the directional vectors $(\pm 1,0)$
and produces the temperate weight function
    $\tilde{p}(\tau,\xi) := \sqrt{\sum_{|\alpha|
        \geq 0} |\partial^\alpha p(\tau,\xi)|^2} = \sqrt{(\tau^2 -
      \xi^2)^2 + 4(\tau^2 + \xi^2 +2)} \geq \sqrt{1 + \tau^2 + \xi^2}
    =: w_1(\tau,\xi)$. The unique fundamental solution $E_\pm$ with
    support in the half space where $\pm t \geq 0$
    belongs to
          $B^{\text{loc}}_{\infty,\tilde{p}}$, i.e., for every test
          function $\phi$ the Fourier transform $\mathcal{F}(\phi
          E_\pm)$ times $\tilde{p}$ is measurable and bounded. The
          advanced and retarded Green operators are then given by
          convolution $G^\pm f = E_\pm \ast f$ for every $f \in
          H^1_{\text{comp}}(\R^{n+1}) = \mathcal{E}'(\R^{n+1}) \cap
          B_{2,w1}$, where $B_{2,w1} = \{u \in \mathcal{S}' \mid w_1
          \cdot \mathcal{F}u \in L^2\} = H^1(\R^{n+1})$. Finally, we
          may apply \cite[Theorem 12.5.3 or Theorem
            10.1.24]{Hoermander:2} to obtain $G^\pm f \in
          B^{\text{loc}}_{2, \tilde{p} \cdot w_1} \subseteq
          B^{\text{loc}}_{2, w_1^2} = H^2_{\text{loc}}(\R^{n+1})$,
          since $\tilde{p}(\tau,\xi) w_1(\tau,\xi) \geq
          w_1^2(\tau,\xi) = 1 + \tau^2 + \xi^2$.

\end{rem}

We next show that the advanced and retarded Green operators are adjoints of one another. To do this we use the following Lemma
\begin{Lemma}\label{selfadj}
Given $\chi, \varphi\in H_{\text{loc}}^{2}(M,g)$ and $\supp(\chi)\cap supp(\varphi)$ compact. Then, we have 
\begin{equation}
\int_M \square_g \chi \varphi \nu_g =\int_M \chi \square_g\varphi \nu_g 
\end{equation}
\end{Lemma}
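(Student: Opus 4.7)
\emph{Proof plan.} The identity is the formal self-adjointness of $\square_g$. For a smooth metric and smooth compactly supported functions it is simply two applications of the divergence theorem, and my plan is to reduce the present low-regularity statement to that situation by (a) a cutoff making $\chi,\varphi$ themselves compactly supported, (b) a partition of unity localizing to a single coordinate chart, and (c) two integrations by parts at the Sobolev level.

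For (a), set $K:=\supp(\chi)\cap\supp(\varphi)$, which is compact by hypothesis, and pick $\psi\in C^\infty_c(M)$ with $\psi\equiv 1$ on an open neighborhood $U\supset K$. The integrand on each side of the identity is supported in $K\subset U$; on $U$ one has $\psi\chi=\chi$, $\psi\varphi=\varphi$, and, since $\psi\chi$ and $\chi$ agree on the open set $U$, also $\square_g(\psi\chi)=\square_g\chi$ as elements of $L^2_{\text{loc}}(U)$ (recall that $\square_g\colon H^2_{\text{loc}}\to L^2_{\text{loc}}$ is well-defined for $C^{1,1}$ metrics). Hence both sides of the identity are unchanged when $\chi,\varphi$ are replaced by $\psi\chi,\psi\varphi\in H^2_{\text{comp}}(M)$, so from here on we may assume $\chi,\varphi\in H^2_{\text{comp}}(M)$.

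For (b) and (c), take a finite partition of unity $\{\rho_k\}\subset C^\infty_c(M)$ subordinate to a coordinate-chart cover of $\supp(\chi)\cup\supp(\varphi)$, refined so that every pair of relevant supports $\supp(\rho_k\chi), \supp(\rho_l\varphi)$ having non-empty intersection lies inside a common chart. By bilinearity the problem reduces to verifying the identity for $\chi,\varphi\in H^2_{\text{comp}}$ supported in one chart. In the chart, write
\[
\square_g u = |g|^{-1/2}\,\partial_\mu\bigl(|g|^{1/2}\,g^{\mu\nu}\,\partial_\nu u\bigr).
\]
The coefficient $|g|^{1/2}g^{\mu\nu}$ is $C^{1,1}\subset W^{1,\infty}_{\text{loc}}$, so by the Sobolev product rule $|g|^{1/2}g^{\mu\nu}\partial_\nu\chi\in H^1_{\text{comp}}$ and the standard $H^1$-integration-by-parts formula yields
\[
\int \partial_\mu\bigl(|g|^{1/2}g^{\mu\nu}\partial_\nu\chi\bigr)\,\varphi\,dx = -\int |g|^{1/2}g^{\mu\nu}\,\partial_\nu\chi\,\partial_\mu\varphi\,dx.
\]
The right-hand side is symmetric in $(\chi,\varphi)$ because $g^{\mu\nu}=g^{\nu\mu}$, and a second application of the same integration-by-parts rewrites it as $\int\chi\,\square_g\varphi\,\nu_g$, which closes the identity.

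The only genuine low-regularity issue is the validity of these two integrations by parts, which is provided by the $W^{1,\infty}$-bound on $|g|^{1/2}g^{\mu\nu}$ coming from the $C^{1,1}$ assumption on the metric. A cleaner variant that bypasses the partition-of-unity bookkeeping would mollify $\chi,\varphi\in H^2_{\text{comp}}$ to $C^\infty_c$ functions (after the cutoff in (a) and localization to a single chart) and pass to the limit: both sides of the identity are continuous bilinear forms on $H^2_{\text{comp}}\times H^2_{\text{comp}}$ for any fixed joint compact support, so since the identity holds classically for smooth functions it extends by density.
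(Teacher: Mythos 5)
Your proposal is correct and takes essentially the same route as the paper, whose proof is merely a one-sentence appeal to integrating by parts twice together with the support hypothesis. Your version simply fills in the details (cutoff to compact support, chart localization, the $W^{1,\infty}$ regularity of the coefficients, and the optional density argument) that the paper leaves implicit.
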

\noindent
The proof of the Lemma follows from using integration by parts twice and the support properties given in the hypothesis. Note that the specified regularity of the metric $g$ and of the functions is needed in order to use the $L^2$ inner product.
 We may now prove the following theorem.

\begin{Theorem}\label{greenadj}
Given Green operators satisfying conditions  1 and 3 of Definition \ref{Green} and $\chi,\varphi\in   H_{\text{comp}}^{1}(M) $ we have that

\begin{equation*}
\int_M G^{+}(\chi) \varphi \nu_g =\int_M \chi G^-(\varphi) \nu_g 
\end{equation*}
\end{Theorem}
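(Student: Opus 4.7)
The plan is to reduce the identity to the formal self-adjointness of $\square_g$ given in Lemma~\ref{selfadj}, applied to the pair $u^+ := G^+(\chi)$ and $u^- := G^-(\varphi)$. By the defining range of the Green operators, both $u^+$ and $u^-$ lie in $H^2_{\text{loc}}(M,g)$, so the required Sobolev regularity hypothesis of Lemma~\ref{selfadj} is immediate.

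The key verification is compactness of $\supp(u^+)\cap\supp(u^-)$. I would argue as follows: by condition~(3) of Definition~\ref{Green},
\[
\supp(u^+) \subset J^+(\supp(\chi)), \qquad \supp(u^-) \subset J^-(\supp(\varphi)).
\]
Since $\chi,\varphi\in H^1_{\text{comp}}(M)$, the sets $\supp(\chi)$ and $\supp(\varphi)$ are compact, and the global hyperbolicity of $(M,g)$ established in Section~\ref{GH} guarantees that $J^+(\supp(\chi))\cap J^-(\supp(\varphi))$ is compact. Hence $\supp(u^+)\cap\supp(u^-)$ is contained in this compact set and so is itself compact.

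With the hypotheses verified, Lemma~\ref{selfadj} applied to $u^+$ and $u^-$ gives
\[
\int_M (\square_g u^+)\, u^- \, \nu_g \;=\; \int_M u^+ \,(\square_g u^-)\, \nu_g.
\]
Finally I invoke condition~(1) of Definition~\ref{Green}, which asserts $\square_g u^+ = \chi$ and $\square_g u^- = \varphi$ (as elements of $H^1_{\text{comp}}(M)$, hence in $L^2$), so that substitution on both sides yields
\[
\int_M \chi\, G^-(\varphi)\, \nu_g \;=\; \int_M G^+(\chi)\, \varphi\, \nu_g,
\]
which is the claimed adjointness. There is no significant obstacle here: the causal support conditions on $G^\pm$ are exactly designed to guarantee that the integration-by-parts formula of Lemma~\ref{selfadj} applies without boundary terms, and the regularity conditions in Definition~\ref{Green} (both that $G^\pm$ take values in $H^2_{\text{loc}}$ and that the domain sits in $H^1_{\text{comp}}$) ensure every integrand lies in $L^1_{\text{comp}}(M)$, so the integrals make sense a priori.
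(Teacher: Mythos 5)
Your proof is correct and follows essentially the same route as the paper's: both invoke condition~1 of Definition~\ref{Green} for each of $G^+$ and $G^-$, use global hyperbolicity to obtain the compact support overlap needed for Lemma~\ref{selfadj}, and then apply that lemma once; the paper merely phrases the substitutions as a chain of $L^2$-inner-product equalities rather than applying the lemma first and substituting afterwards. No gap.
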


\begin{proof}

First, notice that if $\chi,\varphi\in H_{\text{comp}}^{1}(M)$ we have that $
G^{+}(\chi), G^-(\varphi)\in H_{\text{loc}}^{2}(M,g)$. Moreover,
$G^{+}(\chi)\cap G^-(\varphi)\subset J^{+}(\supp(\chi))\cap
J^{-}(\supp(\varphi))$ is compact by the global hyperbolicity
condition.

Hence,
\begin{multline*}
\int_M G^{+}(\chi) \varphi \nu_g =(G^{+}(\chi),  \varphi )_{L^{2}(M,g)}
=(G^{+}(\chi),\square_g G^-(\varphi) )_{L^{2}(M,g)}\\
=(\square_g G^{+}(\chi), G^-(\varphi) )_{L^{2}(M,g)}
=(\chi, G^-(\varphi) )_{L^{2}(M,g)}
=\int_M \chi G^-(\varphi) \nu_g.
\end{multline*}
\end{proof}

We are now in a position to prove the main result about existence of Green
operators.

\begin{Theorem}\label{gf}
Let $(M, g)$ be a spacetime that satisfies the definition of
generalised hyperbolicity (Definition \ref{genhyp2}). Then there exist
unique continuous advanced and retarded Green operators for $\square_g$ on $M$.
\end{Theorem}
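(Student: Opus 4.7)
The plan is to define $G^+ f := u^+$ where $u^+ \in H^2_{\text{loc}}(M)$ is the unique future solution to $\square_g u^+ = f$ provided by the generalised hyperbolicity condition (Definition \ref{genhyp2}), and analogously $G^- f := u^-$. With this definition, three of the required properties follow essentially without work: condition (1) of Definition \ref{Green} is immediate since $\square_g u^+ = f$ by construction; condition (3) is immediate from $\supp(u^+) \subset J^+(\supp(f))$; continuity of $G^\pm$ from $H^1_{\text{comp}}(M)$ to $H^2_{\text{loc}}(M)$ is part of the generalised hyperbolicity hypothesis; and linearity follows from the uniqueness clause via the usual argument that $\alpha_1 u_1^+ + \alpha_2 u_2^+$ solves $\square_g u = \alpha_1 f_1 + \alpha_2 f_2$ with support contained in $J^+(\supp(\alpha_1 f_1 + \alpha_2 f_2))$, and so must equal $G^+(\alpha_1 f_1 + \alpha_2 f_2)$.

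The one nontrivial step is verifying condition (2), namely $G^+ \square_g \phi = \phi$ for every $\phi \in V_0$. The idea is to reduce this to the uniqueness statement of Proposition \ref{uniqueness}. Given $\phi \in V_0$, the support $K := \supp(\phi)$ is compact, and since $t$ is a smooth temporal function it attains a minimum $t_0 := \min_{p \in K} t(p)$ on $K$. Choosing any $\tau < t_0$, the Cauchy surface $\Sigma_\tau$ lies strictly in the chronological past of $K$, so $\phi$ vanishes together with $\nabla_n \phi$ in an open neighbourhood of $\Sigma_\tau$. On the other hand, since $\supp(\square_g \phi) \subset K$, the support condition for $G^+$ gives $\supp(G^+ \square_g \phi) \subset J^+(K)$, which is disjoint from $\Sigma_\tau$; hence $G^+ \square_g \phi$ and its normal derivative also vanish on $\Sigma_\tau$. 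Both $\phi$ and $G^+ \square_g \phi$ therefore solve the Cauchy problem for the wave equation with source $\square_g \phi \in H^1_{\text{loc}}(M)$ and zero initial data on $\Sigma_\tau$, so Proposition \ref{uniqueness} forces them to coincide.

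Uniqueness of $G^+$ will follow at once: any second advanced Green operator $\widetilde G^+$ produces, for each $f \in H^1_{\text{comp}}(M)$, a function $\widetilde G^+ f \in H^2_{\text{loc}}(M)$ that solves $\square_g u = f$ with support in $J^+(\supp(f))$, so by the uniqueness clause in Definition \ref{genhyp2} it must agree with $G^+ f$. The retarded case is handled identically, choosing instead $\tau > \max_{p \in K} t(p)$ so that $\Sigma_\tau$ lies to the future of $K$. The only real obstacle is condition (2); everything else is a direct consequence of the construction of $G^\pm$ together with the uniqueness statements provided by generalised hyperbolicity and by Proposition \ref{uniqueness}.
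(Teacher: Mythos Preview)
Your proof is correct, but your verification of Property~2 takes a genuinely different route from the paper's. The paper proves $G^+ \square_g \phi = \phi$ for $\phi \in V_0$ by a duality argument: testing against an arbitrary $v \in \mathcal{D}(M)$, it invokes the adjoint relation $(G^+ \chi, v)_{L^2} = (\chi, G^- v)_{L^2}$ (Theorem~\ref{greenadj}) together with formal self-adjointness of $\square_g$ (Lemma~\ref{selfadj}) and Property~1 for $G^-$, obtaining $(G^+ \square_g \phi, v) = (\square_g \phi, G^- v) = (\phi, \square_g G^- v) = (\phi, v)$. Your approach bypasses the adjointness machinery entirely and appeals directly to uniqueness of the Cauchy problem (Proposition~\ref{uniqueness}): both $\phi$ and $G^+ \square_g \phi$ are $H^2_{\text{loc}}$ solutions of $\square_g u = \square_g \phi$ with vanishing data on a Cauchy surface $\Sigma_\tau$ chosen to the past of $\supp(\phi)$, hence they coincide. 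This is more elementary in that it does not require first establishing Lemma~\ref{selfadj} and Theorem~\ref{greenadj}; on the other hand, the paper's route makes the duality between $G^+$ and $G^-$ explicit and is the argument that later generalises cleanly to the exact-sequence and symplectic-form computations. Both arguments implicitly require global hyperbolicity (yours through the existence of a temporal function and Proposition~\ref{uniqueness}, the paper's through the compactness of $J^+(\supp\chi) \cap J^-(\supp\varphi)$ used in Theorem~\ref{greenadj}), so neither is strictly more general at the level of hypotheses.
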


\begin{proof}
We will only discuss the advanced Green operator, the existence and the properties of the retarded Green operator follow from time reversal.

\medskip
{\emph{Existence:}} We define the linear map  
$$G^{+}:H_{\text{comp}}^{1}(M)\rightarrow   H_{\text{loc}}^{2}(M)$$
 which sends a source function $f$ to the (unique) advanced weak solution
 $u^{+}$. That such a $ u^{+}$ exists and is unique is a consequence of
 generalised hyperbolicity. Property 1 in Definition \ref{Green} is immediate. In addition, the energy estimate \eqref{stee} shows that $G^{+}$  is a continuous operator. It remains to prove Properties 2 and 3 in Definition \ref{Green}.

Property 2:  Let  $f\in V_0$ and $v\in \mathcal{D}(M)$, then
\begin{equation*}
(G^{+}(\square_{g}f), v)_{L^{2}(M,g)}=(\square_{g}f,G^{-}(v))_{L^{2}(M,g)}
=(f, \square_{g} G^{-}(v))_{L^{2}(M,g)}
=(f, v)_{L^{2}(M,g)},
\end{equation*}
where we have used Theorem \ref{greenadj} and Property 1 for the retarded Green operator $G^-$. Thus the weak form of the required identity holds, which implies $G^{+}\square_{g}(f)=f$ for every $f\in V_{0}$ (see Remark \ref{weakremark}(ii)).

Property 3 follows because $\supp(G^+)=\supp(u^{+})\subset J^{+}(\supp(f))$ by  Lemma \ref{causalsupport}.

\bigskip

{\emph{Uniqueness:}} Let $\tilde{G}^{+}$ be another linear operator satisfying Definition \ref{Green}.  Given $f\in H^{1}_{\text{comp}}(M)$ we have  that $v:=\tilde{G}^{+}(f)$ satisfies $\square_g v=f$ and $\supp(v)\subset J^{+}(\supp(f))$. Since $f\in H^{1}_{\text{comp}}(M)$, $\supp(v)\subset J^{+}(\supp(f))$ and $M$ is globally hyperbolic, there is a smooth timelike Cauchy surface  $\Sigma$ to the past of the support of $f$  where the Cauchy data vanishes, i.e., $v= 0$ and $\nabla_n v=0 $.  Hence, $v$ is a solution to the zero initial data forward Cauchy problem on $\Sigma$. By uniqueness we must have $v=u^{+}$ so we can conclude that $\tilde{G}^{+}(f)={G}^{+}(f)$ for all $f\in H^{1}_{\text{comp}}(M)$.
\end{proof}

We now show that the low-regularity Green operators satisfy an exact sequence result similar to that in the smooth case \cite[Theorem 3.4.7]{bgp}.

\begin{Theorem} \label{exact}

Let $M$ be a connected time-oriented globally hyperbolic Lorentzian manifold that satisfies Definition \ref{genhyp2}. Define the causal propagator as  $$G=G^{+}-G^{-}:  H_{\text{comp}}^{1}(M)\rightarrow  H_{\text{loc}}^{2}(M)$$ 

Then the image of $G$ is contained in $V_{sc}$ and the following complex is exact:
 
\begin{center}
\begin{tikzcd}
    0\arrow{r}& V_0 \arrow{r}{\square_{g}}& U_0\arrow{r}{G} &V_{\text{sc}} \arrow{r}{\square_g} & H^{1}_{\text{loc}} (M).
   \end{tikzcd}
\end{center}
\end{Theorem}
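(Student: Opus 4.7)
The plan is to verify four conditions: $\mathrm{ran}(G) \subset V_{sc}$, and exactness at each of $V_0$, $U_0$, $V_{sc}$. The first three are straightforward bookkeeping with Definition \ref{Green} and global hyperbolicity of $M$. For $\mathrm{ran}(G) \subset V_{sc}$: given $f \in H^1_{\text{comp}}(M)$, the support of $G(f)=G^+f-G^-f$ is contained in $J(\supp f)=J(K)$ for the compact set $K=\supp f$, and $\square_g G(f) = f-f = 0 \in H^1_{\text{loc}}(M)$. Exactness at $V_0$: if $\phi \in V_0$ with $\square_g\phi = 0$, Property 2 in Definition \ref{Green} immediately gives $\phi = G^+\square_g\phi = 0$. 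Exactness at $U_0$: the inclusion $\square_g(V_0) \subset \ker G$ follows from $G\circ\square_g|_{V_0}=(G^+-G^-)\circ\square_g|_{V_0}=\mathrm{id}-\mathrm{id}=0$; conversely, if $G(f)=0$ then $u:=G^+f=G^-f$ has support contained in the set $J^+(\supp f)\cap J^-(\supp f)$, which is compact by global hyperbolicity of $M$, so $u \in H^2_{\text{comp}}(M)\subset V_0$ with $\square_g u = f$.

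The substantive step is exactness at $V_{sc}$, where only the inclusion $\ker(\square_g|_{V_{sc}}) \subset G(U_0)$ requires real work. My approach is the classical partition-of-unity splitting, adapted to the $H^2_{\text{loc}}$ setting. Given $\phi \in V_{sc}$ with $\square_g\phi = 0$ and $\supp\phi \subset J(K)$ for some compact $K$, I use the smooth temporal splitting $M\cong\R\times\Sigma$ from Section \ref{GH} to choose smooth spacelike Cauchy surfaces $\Sigma_-,\Sigma_+$ with $K\subset I^+(\Sigma_-)\cap I^-(\Sigma_+)$ and a smooth cutoff $\chi$ (a function of the temporal coordinate only) satisfying $\chi\equiv 0$ on $J^-(\Sigma_-)$ and $\chi\equiv 1$ on $J^+(\Sigma_+)$. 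Set
\begin{equation*}
f := \square_g(\chi\phi) = 2\,g^{\alpha\beta}(\nabla_\alpha\chi)(\nabla_\beta\phi) + (\square_g\chi)\phi,
\end{equation*}
where the identity uses $\square_g\phi=0$. Since $d\chi$ is supported in the slab $J^+(\Sigma_-)\cap J^-(\Sigma_+)$ while $\phi$ is supported in $J(K)$, global hyperbolicity forces $\supp f \subset (J^+(K)\cup J^-(K))\cap J^+(\Sigma_-)\cap J^-(\Sigma_+)$ to be compact; smoothness of $\chi$ and $\phi\in H^2_{\text{loc}}(M)$ give $f\in H^1_{\text{comp}}(M)=U_0$.

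It then remains to identify $\chi\phi = G^+f$ and $(1-\chi)\phi = -G^-f$, whence $\phi = G(f)$. Both $\chi\phi$ and $G^+f$ lie in $H^2_{\text{loc}}(M)$ and solve $\square_g u = f$, and both vanish on an open neighborhood of any Cauchy surface $\tilde\Sigma \subset I^-(\Sigma_-)$: for $\chi\phi$ because $\chi\equiv 0$ on $J^-(\Sigma_-)$, and for $G^+f$ because $\supp G^+f \subset J^+(\supp f) \subset J^+(\Sigma_-)$. Proposition \ref{uniqueness} then forces $\chi\phi = G^+f$, and the symmetric argument using $\Sigma_+$ yields $(1-\chi)\phi = -G^-f$. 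The main obstacle I anticipate is the coordination of causal supports in this last step with the $H^2_{\text{loc}}$-uniqueness result of Proposition \ref{uniqueness}; this is made available precisely by the existence of a smooth temporal function (Section \ref{GH}) together with compactness of $J^\pm(K)\cap J^\mp(\Sigma)$ for compact $K$ and Cauchy hypersurface $\Sigma$ in a globally hyperbolic $C^{1,1}$ spacetime.
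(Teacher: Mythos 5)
Your argument is correct, and at two points it departs from the paper's in genuinely useful ways. At $V_0$, the paper invokes uniqueness of the Cauchy problem (Proposition~\ref{uniqueness}), whereas you apply Property~2 of Definition~\ref{Green} directly: $\phi = G^+\square_g\phi = G^+(0) = 0$. This is shorter and does not require choosing a Cauchy surface to the past of $\supp\phi$. At $V_{sc}$, the substantive difference: the paper decomposes $\phi = \chi_-\phi + \chi_+\phi$ using a partition of unity subordinate to $\{I^-(K), I^+(K)\}$ and then identifies $G^+\psi = \chi_+\phi$ by a weak (duality) argument, testing against $\mathcal{D}(M)$ and appealing to the adjointness relation Theorem~\ref{greenadj} together with Lemma~\ref{selfadj}. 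You instead cut off along a temporal slab with a $t$-dependent $\chi$, set $f = \square_g(\chi\phi)$, and identify $\chi\phi = G^+f$ and $(1-\chi)\phi = -G^-f$ via uniqueness of the zero-data Cauchy problem (Proposition~\ref{uniqueness}). Both approaches need the same compactness input from global hyperbolicity (here $J^+(K)\cap J^-(\Sigma_+)$ and $J^-(K)\cap J^+(\Sigma_-)$ compact, which you should spell out rather than merely assert); yours trades the weak duality pairing for a direct appeal to Cauchy-problem uniqueness, which is arguably more transparent but relies on the specific form of $\chi$ as a function of the temporal coordinate so that $\chi\phi$ and $(1-\chi)\phi$ have zero data on a Cauchy surface.

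One small slip at $U_0$: you write ``$u\in H^2_{\text{comp}}(M)\subset V_0$.'' That set inclusion is false in general for a $C^{1,1}$ metric, since membership in $V_0$ requires additionally that $\square_g u \in H^1_{\text{comp}}(M)$, which does not hold for arbitrary $u\in H^2_{\text{comp}}(M)$ when the Christoffel symbols are only Lipschitz. In the case at hand the conclusion $u\in V_0$ is nonetheless correct because $\square_g u = \square_g G^+f = f\in H^1_{\text{comp}}(M)$; state that reason explicitly instead of claiming the inclusion.
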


\emph{Proof of Theorem \ref{exact}:}  First we show that the sequence is a complex: We have from the definitions $G^{+}\square_g|_{V_0}=G^{-}\square_g|_{V_0}= \text{id}_{V_0}$ and $\square_g G^{+}=\square_g G^{-}= \text{id}_{ H_{\text{comp}}^{1}(M)}$, therefore  $G \square_g\phi=0$ for all $\phi\in V_0$ and $\square_g G\psi =0$ for all $\psi\in U_0$.

\begin{itemize}

\item Exactness at $V_0$, i.e.,  injectivity of $\square_g$: 
Let $\phi\in V_0$ be such that $\square _g \phi=0$. By compactness of the support there
is a smooth spacelike Cauchy hypersurface $\Sigma$ such that $\phi=0$ and $\nabla_n\phi=0$ on $\Sigma$. Therefore, $\phi$ is a solution to the Cauchy problem with vanishing  initial data and source. Uniqueness of the solution implies $\phi=0$.

\item Exactness at $U_0$: Let $\phi\in U_0$ be such that $\phi\in \ker(G)$, i.e., $G^{+}(\phi)=G^{-}(\phi)$. Define $\psi:=G^{+}(\phi)=G^{-}(\phi)$, hence  $\psi\in  H_{\text{loc}}^{2}(M)$ and $ \square_{g}\psi=\phi$.
Moreover, $\psi$ is compactly supported in $M$ because
$\text{supp}(\psi)\subset
\text{supp}(G^{+}(\phi))\cap\text{supp}(G^{-}(\phi))\subset
J^{+}(\supp(\phi))\cap J^{-}(\supp(\phi))$ and the latter is compact due to
global hyperbolicity. Thus there exists  $\psi\in
H_{\text{loc}}^{2}(M)$ such that $\square_g \psi=\phi \in U_0$ and
$\text{supp}(\psi)$ is compact. Hence, $\psi\in V_{0}$ 
  and $\phi \in \text{im} (G)$.

\item Exactness at $V_{\text{sc}}$: 
Let $\phi\in \text{ker} (\square_g)$ and $\phi\in V_{sc}$. Without loss of
  generality we may assume that $\supp(\phi)\subset I^{+}(K)\cup
  I^{-}(K)$ for some compact set\footnote{We may take
    $I^{+}(K)$ rather than $J^{+}(K)$ by replacing an initial choice
    of compact set $\tilde K$ with a slightly larger $K$ for which
    $J^+(\tilde K) \subset I^+(K)$. Specifically we can take $K=\bar
    U$ where $U$ is any open set containing $\tilde K$ with compact
    closure. Similar remarks apply to $I^-(K)$} $K$ of $M$. 
   Using a partition of unity $\{\chi_-,\chi_+\}$ subordinate to $\{I^{-}(K),
  I^{+}(K)\}$ we let $\phi_1=\chi_- \phi$ and $\phi_2=\chi_+\phi$, thus 
  $\phi=\phi_{1}+\phi_{2}$. Then $\supp(\phi_{1})\subset J^{-}(K)$ and
  $\supp(\phi_{2})\subset J^{+}(K)$, hence $\phi_1, \phi_2 \in
  V_{sc}$.

  Define $\psi:=-\square_g \phi_1=\square_g \phi_2 $. Then
  $\supp(\psi)$ is compact because $\supp(\psi)\subset J^{-}(K)\cap
  J^{+}(K)$. Moreover, $\psi\in
  H^{1}_{\text{loc}}(M)$ since $\phi\in V_{\text{sc}}$. Combining these two observations, we conclude that
  $\psi\in H_{\text{comp}}^{1}(M)$ and therefore $G^{+}(\psi)$ is defined.

For arbitrary $\chi\in \mathcal D(M)$ we have
\begin{equation*}
(\chi,G^{+}(\psi))_{L^{2}(M,g)}=(\chi,G^{+}(\square_g \phi_2))_{L^{2}(M,g)}
=(\square_g G^{-}\chi, \phi_2)_{L^{2}(M,g)}
=(\chi, \phi_2)_{L^{2}(M,g)}
\end{equation*}

which shows $G^{+}(\psi)=\phi_2$,  where we have made use of the fact
that the supports of $G^-\chi$ and $\phi_2$ intersect in a compact set due
to global hyperbolicity. Similarly, $G^{-}(\psi)=-\phi_{1}$ and therefore
$G(\psi)=G^{+}(\psi)-G^{-}(\psi)=\phi_{2}+\phi_{1}=\phi$. In summary, we
may conclude that there exists $\psi\in U_0$ satisfying
$G(\psi)=\phi$.
\end{itemize}
\  \hfill $\Box$

\subsection{Restrictions}

We briefly discuss the restriction of Green operators to \emph{causally compatible} subsets $\Omega \subset M$, that is, sets such that
$$J_{\Omega}(x)=J_{M}(x)\cap \Omega \quad \forall x\in \Omega.$$
 
We have the following theorem (cf.\ \cite[Proposition 3.5.1]{bgp}).
\begin{Theorem}\label{res}
Let $M$ be a time oriented connected globally hyperbolic manifold with
a $C^{1,1}$ Lorentzian metric, $G^{+}$ be the advanced Green operator
for $\square_g$ and $\Omega\subset M$ be a causally compatible open
subset. Then we may define an advanced Green operator for the
restriction of $\square_g$ to $\Omega$ by
$$
{\tilde{G}}^{+}(\varphi):=G^{+}(\varphi_{\text{ext}})|_{\Omega}, \quad \hbox{for $\varphi\in  H_{\text{comp}}^{1}(M)$ with $\supp(\varphi)\subseteq {\Omega}$}
$$
where $\varphi_{\text{ext}}$ denotes the extension of $\phi$ by zero. Similar results hold for $G^{-}$.
\end{Theorem}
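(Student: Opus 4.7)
The plan is to verify that the prescribed formula defines an operator satisfying the three defining properties of an advanced Green operator (Definition~\ref{Green}) for $\square_g|_\Omega$, using the function spaces intrinsic to $\Omega$. First I would check well-definedness: given $\varphi \in H^1_{\text{comp}}(\Omega)$, the zero extension $\varphi_{\text{ext}}$ lies in $H^1_{\text{comp}}(M)$ because $\supp\varphi$ is a compact subset of the open set $\Omega \subset M$; Theorem~\ref{gf} then yields $G^+(\varphi_{\text{ext}}) \in H^2_{\text{loc}}(M)$, whose restriction to $\Omega$ lies in $H^2_{\text{loc}}(\Omega)$ by the purely local character of Sobolev regularity, so $\tilde G^+\colon H^1_{\text{comp}}(\Omega) \to H^2_{\text{loc}}(\Omega)$ makes sense.

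Property~1 follows directly from the locality of the differential operator $\square_g$, which ensures that restriction to an open set commutes with it:
\[
\square_g \tilde G^+(\varphi) = \bigl(\square_g G^+(\varphi_{\text{ext}})\bigr)\big|_\Omega = \varphi_{\text{ext}}\big|_\Omega = \varphi.
\]
For Property~2, given $\phi$ in the analogue $V_0^\Omega := \{\phi \in H^2_{\text{comp}}(\Omega) : \square_g\phi \in H^1_{\text{comp}}(\Omega)\}$, I would observe that since $\supp\phi$ is compactly contained in $\Omega$, its zero extension $\phi_{\text{ext}}$ lies in $H^2_{\text{comp}}(M)$ and vanishes on an open neighbourhood of $M \setminus \Omega$; locality of $\square_g$ therefore gives $\square_g \phi_{\text{ext}} = (\square_g \phi)_{\text{ext}}$ in $H^1_{\text{comp}}(M)$, so that $\phi_{\text{ext}} \in V_0^M$. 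Applying Property~2 for the global operator yields
\[
\tilde G^+(\square_g \phi) = G^+\bigl((\square_g \phi)_{\text{ext}}\bigr)\big|_\Omega = G^+(\square_g \phi_{\text{ext}})\big|_\Omega = \phi_{\text{ext}}\big|_\Omega = \phi.
\]

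The key step is Property~3, which is where the causal compatibility hypothesis is essential. From the global Green operator I have $\supp G^+(\varphi_{\text{ext}}) \subset J^+_M(\supp\varphi)$, and hence after restriction $\supp \tilde G^+(\varphi) \subset J^+_M(\supp\varphi) \cap \Omega$. Since $\supp\varphi \subset \Omega$, the causal compatibility identity $J^+_M(x) \cap \Omega = J^+_\Omega(x)$ for every $x \in \Omega$ implies $J^+_M(\supp\varphi) \cap \Omega = J^+_\Omega(\supp\varphi)$, which is exactly the intrinsic causal support condition required on $\Omega$. The construction of $\tilde G^-$ is completely analogous by time reversal. The genuinely delicate point is Property~3: without causal compatibility one could only bound the support by $J^+_M(\supp\varphi) \cap \Omega$, which need not lie inside the intrinsic causal future computed within $\Omega$, so that the restriction would fail to be an advanced Green operator for the restricted operator.
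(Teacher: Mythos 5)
Your proposal is correct and follows essentially the same route as the paper's proof: Properties 1 and 2 are verified from the locality of $\square_g$ (so restriction and extension by zero commute with the operator), and Property 3 uses the causal compatibility identity $J^+_M(\supp\varphi)\cap\Omega = J^+_\Omega(\supp\varphi)$. Your extra remarks on well-definedness and on why causal compatibility is indispensable are sound but simply make explicit what the paper takes as read.
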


\begin{rem}
  We denote the restriction of $\square_g$ to $\Omega$ by ${\widetilde
    \square}_g$. Notice that for all $u\in H_{\text{loc}}^{2}(M)$ we have
  ${\widetilde \square}_g(u|_{\Omega})=\square_g|_{\Omega}(u|_{\Omega})=(\square_gu)|_{\Omega}$
  and for all $u\in H^{2}(\Omega)$ with $\supp(u)\subseteq\Omega$ we
  have $({\widetilde\square}_gu)_{\text{ext}}=\square_g (u_{\text{ext}})$.
\end{rem}

\emph{Proof of Theorem \ref{res}:}

Property 1: 
Let $f\in H^{1}_{\text{comp}}(M)$ with $\supp(f)\subseteq {\Omega}$, then 
\begin{equation*}
{\widetilde\square}_{g}\tilde{G}^{+}(f)={\widetilde\square}_{g}({G}^{+}(f_{\text{ext}})|_{\Omega})
={\square_{g}}({G}^{+}(f_{\text{ext}}))|_{\Omega}
=f_{\text{ext}}|_{\Omega}
= f.
\end{equation*}

Property 2:
Let $f\in V_0$ with $\supp(f)\subset {\Omega}$, then 
\begin{equation*}
\tilde{G}^{+}({\widetilde{\square}}_{g}f)=({G}^{+}(({\widetilde{\square}}_{g}f)_{\text{ext}})|_{\Omega})\\
=({G}^{+}({\square_{g}}f_{\text{ext}}))|_{\Omega}\\
=f_{\text{ext}}|_{\Omega}\\
= f.
\end{equation*}

Property 3: For $f \in H^1_{\text{comp}}(M)$ with $\supp(f)\subseteq {\Omega}$ we have
\begin{multline*}
\supp(\tilde{G}^{+}(f))=\supp\left({G}^{+}(f_{\text{ext}})|_{\Omega}\right)
=\supp({G}^{+}(f_{\text{ext}}))\cap {\Omega}
\subset J_{M}^{+}(\supp(f_{\text{ext}}))\cap \Omega \\
= J_{M}^{+}(\supp(f))\cap \Omega
=  J_{\Omega}^{+}(\supp(f)).
\end{multline*}

\ \hfill $\Box$

\section{Quantisation functors}

In this section we discuss suitable categories and functors as in the smooth case that will allow us to construct the algebra of observables of the quantum theory.

\subsection{The functor {\rm SYMPL} and the categories {\rm GENHYP} and {\rm SYMPLVECT}}

This subsection defines a category based on the analytic results in the previous sections and a functor assigning to each object a symplectic space.

\begin{Definition}\label{genhyp}
Let {\rm GENHYP} denote the category whose objects are 3-tuples $(M,
G^{+}, G^{-})$ where $M$ is a time oriented connected globally
hyperbolic manifold as in Definition \ref{genhyp2} and
$G^{+},G^{-}$ are the unique Green operators of $\square_g$.  Let $X=(M_1, G_1^{+}, G_1^{-})$ and $Y=(M_2,
G_2^{+}, G_2^{-})$ be two objects in {\rm GENHYP}, then $\text{Mor}(X,Y)$
consists of all smooth maps $\iota:M_1\rightarrow M_2$ which are
time-orientation preserving isometric embeddings such that
$\iota(M_1)\subset M_2$ is a causally compatible open subset.
\end{Definition}

\begin{rem}
Theorem \ref{gws} shows that time oriented globally hyperbolic
spacetimes with $C^{1,1}$ metrics are objects in this category.
\end{rem}

Before considering quantisation we prove the following result on compatibility of Green operators.

\begin{Theorem}
Let $M_1$ and $M_2$ be as in Definition \ref{genhyp}, then
  the following diagram commutes:

\begin{center}
\begin{tikzcd}
H_{\text{comp}}^{1}(M_1)\arrow{r}{ext} \arrow{d}[swap]{G_{1}^{\pm}} 
& H_{\text{comp}}^{1}(M_2) \arrow{d}{G_{2}^{\pm}}\\ 
H^{2}_{\text{loc}}(M_1) \arrow[leftarrow]{r}{res} & H^{2}_{\text{loc}}(M_2)
\end{tikzcd}
\end{center}

\end{Theorem}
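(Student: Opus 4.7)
The plan is to reduce the commutativity of the diagram to the uniqueness of Green operators combined with the restriction result of Theorem \ref{res}. Identifying $M_1$ with its image $\iota(M_1) \subset M_2$ (which by hypothesis is a causally compatible open subset), the composition $\text{res} \circ G_2^{\pm} \circ \text{ext}$ is exactly the map $\tilde G^{\pm}$ constructed in Theorem \ref{res}.

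First I would observe that for $\phi \in H^1_{\text{comp}}(M_1)$, the extension $\phi_{\text{ext}}$ belongs to $H^1_{\text{comp}}(M_2)$ with support in $\iota(M_1)$, so that $G_2^{\pm}(\phi_{\text{ext}})$ makes sense and lies in $H^2_{\text{loc}}(M_2)$; restricting to $\iota(M_1)$ gives an element of $H^2_{\text{loc}}(M_1)$. Define $\tilde G^{\pm}: H^1_{\text{comp}}(M_1) \to H^2_{\text{loc}}(M_1)$ by $\tilde G^{\pm}(\phi) := G_2^{\pm}(\phi_{\text{ext}})|_{M_1}$. Then Theorem \ref{res}, applied to the causally compatible open subset $\Omega = \iota(M_1) \subset M_2$, shows that $\tilde G^{\pm}$ satisfies the three defining properties of an advanced (resp.\ retarded) Green operator for $\square_g$ on $M_1$: the identity properties follow from the intertwining of restriction and extension with $\square_g$ (since $\iota$ is an isometric embedding and $\iota(M_1)$ is open), and the causal support property uses causal compatibility to rewrite $J^{\pm}_{M_2}(\supp \phi_{\text{ext}}) \cap \iota(M_1) = J^{\pm}_{\iota(M_1)}(\supp \phi)$.

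Next I would invoke uniqueness of Green operators for $\square_g$ on $M_1$, which follows from Theorem \ref{gf} since $M_1$ is itself an object of \textrm{GENHYP} and satisfies generalised hyperbolicity (Theorem \ref{gws}). Uniqueness yields $\tilde G^{\pm} = G_1^{\pm}$, which is precisely the commutativity $G_1^{\pm}(\phi) = G_2^{\pm}(\phi_{\text{ext}})|_{M_1}$ asserted by the diagram.

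Since the hard work has been done in Theorem \ref{res} (establishing that restriction produces a Green operator on the subset) and in Theorem \ref{gf} (uniqueness), the present statement is essentially a corollary. The only place where attention is required is the explicit verification of the restriction/extension compatibility with $\square_g$ at the level of the Sobolev spaces $H^1_{\text{comp}}$ and $H^2_{\text{loc}}$, but this has already been addressed in the remark preceding the proof of Theorem \ref{res}, and no further obstacle arises.
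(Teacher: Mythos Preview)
Your proposal is correct and follows essentially the same approach as the paper: define $\tilde G^{\pm}(\phi) := G_2^{\pm}(\phi_{\text{ext}})|_{M_1}$, invoke Theorem \ref{res} to see that this is a Green operator on $M_1$, and then use the uniqueness part of Theorem \ref{gf} to conclude $\tilde G^{\pm} = G_1^{\pm}$. The paper's proof is just a two-sentence version of what you wrote.
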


\begin{proof}
Theorem \ref{res} shows that $\tilde{G}^{\pm}(\phi):=G_{2}^{\pm}(\phi_{\text{ext}})|_{M_{1}}$ is a Green
operator.  By uniqueness, this operator has to be equal to $G^{\pm}_{1}$ and the result follows.
\end{proof}

\begin{rem}
In the smooth setting \cite{bgp} the category {\rm LORFUND} is defined as the
category with objects being $5$-tuples $(M,F,G^{+}, G^-,P)$, where $M$
is a Lorentzian manifold, $F$ is real vector bundle over $M$ with
non-degenerate inner product, $P$ is a formally self-adjoint normally
hyperbolic operator acting on sections in $F$ and $G^{+}$, $G^-$ are the
advanced and retarded Green operators for $P$. The morphisms consist of maps $\iota$ such that
$\iota:M_1\rightarrow M_2$ is a time-orientation preserving
isometric embedding such that $\iota(M_1)\subset M_2$ is a causally
compatible open subset \cite{bgp}.  Moreover, given the condition of
globally hyperbolicity one can form the category {\rm GLOBHYP} where
objects are $3$-tuples $(M,F,P)$, where $M$ is a Lorentzian manifold,
$F$ is real vector bundle over $M$ with non-degenerate inner product,
$P$ is a formally self-adjoint normally hyperbolic operator acting on
sections in $F$. The morphisms are then given by maps
$\iota$ such that $\iota:M_1\rightarrow M_2$ is a time-orientation
preserving isometric embedding such that $\iota(M_1)\subset M_2$ is a
causally compatible open subset. The existence and uniqueness of Green
operators allow us to form a functor from {\rm GLOBHYP} to {\rm LORFUND}
\cite{bgp}.
\end{rem}

We now use the Green operators in order to construct a symplectic vector space. Let $(M, G^{+}, G^{-})$ be an object of {\rm GENHYP} and define $$\tilde{\omega}:H^1_{\text{comp}}(M) \times H^1_{\text{comp}}(M) \rightarrow \mathbb{R}$$
by
$$\tilde{\omega}(\phi,\psi)=\int_{M}G(\phi)\psi \nu_g$$

where $G=G^{+}-G^{-}$ is the causal propagator (see Theorem \ref{exact}). Then $\tilde \omega$ is bilinear and skew-symmetric by Theorem  \ref{greenadj}. However, $\tilde{\omega}$ is degenerate because $\ker(G)$ is nontrivial.
Moreover, using Theorem \ref{exact} we have that $$\ker(G)=\square_{g}V_{0}.$$

Therefore on the quotient space $V(M)=U_0/ \ker(G)=U_0/
\square_{g}V_{0}$  the degenerate form $\tilde{\omega}$ induces a
symplectic form which we denote by $\omega$.

\begin{rem}
It follows from Corollary \ref{sol} that $G$ is continuous so that
$\ker G$ is a closed subspace and hence $V(M)$ is a normed space (and in particular, Hausdorff). See the Discussion section for more details on this point. 
\end{rem}

Finally, we need a functor ${\rm
  SYMPL}:{\rm GENHYP} \rightarrow {\rm SYMPLVECT}$, where {\rm
  SYMPLVECT} is the category whose objects are symplectic vector
spaces   with morphisms given by symplectic maps, i.e., linear maps $A$ such that $\omega_1(f,g)=\omega_2(A f,Ag)$. The following theorem shows the existence of such a functor.

\begin{Theorem}

Let $X=(M_1, G_1^{+}, G_1^{-})$ and  $Y=(M_2, G_2^{+}, G_2^{-})$ be two
objects in {\rm GENHYP} and $f\in Mor(X,Y)$ be a morphism. Then $\text{ext} \colon H^1_{\text{comp}}(M_1) \rightarrow H^1_{\text{comp}}(M_2)$ maps the null space $\ker(G_{1})$ into the
null space $\ker(G_{2})$ and hence induces a continuous
symplectic linear map $V(M_{1})\rightarrow V(M_{2}).$
\end{Theorem}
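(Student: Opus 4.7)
The plan leans on two earlier ingredients: the exact complex of Theorem \ref{exact}, which identifies $\ker(G_i) = \square_{g_i}V_0(M_i)$, and the commutative diagram displayed just before this theorem, which gives $G_2^\pm(\phi_{\text{ext}})|_{M_1} = G_1^\pm(\phi)$ for every $\phi \in H^1_{\text{comp}}(M_1)$. First I would show that $\text{ext}$ maps $\ker(G_1)$ into $\ker(G_2)$: given $\phi \in \ker(G_1)$, exactness at $U_0$ furnishes $\chi \in V_0(M_1)$ with $\phi = \square_{g_1}\chi$. Because $\chi$ has compact support lying strictly inside the open subset $f(M_1) \subset M_2$, the extension by zero $\chi_{\text{ext}}$ lies in $H^2_{\text{comp}}(M_2)$ and vanishes on a neighbourhood of $M_2 \setminus f(M_1)$; the isometric-embedding condition then yields the (pointwise, hence distributional) identity $\square_{g_2}(\chi_{\text{ext}}) = (\square_{g_1}\chi)_{\text{ext}}$ on all of $M_2$, so $\chi_{\text{ext}} \in V_0(M_2)$ and $\text{ext}(\phi) = \square_{g_2}(\chi_{\text{ext}}) \in \ker(G_2)$.

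For the symplectic property of the induced linear map I would compute, for $\phi,\psi \in H^1_{\text{comp}}(M_1)$,
$$\omega_2\bigl([\text{ext}(\phi)],[\text{ext}(\psi)]\bigr) = \int_{M_2} G_2(\text{ext}(\phi))\cdot \text{ext}(\psi)\, \nu_{g_2}.$$
Since $\text{ext}(\psi)$ is supported in $f(M_1)$ and $f$ is an isometric embedding, the integral reduces to $\int_{M_1} G_2(\text{ext}(\phi))|_{M_1} \cdot \psi \, \nu_{g_1}$, and the commutative diagram identifies $G_2(\text{ext}(\phi))|_{M_1}$ with $G_1(\phi)$. This gives $\omega_2([\text{ext}(\phi)],[\text{ext}(\psi)]) = \omega_1([\phi],[\psi])$, which simultaneously proves well-definedness on the quotient and the symplectic property of the induced map.

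Finally, continuity of $\text{ext}: H^1_{\text{comp}}(M_1) \to H^1_{\text{comp}}(M_2)$ in the natural inductive-limit topologies is standard (extension by zero is bounded on each fixed compact-support piece), and the quotient projections onto $V(M_i) = H^1_{\text{comp}}(M_i)/\ker(G_i)$ are continuous because both kernels are closed by the Remark following Theorem \ref{exact}; the universal property of the quotient topology then delivers continuity of the induced linear map. The main delicate point is the bookkeeping through the isometric embedding, especially verifying that $\chi_{\text{ext}}$ simultaneously preserves $H^2_{\text{comp}}$ regularity and yields $\square_{g_2}\chi_{\text{ext}} \in H^1_{\text{comp}}(M_2)$; this ultimately rests on $\supp(\chi)$ being compact and strictly interior to the open set $f(M_1)$, so that extension by zero commutes with $\square$ and with taking supports.
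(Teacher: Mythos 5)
Your proof is correct and follows essentially the same route as the paper: use exactness at $U_0$ to write $\phi=\square_{g_1}\chi$, use the commutation of extension-by-zero with $\square_g$ on compactly supported functions interior to $f(M_1)$ to conclude $\phi_{\text{ext}}\in\square_{g_2}V_0(M_2)\subseteq\ker(G_2)$, use the commutative diagram $G_2^\pm(\phi_{\text{ext}})|_{M_1}=G_1^\pm(\phi)$ together with the causal support of $G_2(\phi_{\text{ext}})\cdot\psi_{\text{ext}}$ inside $f(M_1)$ to identify $\omega_1$ with $\omega_2\circ(\text{ext}\times\text{ext})$, and then factor the induced map through the quotients for continuity. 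One small caveat: the paper deliberately equips $H^1_{\text{comp}}$ with the norm topology inherited from $H^1(M)$ (not an inductive-limit topology, as stated in your last paragraph), and quotient projections are continuous by definition of the quotient topology irrespective of closedness of the kernel (closedness is needed to make $V(M_i)$ a normed, hence Hausdorff, space) --- but neither point affects the validity of the argument.
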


\begin{proof}

Let $\phi\in \ker(G_{1})$ then $\phi=\square_{g_{1}}\psi$ for some
$\psi\in V_0(M_1)$ where we have used Theorem \ref{exact}.

From the fact that
$G_{2}(\phi_{\text{ext}})=G_{2}((\square_{g_{1}}\psi)_{\text{ext}})=G_{2}\square_{g_{2}}\psi_{\text{ext}}=0$
we see that $\text{ext}(\ker(G_{1}))\subset \ker(G_{2})$. Hence, $\text{ext}$
induces a linear map from $V(M_{1})\rightarrow V(M_{2}) $. Moreover,
for  $\phi,\psi \in H^1_{\text{comp}}(M_1)$ we have on taking representatives
\begin{equation*}
{\omega}_{1}(\phi, \psi)=\int_{M_{1}}G_{1}(\phi)\psi \nu_{g_1}
=\int_{M_{1}}G_{2}(\phi_{\text{ext}})|_{M_{1}}\psi\nu_{g_1}
=\int_{M_{2}}G_{2}(\phi_{\text{ext}}) \psi_{\text{ext}}\nu_{g_2}
={\omega}_{2}(\phi_{\text{ext}}, \psi_{\text{ext}}).
\end{equation*}

Therefore, $\text{ext}$ induces a symplectic map from $V(M_{1})$ to
$V(M_{2})$. This induced map is also continuous (with the respective
quotient topologies), because it is the composition $\pi_2\circ E$ of
two continuous maps, namely $E \colon V(M_1) = H^1_{\text{comp}}(M_1)
/ \ker(G_1) \to H^1_{\text{comp}}(M_2)$, the factor map of
$\text{ext}$, and $\pi_2$, the quotient map $H^1_{\text{comp}}(M_2)
\to H^1_{\text{comp}}(M_2) / \ker(G_2) = V(M_2)$.
\end{proof}

\subsection{The functor {\rm CCR} and the categories $C^{*}$-{\rm ALG} and {\rm QUASILOCALALG}}

In this section we closely follow \cite{bgp} and define the algebraic structures that will be required to represent the observables of the quantum theory. 
The definitions are algebraic in
nature and do not require any further
analytical considerations with respect to the regularity of solutions to the Cauchy problem. Nevertheless, the $C^{1,1}$ causality theory is required and will be mentioned below when it is used.
Another modification with respect the smooth case is that
when considering the symplectic space $(V,\omega)$ in the smooth
theory one has $V(M)={\mathcal D}(M) / {\ker G}$ where
$G=G^+-G^-$ is a map $G: {\mathcal D}(M) \to C^\infty_{sc}(M)$. Employing the short-hand notation $[f]$ for the class $f + \ker(G)$ in $U_0 / \ker(G)$, the
symplectic form is given by
$\omega([f],[h])=(f,Gh)_{L^{2}(M,g)}$ whereas in this
section we will have $V(M)=H^1_{\text{comp}}(M) / {\ker G}$ where now
$G \colon H^1_{\text{comp}}(M) \to V_{\text{sc}}$ and the symplectic form is given by
$\omega([f],[h])=(f,Gh)_{L^{2}(M, g)}$.

\medskip

We now introduce the definition of a Weyl system and a {\rm CCR}-representation of $(V,\omega)$.
\begin{Definition}
A \emph{Weyl system} of the symplectic vector space ($V,\omega$)
consists of a $C^{*}$-algebra ${\cal{A}}$ with unit and a map
$W:V\rightarrow {\cal{A}}$ such that for all $\varphi,\psi\in V$,
\begin{enumerate}
  \item $W(0)=1$,
  \item $W(-\varphi)=W(\varphi)^{*}$,
  \item $W(\varphi)\cdot W(\psi)=e^{-i\omega(\varphi,\psi)/2}W(\varphi+\psi)$.
\end{enumerate}
A Weyl system $({\cal{A}},W )$ of a symplectic vector space
$(V,\omega)$ is called a \emph{CCR-representation} of
$(V,\omega)$ if ${\cal{A}}$ is generated as a $C^{*}$- algebra by the
elements $W(\varphi)$, $\varphi\in V$. In this case we call
${\cal{A}}$ a {\rm CCR}- algebra of $(V,\omega)$ and write it as
$CCR(V,\omega)$.
\end{Definition}

It is always possible to construct a CCR-representation $(\text{CCR}(V,\omega),W)$ for any symplectic vector space $(V,\omega).$ (See \cite[Example 4.2.2]{bgp} ). Moreover, the construction is categorical in the sense that if $(V_{1},\omega_{1})$ and $(V_{2},\omega_{2})$ are two symplectic vector spaces and  $S:V_{1}\rightarrow V_{2}$ is a symplectic linear map. Then, there exist a unique injective $*$-morphism $  \text{CCR}(S):\text{CCR}(V_{1},\omega_{1})\rightarrow \text{CCR}(V_{2},\omega_{2})$


The proof can be found in Corollary 4.2.11 in \cite{bgp}.

From uniqueness of the map $\text{CCR}(S)$   it is possible to define  a functor
$${\rm CCR}:{\rm SYMPL} \rightarrow C^{*}{\rm-ALG}$$
where $C^{*}{\rm-ALG}$ is the category whose objects are $C^{*}$-algebras and whose morphisms are injective unit preserving *-morphisms.

A set  $I$ is called a directed set with orthogonality relation, if it carries a partial order $\le$ and a symmetric relation $\perp$ between its  elements such that:

\begin{enumerate}
\item for all $\alpha,\beta\in I$ there exists a $\gamma\in I$ with $\alpha\le\gamma$ and $\beta\le \gamma$,
\item for every $\alpha\in I$ there is a $\beta\in I$ with $\alpha\perp\beta$,
\item $\alpha\le \beta$ and $\beta\perp\gamma$, then $\alpha\perp\gamma$,
\item if $\alpha\perp\beta$ and $\alpha\perp\gamma$, then there exists a $\delta\in I$ such that $\beta\le\delta,\gamma\le\delta$ and $\alpha\perp\delta$.
\end{enumerate}

Sets of this type allow to define the objects and morphisms of the category QUASILOCALALG.

\begin{Definition}

The objects of the category QUASILOCALALG are bosonic quasi-local $C^{*}$-algebras which are pairs $({\cal{U}},\{{\cal{U}_{\alpha}}\}_{\alpha\in I})$ of a $C^{*}$- algebra ${\cal{U}}$ and a family $\{ {\cal{U}_{\alpha}}\}_{\alpha\in I}$ of $C^{*}$-subalgebras, where $I$ is a directed set with orthogonality relation such that the following holds:
\begin{enumerate}
\item$ {\cal{U}_{\alpha}}\subset {\cal{U}_{\beta}}$ whenever $\alpha\le\beta$,
\item ${\cal{U}}=\overline{\bigcup_{\alpha}{\cal{U}_{\alpha}}}$,
\item The algebras ${\cal{U}_{\alpha}}$ have a common unit 1,
\item If  $\alpha\perp\beta$, then the commutators of elements from ${\cal{U}_{\alpha}}$ with those of ${\cal{U}_{\beta}}$ are trivial.
\end{enumerate}

A morphism  between two quasi-local $C^{*}$-algebras  $({\cal{U}},\{{\cal{U}_{\alpha}}\}_{\alpha\in I})$ and  $({\cal{V}},\{{\cal{V}_{\beta}}\}_{\beta\in J})$ is defined as a pair $(\varphi, \Phi)$ where $\Phi:{\cal{U}}\rightarrow {\cal{V}}$ is a unit-preserving $C^{*}$-morphism and $\varphi:I\rightarrow J$ is a map such that
\begin{enumerate}
\item $\varphi$ is monic, i.e., if $\alpha_1 \le \alpha_2$ in $I$ then $\varphi(\alpha_1)\le\varphi(\alpha_2)$ in $J$,
\item $\varphi$ preserves orthogonality, i.e., if $\alpha_1\perp \alpha_2$ in $I$, then $\varphi(\alpha_1)\perp \varphi(\alpha_2)$,
\item $\Phi({\cal{U}})\subset {\cal{V}}_{\varphi(\alpha)}$ for all $\alpha\in I$.
\end{enumerate}
\end{Definition}

In the remainder of this section we discuss a functor from {\rm GENHYP} to {\rm
  QUASILOCALALG}. Let $(M, G^+, G^-)$ be an object in  {\rm GENHYP} and
$$I=:\{O\subset M| O \mbox{ is open, relatively compact, causally compatible, globally hyperbolic}\}\cup\{\emptyset,M\}.$$

The relation $O \perp O'$ means that $O$ and $O'$ are causally independent, i.e., there is no causal curve connecting a point in $\overline{O}$ to a point in $\overline{O'}$.

\begin{rem}
  The proof that the set $I$ is a directed set with orthogonality
  relation requires results from causality theory in a low regularity
  setting \cite{cg,saemann,KSSV} to obtain Lemma A.5.11 in \cite{bgp}
  and the existence of smooth time functions \cite{BernardSuhr}
  \cite{MinguzziAddenda} to obtain Proposition A.5.13 in
  \cite{bgp}. Properties 1 and 2 follow upon taking $\alpha=M$,
  $\beta=\emptyset$. Property 3 follows from the observation that
  $O\subset O' $ implies $J(O)\subset J(O')$, and Property 4 is
  implied by Lemma 4.4.8 in \cite{bgp} with the appropriate
  modifications of Lemma A.5.11 and Proposition A.5.13 therein.
\end{rem}

For any non-empty set $O\in I$ take the restriction of the operator $\square_g$ to the region $O$. Due to causal compatibility of $O\subset M$ the restriction of Green operators $G^{+},G^{-}$ to the region $O$ yield Green operators $G_{O}^{+},G_{O}^{-}$. Therefore, we get an object $(O, G_{O}^{+},G_{O}^{-})$ for each $O\neq\emptyset\in I$. For $\emptyset\neq O_1\subset O_2$ the inclusion induces a morphism $\iota_{O_2,O_1}$ in the category {\rm GENHYP}. This morphism is given by the embedding $O_1\rightarrow O_2$. Let $\alpha_{O_2, O_1}$ denote the morphism $\text{CCR}\circ \text{SYMPL} (\iota_{O_2,O_1})$ in $C^{*}$-ALG and  recall that $\alpha_{O_2, O_1}$ is an injective unit preserving $*$-morphism. 

We set for $\emptyset\neq O\in I$,
$$(V_O, \omega_O):={\rm SYMPL}(O, G_{O}^{+},G_{O}^{-}),$$
and for $O\in I , O\neq\emptyset,M$,
$$ {\cal{U}}_{O}:=\alpha_{M,O}(\text{CCR}(V_O,\omega_O)),$$
for $O=M$ define 
$$ {\cal{U}}_{M}:=C^{*}\left({\cal{U}}_{O\in I , O\neq\emptyset,M}\right)$$
which is the algebra of $\text{CCR}(V_M,\omega_M)$ generated by all the ${\cal{U}}_{O}$;
for $O=\emptyset$, set ${\cal{U}}_{\emptyset}=\mathbb{C}$.

Now we assign to any morphism  in {\rm GENHYP} a
morphism between quasi-local algebras in {\rm QUASILOCALALG}: Consider a  morphism $\iota:(M,G^{+},G^{-})\rightarrow (N,\tilde{G}^{+},\tilde{G}^{-})$ in {\rm GENHYP}. Let $I_1,I_2$ denote the index sets associated to $M,N$ respectively. We define a map $\varphi:I_1\rightarrow I_2$ by $M\rightarrow N$ and $O_1\rightarrow \iota(O_1)$ if $O_1\neq M$. Since $\iota$ is an embedding such that $\iota(M)\subset N$ is causally compatible, the map $\varphi$ is monotonic and preserves causal independence. Therefore, $(\varphi,\Phi)$ with $\Phi=\text{CCR}\circ \text{SYMPL}(\iota)$  is the required morphism. 
To be precise we have the following result.

\begin{Theorem}\label{quant}
The assignment $ (M,G^{+},G^{-})\rightarrow ( {\cal{U}}_{M}, \{{\cal{U}}_{O}\}_{O\in I})$ and $\iota\rightarrow (\varphi, \Phi)$ yields a functor {\rm QUANT} from {\rm GENHYP} to {\rm QUASILOCALALG}.
\end{Theorem}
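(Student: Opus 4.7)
The plan is to verify the three conditions for QUANT to be a functor from GENHYP to QUASILOCALALG: (i) the object assignment produces a bosonic quasi-local $C^*$-algebra, (ii) the morphism assignment produces a morphism in QUASILOCALALG, and (iii) identities and compositions are preserved.

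For (i), I would start from the remark preceding the theorem, which establishes that $I$ is a directed set with orthogonality relation via the $C^{1,1}$ causality results of \cite{KSSV,saemann,cg} together with the existence of smooth temporal functions \cite{BernardSuhr,MinguzziAddenda}. Monotonicity $\mathcal{U}_{O_1} \subset \mathcal{U}_{O_2}$ for $O_1 \subset O_2$ will follow from the factorisation $\alpha_{M,O_1} = \alpha_{M,O_2} \circ \alpha_{O_2,O_1}$, which is a consequence of functoriality of $\mathrm{CCR}\circ \mathrm{SYMPL}$ applied to $\iota_{M,O_1} = \iota_{M,O_2} \circ \iota_{O_2,O_1}$. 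Density $\mathcal{U}_M = \overline{\bigcup_O \mathcal{U}_O}$ holds by the very definition of $\mathcal{U}_M$, and the common unit is automatic because $W(0)$ is the unit in every CCR algebra and each $\alpha$-map is unit preserving.

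The central nontrivial point is causal commutativity. If $O_1 \perp O_2$, the generators $W([f_i])$ with $f_i \in H^1_{\mathrm{comp}}(O_i)$ must commute, and by the Weyl relation this reduces to showing $\omega_M([(f_1)_{\mathrm{ext}}],[(f_2)_{\mathrm{ext}}]) = 0$. Writing the symplectic pairing as $\int_M G((f_1)_{\mathrm{ext}})\,(f_2)_{\mathrm{ext}}\,\nu_g$ and using the causal support bound $\supp(G (f_1)_{\mathrm{ext}}) \subset J(\overline{O_1})$ inherited from Theorem \ref{gf}, causal independence of $O_1$ and $O_2$ yields disjointness from $\supp((f_2)_{\mathrm{ext}}) \subset \overline{O_2}$, so the integral vanishes. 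Commutativity of the generators then extends to commutativity of the generated $C^*$-subalgebras.

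For (ii), given a morphism $\iota \colon (M,G^+,G^-) \to (N,\tilde G^+, \tilde G^-)$ in GENHYP, the map $\varphi(O) := \iota(O)$ (with $\varphi(M)=N$, $\varphi(\emptyset)=\emptyset$) is monotonic and orthogonality-preserving because $\iota$ is a time-orientation preserving isometric embedding onto a causally compatible open subset. Setting $\Phi := \mathrm{CCR}\circ \mathrm{SYMPL}(\iota)$ gives a unit-preserving $*$-morphism, and the inclusion $\Phi(\mathcal{U}_O) \subset \mathcal{V}_{\varphi(O)}$ follows by applying functoriality of $\mathrm{CCR}\circ\mathrm{SYMPL}$ to the factorisation $\iota \circ \iota_{M,O} = \iota_{N,\varphi(O)} \circ (\iota|_O)$. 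Condition (iii) is then a direct consequence of functoriality of each of $\mathrm{CCR}$ and $\mathrm{SYMPL}$ combined with the fact that set-theoretic image behaves correctly under composition. The main obstacle is property 4 of the bosonic quasi-local algebra, but it resolves cleanly because the causal support statement of Theorem \ref{gf} carries over to the $C^{1,1}$ setting without modification.
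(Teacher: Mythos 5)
Your proposal is correct and follows essentially the same route as the paper, which defers to the BGP lemmas (4.4.10, 4.4.11, 4.4.13) with the low-regularity modifications noted in Remark~\ref{axioms}; you have unfolded that argument into a self-contained sketch. A few points worth flagging. Your claim that the density property $\mathcal{U}_M = \overline{\bigcup_O \mathcal{U}_O}$ holds ``by the very definition of $\mathcal{U}_M$'' is slightly too quick: the definition gives the $C^*$-algebra \emph{generated} by the $\mathcal{U}_O$, and equality with the closure of the union requires that the union already be a $*$-subalgebra, which in turn uses directedness of $I$ (so that products of elements of $\mathcal{U}_{O_1}$, $\mathcal{U}_{O_2}$ land in some $\mathcal{U}_{O_3}$ with $O_3$ still relatively compact and $\neq M$). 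You correctly identify causal commutativity as the central content, and the reduction via the Weyl relations to $\omega_M([(f_1)_{\mathrm{ext}}],[(f_2)_{\mathrm{ext}}])=0$ is right; you should also note that extending commutativity of the generating Weyl elements to commutativity of the full $C^*$-subalgebras uses continuity of multiplication and the topological generation, which is the content of BGP's Lemma 4.4.10 adapted to $H^1_{\mathrm{comp}}$ as the paper prescribes. Finally, the phrase ``carries over to the $C^{1,1}$ setting without modification'' is a little misleading --- Theorem~\ref{gf} is already formulated in the generalised hyperbolicity setting, so there is nothing to carry over; the point is rather that its causal support property serves the same role as in the smooth theory.
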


\begin{proof}
A detailed proof can be found in \cite[Lemma 4.4.10, Theorem 4.4.11 and Lemma 4.4.13]{bgp}.
\end{proof}

\begin{rem}\label{axioms}
In the low regularity setting the proof above requires one to consider elements  $\phi\in H^1_{\text{comp}}(O)$ rather than  $\phi\in \mathcal D(O)$ in Lemma 4.4.10 and  the  low regularity quotient $H^1_{\text{comp}}(M)/\ker(G)$ instead of $\mathcal D(M)/\ker (G)$ with  $C^{1,1}$ causality theory in  Lemma 4.4.13 in \cite{bgp}.
\end{rem}

\subsection{The Haag-Kastler axioms }

In this subsection we show that the functor $\rm QUANT$  given by Theorem \ref{quant} satisfies the Haag-Kastler axioms.

\begin{Theorem}\label{quantum}
The functor ${\rm QUANT}:{\rm GENHYP}\rightarrow{\rm QUASILOCALALG}$ satisfies the Haag-Kastler axioms, i.e., for every object $(M,G^{+},G^{-})$ in {\rm GENHYP} the corresponding quasi-local $C^{*}$-algebra $({\cal{U}}_{M},\{\cal{U}_{O}\}_{O\in I})$ satisfies:

\begin{enumerate}
\item If $O_1\subset O_2$ then ${ {\cal{U}}_{O_1}}\subset {{\cal{U}}_{O_2}}$ for all $O_1,O_2 \in I$. 
\item ${\cal{U}}_{M}=\overline{\bigcup_{O\in I,  O\neq M,\emptyset}{\cal{U}_{O}}}$.
\item ${\cal{U}}_{M}$ is simple.
\item The ${\cal{U}}_{O}$'s have a common unit 1.
\item For all $O_1,O_2\in I$ with $J(\overline{O_1})\cap \overline{O_2}=\emptyset$ the subalgebras ${\cal{U}}_{O_1},{\cal{U}}_{O_2}$ commute.
\item (Time-slice axiom) Let $O_1\subset O_2$ be nonempty element of $I$ admitting a common  smooth spacelike Cauchy hypersurface, then ${\cal{U}}_{O_1}={\cal{U}}_{O_2}$.
\item Let $O_1,O_2\in I$ and let the Cauchy development $D(O_2)$ be relatively compact in $M$. If $O_1\subset D(O_2)$, then $ {{\cal{U}}_{O_1}}\subset {{\cal{U}}_{O_2}}$.
\end{enumerate}
\end{Theorem}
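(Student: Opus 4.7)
The plan is to verify each of the seven axioms in turn, following the smooth blueprint \cite[Theorem 4.5.1]{bgp} and substituting the $C^{1,1}$ results of Section~\ref{GH} together with the Sobolev machinery of Section~\ref{RN} wherever the smooth proof would appeal to its classical analogue. Axioms 1, 2 and 4 are essentially definitional: the inclusion-induced morphisms $\alpha_{O_2,O_1}$ are injective unital $\ast$-morphisms by Theorem~\ref{quant}, $\mathcal U_M$ is by construction the $C^\ast$-closure of the union of the $\mathcal U_O$, and every $\mathcal U_O$ shares the Weyl unit $W(0)=1$. Axiom 3 is the classical fact that $\text{CCR}(V_M,\omega_M)$ is simple whenever $\omega_M$ is non-degenerate, which is exactly why we quotient $U_0$ by $\ker(G)$ to form $V(M)$.

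For Axiom 5, take $\phi_j\in H^1_{\text{comp}}(O_j)$ representing classes $[\phi_j]\in V_{O_j}$ and extended by zero to $M$. The Weyl relations reduce the commutativity of $W([\phi_1])$ and $W([\phi_2])$ to
\[
\omega_M([\phi_1],[\phi_2])=\int_M G(\phi_1)\,\phi_2\,\nu_g=0,
\]
and condition~3 of Definition~\ref{Green} gives $\supp(G\phi_1)\subset J(\overline{O_1})$, which is disjoint from $\overline{O_2}\supseteq\supp\phi_2$ by hypothesis; the integrand thus vanishes identically.

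Axiom 6 is the substantive step. Given $O_1\subset O_2$ sharing a smooth spacelike Cauchy hypersurface $\Sigma$, I would show that every $[\phi]\in V(O_2)$ with $\phi\in H^1_{\text{comp}}(O_2)$ admits a representative in $H^1_{\text{comp}}(O_1)$. Writing $u^\pm:=G^\pm_{O_2}\phi$ and $u:=u^+-u^-\in V_{\text{sc}}(O_2)$, Theorem~\ref{exact} gives $\square_g u=0$. Using the smooth-temporal-function results \cite{BernardSuhr,MinguzziAddenda}, pick smooth spacelike Cauchy hypersurfaces $\Sigma^\pm\subset O_1$ of $O_2$ with $\Sigma^-$ strictly to the past of $\Sigma^+$, and a smooth cutoff $\chi$ on $O_2$ equal to $1$ on $J^+(\Sigma^+)$ and to $0$ on $J^-(\Sigma^-)$. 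Set
\[
\phi':=\square_g(\chi u)=[\square_g,\chi]\,u,
\]
which is supported in the compact slab $J^+(\Sigma^-)\cap J^-(\Sigma^+)\cap J(\supp\phi)\subset O_1$ (compact by the $C^{1,1}$ causality results of \cite{KSSV,saemann}) and belongs to $H^1(M)$ because $[\square_g,\chi]$ is first-order with $C^{0,1}$ coefficients acting on $u\in H^2_{\text{loc}}$. Setting $\psi:=(\chi-1)u^+-\chi u^-$, the causal supports of $u^\pm$ show $\psi\in V_0(O_2)$, and a direct computation using $\square_g u^\pm=\phi$ gives $\square_g\psi=\phi'-\phi$, so $[\phi]=[\phi']$ in $V(O_2)$ and $\mathcal U_{O_1}=\mathcal U_{O_2}$. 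Axiom 7 reduces to Axiom 6 by enlarging $O_2$ slightly inside $D(O_2)$ to obtain a common Cauchy hypersurface with $O_1$, again invoking the $C^{1,1}$ causality toolkit in place of its smooth analogue.

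The main obstacle is the Sobolev bookkeeping in Axiom 6: in the smooth case one freely manipulates $\mathcal D$-functions, whereas here the sources must remain in $H^1_{\text{comp}}$ and the solutions in $H^2_{\text{loc}}$, and the commutator $[\square_g,\chi]$ lands in the correct space precisely because $g\in C^{1,1}$ produces $C^{0,1}$ commutator coefficients. The compactness of the slab support rests on $J^+(K)\cap J^-(\Sigma)$ being compact for compact $K$ and Cauchy $\Sigma$, which is exactly the content of the $C^{1,1}$ causality results of Section~\ref{GH}.
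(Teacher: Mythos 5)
Your treatment of axioms 1--5 and 7 coincides with the paper, which delegates to BGP (Lemmas 4.4.10 and 4.4.13, Remark 4.5.3, Theorem 4.5.1) with the $C^{1,1}$ adaptations recorded in Remark~\ref{axioms}; your direct argument for axiom 5 is in effect just BGP Lemma 4.4.10 spelled out. The real divergence is in the time-slice axiom (6). The paper proves it through Lemma~\ref{timeaxiom}: the source $\varphi$ is split by a partition of unity $\{f_\pm\}$ subordinate to $\{I^+(\Sigma_-),I^-(\Sigma_+)\}$, one sets $\chi_\pm = \rho_\pm G^\mp(f_\pm\varphi)$ with the cutoffs $\rho_\pm$ supplied by Lemma~\ref{timefun}, and shows $\psi = \varphi - \square_g(\chi_+ - \chi_-)$ is compactly supported in $O_1$. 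You instead apply a single temporal cutoff $\chi$ directly to the homogeneous solution $u = G\phi$ and take $\phi' = \square_g(\chi u) = [\square_g,\chi]\,u$ as the new representative. The algebra is correct: $\phi - \phi' = -\square_g(\chi u - u^+)$, and $\psi := \chi u - u^+ \in V_0(O_2)$ because its support sits in the compact sets $J^-(\Sigma^+)\cap J^+(\supp\phi)$ and $J^+(\Sigma^-)\cap J^-(\supp\phi)$. This route is leaner than the paper's and avoids the partition of unity entirely.

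The gap is the assertion $J^+(\Sigma^-)\cap J^-(\Sigma^+)\cap J(\supp\phi) \subset O_1$, which you state without proof (your parenthetical addresses only compactness). This containment is precisely the nontrivial causality fact that Lemma~\ref{timefun}(3) encapsulates, and it is not automatic: one first needs $\Sigma^\pm$ to be Cauchy hypersurfaces of $O_1$, not merely subsets of $O_1$ that are Cauchy for $O_2$ --- take them as level sets of a Cauchy time function on $O_1$ and invoke the fact quoted after Lemma~\ref{timefun} that such level sets remain Cauchy for $O_2$. One then argues that the segment of any inextendible $O_2$-causal curve between its (unique) $\Sigma^-$- and $\Sigma^+$-crossings lies entirely in $O_1$: the connected component in $O_1$ of that curve through the $\Sigma^-$-point is inextendible in $O_1$, hence meets $\Sigma^+$, and by uniqueness of the $\Sigma^+$-crossing in $O_2$ the whole segment --- and with it the slab --- lies in $O_1$. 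Without this step you have replaced Lemma~\ref{timefun} by an unsupported claim, and it is exactly here that the $C^{1,1}$ causality results are doing the actual work.
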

 
The proof will be based on the  following two lemmas. 

\begin{Lemma}\label{timefun}
Let $O$ be a causally compatible globally hyperbolic open subset of a globally hyperbolic manifold $M$. Assume there exists a  smooth spacelike Cauchy hypersurface $\Sigma$ of $O$ which is also a Cauchy hypersurface of $M$, let $h$ be a  smooth Cauchy time-function on $O$ and $K\subset M$ be compact. Assume that there exists $t\in \mathbb{R}$ with $K\subset I^{+}(h^{-1}(t))$.
Then there is a smooth function $\rho:M\rightarrow [0,1]$ such that 

\begin{enumerate}
\item $\rho=1$ on a neighbourhood of $K$,
\item $\supp(\rho)\cap J^{-}(K)\subset M$ is compact, and
\item $\overline{\{x\in M|0\le\rho(x)\le 1\}}\cap  J^{-}(K)$ is compact and contained in $O$.
\end{enumerate}
\end{Lemma}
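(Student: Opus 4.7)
My approach follows the template of the analogous smooth lemma in B\"ar--Ginoux--Pf\"affle \cite{bgp}, adapted to the $C^{1,1}$ setting via the causality tools of \cite{KSSV,BernardSuhr,MinguzziAddenda}. First, I would use compactness of $K$ together with openness of the chronological future to choose real numbers $t<t_1<t_2$ with $K\subset I_M^+(h^{-1}(t_2))$: for each $k\in K$ there exists $y_k\in h^{-1}(t)$ with $y_k\ll_M k$, and pushing $y_k$ slightly forward along a timelike segment in $O$ yields $y_k'\in O$ with $h(y_k')>t$ and still $y_k'\ll_M k$; extracting a finite subcover of $K$ by sets $I_M^+(y_{k_i}')$ then fixes $t_1$ and $t_2$.

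Next, I would pick $\chi\in C^\infty(\R,[0,1])$ with $\chi\equiv 0$ on $(-\infty,t_1]$ and $\chi\equiv 1$ on $[t_2,\infty)$ and set $\rho_0:=\chi\circ h\in C^\infty(O,[0,1])$. The core technical claim is that the sandwich set
\[
R \;:=\; J_M^-(K)\cap J_M^+(h^{-1}(t_1))
\]
is compact and contained in $O$. Compactness follows because $R$ is a closed subset of the causal diamond $J_M^-(K)\cap J_M^+(\Sigma)$, which is compact by global hyperbolicity of $M$ together with the $C^{1,1}$ causality results of \cite{KSSV} (after, if necessary, replacing $\Sigma$ by an earlier smooth spacelike Cauchy hypersurface lying below $h^{-1}(t_1)$, as guaranteed by \cite{MinguzziAddenda}). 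Containment in $O$ exploits the fact that $h^{-1}(t_1)$ is a Cauchy hypersurface of the globally hyperbolic set $O$ together with causal compatibility of $O\subset M$: any point $x\in R$ is the endpoint of a future-directed causal curve $\gamma$ in $M$ starting on $h^{-1}(t_1)\subset O$, and any exit of $\gamma$ from $O$ would, via closure of the causal relation \cite[Lemma A.5.5]{bgp} applied to an inextendible continuation and the Cauchy property of $h^{-1}(t_1)$ in $O$, yield a contradiction.

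Finally, I would glue to obtain $\rho$ on all of $M$: since the transition region $\{0<\rho_0<1\}\cap J_M^-(K)\subseteq R$ has closure inside the compact set $R\subset O$, one extends $\rho_0$ by the constants $0$ and $1$ on appropriate open subsets of $M\setminus O$ via a smooth partition of unity subordinate to a cover $\{U_1,U_0,O\}$ of $M$, with $U_1$ an open neighbourhood of $K$ satisfying $U_1\cap O\subset h^{-1}((t_2,\infty))$ and $U_0$ an open set containing $M\setminus(U_1\cup\overline R)$ with $U_0\cap O\subset h^{-1}((-\infty,t_1))$. Properties (1), (2) and (3) of the lemma then follow from the construction, using $\supp(\rho)\cap J_M^-(K)\subseteq\overline{U_1\cap J_M^-(K)}\cup R$ for (2) and $\{0<\rho<1\}\cap J_M^-(K)\subseteq R$ for (3).

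The hardest step is establishing $R\subseteq O$: it is where the subtle interplay between global hyperbolicity of $O$, the common Cauchy hypersurface $\Sigma$ with $M$, and $C^{1,1}$ causality theory is essential, and it has no purely algebraic substitute.
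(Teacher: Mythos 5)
Your overall strategy is exactly the one used in the paper (which defers to the corresponding lemma in \cite{bgp}): pick $t<t_1<t_2$ with $K\subset I_M^+(h^{-1}(t_2))$, cut off along $h$, and extend by constants to $M\setminus O$. The compactness of $R:=J_M^-(K)\cap J_M^+(h^{-1}(t_1))$ is also correct, because $h^{-1}(t_1)$ is a Cauchy hypersurface of $M$ (by the inheritance property stated in the paper), so $R$ is a closed subset of a causal diamond.

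However, there is a genuine gap: the claim $R\subset O$ is \emph{false}. Notice that $K\subset R$ (since $K\subset J_M^-(K)$ and $K\subset I_M^+(h^{-1}(t_2))\subset J_M^+(h^{-1}(t_1))$), but the hypotheses only place $K$ in $I_M^+(h^{-1}(t))$ and do not put $K$ inside $O$. Indeed, in the application (Lemma \ref{timeaxiom}) $K$ is a compact subset of $M$ that typically lies outside $O$. Consequently the argument sketched for $R\subseteq O$ cannot close: a causal curve from $h^{-1}(t_1)$ that leaves $O$ and ends up in $J_M^-(K)$ produces no contradiction, because the target set $J_M^-(K)$ is not confined to $O$. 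What is actually needed, and what the BGP-style proof establishes, is the smaller claim that the transition slab $J_M^+(h^{-1}(t_1))\cap J_M^-(h^{-1}(t_2))$ equals $h^{-1}([t_1,t_2])$ and hence lies in $O$. To prove this one uses that both level sets are acausal Cauchy hypersurfaces of $O$ and of $M$, plus causal compatibility: if a causal curve from $y_1\in h^{-1}(t_1)$ to $y_2\in h^{-1}(t_2)$ left $O$, the $O$-portion would be future-inextendible in $O$ and would meet $h^{-1}(t_2)$ before exiting, giving two causally related points of $h^{-1}(t_2)$ in $M$, which by causal compatibility are causally related in $O$, contradicting acausality. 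With this corrected slab containment, the gluing is cleanest without a partition of unity: set $\rho\equiv 0$ on $I_M^-(h^{-1}(t_1))$, $\rho\equiv 1$ on $I_M^+(h^{-1}(t_2))$, and $\rho=\chi\circ h$ on $O$; these three open sets cover $M$ (precisely because the slab lies in $O$), the definitions agree on overlaps, and properties (1)--(3) then follow since $\{0<\rho<1\}\subset h^{-1}((t_1,t_2))$ and $\overline{\{0<\rho<1\}}\cap J_M^-(K)$ is a closed subset of the compact set $R$ that lies in the slab, hence in $O$.
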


The proof of Lemma \ref{timefun} in the $C^{1,1}$ setting can be
carried out following that of \cite{bgp} with suitable modifications
using results of low regularity causality theory \cite{cg, saemann,
  KSSV}. In particular, the proof uses the facts that, the causal
relation is closed, that if $S,S_t$ are Cauchy hypersurfaces of $O$
and $S$ is also a Cauchy hypersurface of $M$, then $S_t$ is a Cauchy
hypersurface of $M$ and the existence of Cauchy hypersurfaces in
globally hyperbolic spacetimes.

\begin{Lemma}\label{timeaxiom}
Let $(M,G^{+},G^{-})$ be an object of {\rm GENHYP} and $O$ be a causally compatible globally hyperbolic open subset of $M$. Assume that there exists a Cauchy hypersurface $\Sigma$ which is also a Cauchy hypersurface of $M$. Let $\varphi \in U_0$,
then there exist $\chi\in V_{0}$ and $\psi\in U_0$ such that $\supp(\psi)\subset O$ and $\varphi=\psi+\square_g\chi$.
\end{Lemma}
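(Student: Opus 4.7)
The strategy is to combine the advanced and retarded Green operators applied to $\varphi$ with a smooth temporal cutoff. For any smooth $\rho\colon M \to [0,1]$, the ansatz $\chi := (1-\rho)G^+\varphi + \rho G^-\varphi$ satisfies, using $\square_g G^\pm\varphi = \varphi$ and $[\square_g, 1-\rho] = -[\square_g, \rho]$,
\[
\square_g \chi = (1-\rho)\varphi + \rho\varphi - [\square_g,\rho]G^+\varphi + [\square_g,\rho]G^-\varphi = \varphi - [\square_g,\rho]G\varphi,
\]
so that $\psi := \varphi - \square_g\chi = [\square_g,\rho]G\varphi$ is automatically supported in $\supp(d\rho)\cap\supp(G\varphi)$. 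The plan is to choose $\rho = \phi\circ t$ for a smooth step-function $\phi$ and a smooth temporal function $t$ with $\Sigma=t^{-1}(0)$, taking the transition interval $[-\epsilon,\epsilon]$ so narrow that the whole relevant intersection lies in $O$, while also ensuring that $\chi$ is compactly supported.

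To realize this, I will first invoke \cite{BernardSuhr, MinguzziAddenda} to obtain a smooth temporal function $t\colon M \to \R$ with $t^{-1}(0) = \Sigma$; write $\Sigma_\tau := t^{-1}(\tau)$ and $K := \supp\varphi$. The set $K' := J(K)\cap \Sigma$ is compact in $\Sigma \subset O$ by $C^{1,1}$ causality, so it has an open neighborhood $V$ with $\overline{V}\subset O$. The key topological claim is that for every $\epsilon>0$ the set $C_\epsilon := J(K)\cap t^{-1}([-\epsilon,\epsilon])$ is compact (its two halves sit inside compact causal diamonds of the form $J^+(K)\cap J^-(J^+(K)\cap \Sigma_\epsilon)$ and $J^-(K)\cap J^+(J^-(K)\cap \Sigma_{-\epsilon})$), and the nested family $(C_\epsilon)_{\epsilon>0}$ decreases to $K'$ as $\epsilon \downarrow 0$; closedness of $J(K)$ combined with the finite intersection property then yields $\epsilon>0$ with $C_\epsilon\subset V\subset O$. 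I then pick a monotone $\phi\in C^\infty(\R)$ with $\phi\equiv 0$ on $(-\infty,-\epsilon]$ and $\phi\equiv 1$ on $[\epsilon,\infty)$, and set $\rho := \phi\circ t$.

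With this choice $\supp(d\rho)\cap\supp(G\varphi)\subset C_\epsilon\subset O$, so $\supp(\psi)\subset O$; and $\chi$ is compactly supported because $\supp((1-\rho)G^+\varphi)\subset J^+(K)\cap J^-(\Sigma_\epsilon)$ and $\supp(\rho G^-\varphi)\subset J^-(K)\cap J^+(\Sigma_{-\epsilon})$ are both compact in the globally hyperbolic spacetime. For the regularity, Theorem \ref{gf} gives $G^\pm\varphi\in H^2_{\text{loc}}(M)$, whence $\chi\in H^2_{\text{comp}}(M)$; and $[\square_g,\rho]$ is the first-order operator $u\mapsto 2g^{ab}(\partial_a\rho)(\partial_b u)+(\square_g\rho)u$, whose coefficients are Lipschitz since $g\in C^{1,1}$ and $\rho\in C^\infty$, so it maps $H^2_{\text{loc}}(M)$ continuously into $H^1_{\text{loc}}(M)$, placing $\psi\in H^1_{\text{comp}}(M)=U_0$. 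Finally $\square_g\chi = \varphi-\psi\in H^1_{\text{comp}}(M)$ confirms $\chi\in V_0$, and the identity $\varphi=\psi+\square_g\chi$ holds by construction. The delicate point I anticipate is the causality-theoretic shrinking argument $C_\epsilon \searrow K'$ together with compactness of the truncated causal diamonds, both of which rest on the $C^{1,1}$ causality results \cite{KSSV, saemann, conespace} already invoked in the paper.
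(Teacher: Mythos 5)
Your proof is correct, and it takes a genuinely different route from the paper's. The paper, following \cite[Lemma 4.5.5]{bgp}, first splits $\varphi = f_+\varphi + f_-\varphi$ with a partition of unity $\{f_+,f_-\}$ subordinate to $\{I^+(\Sigma_-),I^-(\Sigma_+)\}$, then invokes the auxiliary Lemma \ref{timefun} to produce two separate cutoffs $\rho_\pm$ with prescribed causal/support properties, and finally sets $\chi_\pm := \rho_\pm G^\mp(f_\pm\varphi)$, $\chi := \chi_+ - \chi_-$, $\psi := \varphi - \square_g\chi$, checking each support and regularity claim by hand. You instead use a single monotone temporal cutoff $\rho = \phi\circ t$, write $\chi := (1-\rho)G^+\varphi + \rho G^-\varphi$, and extract $\psi = [\square_g,\rho]G\varphi$ directly from the commutator identity; the localisation of $\supp(\psi)$ in $O$ then follows from the single topological fact that $C_\epsilon = J(\supp\varphi)\cap t^{-1}([-\epsilon,\epsilon])$ is compact and shrinks to $J(\supp\varphi)\cap\Sigma$ as $\epsilon\downarrow 0$. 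What your route buys is a much cleaner algebraic bookkeeping (one commutator instead of two cut-and-glued pieces) and no need for Lemma \ref{timefun}; what the paper's route buys is closer adherence to \cite{bgp} and reuse of Lemma \ref{timefun}, which is of independent utility. Your regularity analysis, including the remark that $\square_g\rho$ is only Lipschitz so $[\square_g,\rho]$ maps $H^2_{\text{loc}}$ to $H^1_{\text{loc}}$, matches the paper's own observation that $\square_g\rho^\pm$ is $C^{0,1}$ rather than smooth, and the compactness of $\supp(\chi)$ via the two truncated causal diamonds $J^+(K)\cap J^-(\Sigma_\epsilon)$ and $J^-(K)\cap J^+(\Sigma_{-\epsilon})$ is sound (each is closed and contained in a compact set of the form $J^+(K)\cap J^-(J^+(K)\cap\Sigma_\epsilon)$, using closedness of causal relations and compactness of $J^+(K)\cap\Sigma_\epsilon$ for $C^{1,1}$ globally hyperbolic spacetimes).
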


\begin{proof}
Let $h$ be a Cauchy time function on $O$. Fix $t_- \le t_+$ in the range of $h$. Then the subsets $\Sigma_{-}=h^{-1}(t_{-})$,$\Sigma_{+}=h^{-1}(t_{+})$ are Cauchy hypersurfaces of $M$. Hence every inextendable curve timelike curve in $M$ meets $\Sigma_{-},\Sigma_{+}$. Since $t_{-}\le t_{+}$, the set $\{I^{+}(\Sigma_{-}),I^{-}(\Sigma_{+})\}$ is a finite open cover of $M$. Let $\{f_{+},f_{-}\}$ be a smooth partition of unity subordinated to this cover.  In particular, $\supp(f_{\pm})\subset I^{\pm}(\Sigma_{\mp})$. Set $K_{\pm}:=\supp(f_{\pm}\varphi)=\supp(\varphi)\cap\supp(f_{\pm})$. Then $K_{\pm}$ is a compact subset of $M$ satisfying $K_{\pm}\subset I^{\pm}(\Sigma_{\mp})$. Applying Lemma \ref{timefun} we obtain two smooth functions $\rho_{\pm}:M\rightarrow [0,1]$ satisfying
\begin{enumerate}
\item $\rho_{\pm}=1$ on a neighbourhood of $K_{\pm}$,
\item $\supp(\rho_{\pm})\cap J^{\mp}(K_{\pm})\subset M$ is compact, and
\item $\overline{\{x\in M|0\le\rho_{\pm}(x)\le 1\}}\cap  J^{\mp}(K_{\pm})$ is compact and contained in $O$.
\end{enumerate}

Set $\chi_{\pm}:=\rho_{\pm}G^{\mp}(f_{\pm}\varphi),\chi:=\chi_{+}-\chi_{-}$ and $\psi:=\varphi-\square_g\chi$. Since $\supp(G^{\mp}(f_{\pm}\varphi))\subset J^{\mp}(K_{\mp})$, the support of $\chi_{\pm}$ is contained in $\supp(\rho_{\pm})\cap J^{\mp}(K_{\pm})$ which is compact by the second property of $\rho_{\pm}$.
Since $\rho_\pm$ and $f_\pm$ are smooth by construction, we have  $\chi_{\pm}\in H^2_{\text{comp}}(M)$. Moreover,
\begin{align*}
\square_g\chi_{\pm}&= G^{\mp}(f_{\pm}\varphi)\square_g\rho^{\pm}+g^{\alpha\beta}\partial_\alpha\rho^{\pm}\partial_\beta G^{\mp}(f_{\pm}\varphi)+\rho^{\pm} \square_g G^{\mp}(f_{\pm}\varphi)\\
&=G^{\mp}(f_{\pm}\varphi)\square_g\rho^{\pm}+g^{\alpha\beta}\partial_\alpha\rho^{\pm}\partial_\beta G^{\mp}(f_{\pm}\varphi)+\rho^{\pm} (f_{\pm}\varphi),
\end{align*}
which implies $\square_g\chi_{\pm}\in H^1_{\text{loc}}(M)$. Notice that $\square_g \rho^\pm$ is not smooth but $C^{0,1}$.

Now $\psi$ is the difference of $H^1_{\text{loc}}(M)$ functions so  it remains to show that $\supp(\psi)$ is compact and contained in $O$. By the first property of $\rho_{\pm}$, one has $\chi_{\pm}:=G^{\mp}(f_{\pm}\varphi)$ in a neighbourhood of $K_{\pm}$. Moreover, $f_{\pm}\varphi=0$ on $\{\rho_{\pm}=0\}$. Hence, $\square_g\chi_{\pm}=f_{\pm}\varphi$ on $\{\rho_{\pm}=0\}\cup\{\rho_{\pm}=1\}$. Therefore, $f_{\pm}\varphi-\square_g \chi_{\pm}$ vanishes outside $\overline{\{x\in M|0\le\rho_{\pm}(x)\le 1\}}$, i.e., $\supp(f_{\pm}-\square_g \chi_{\pm})\subset \overline{\{0\le\rho_{\pm}(x)\le 1\}}$. By the definitions of $\chi_{\pm},f_{\pm}$ one also has $\supp(f_{\pm}\varphi-\square_g \chi_{\pm})\subset J^{\mp}(K_{\pm})$, hence 
 $\supp(f_{\pm}\varphi-\square_g \chi_{\pm})\subset J^{\mp}(K_{\pm})\cap \overline{\{0\le\rho_{\pm}(x)\le 1\}}$ which is compact and contained in $O$ by the third property of $\rho_{\pm}$.  Therefore, $\psi\in U_0$ with  $\supp(\psi)\subset O$. Moreover, $\varphi-\psi=\square_g\chi\in U_0$ which gives $\chi\in V_0$. 
\end{proof}

\begin{proof}[Proof of Theorem \ref{quantum}]

The first, fourth and fifth axiom follow from the
  definition of the quasi-local $C^{*}$-algebra, the definition of the
  set $I$ and \cite[Lemma 4.4.10]{bgp}. The second axiom follows from
  \cite[Lemma 4.4.13 ]{bgp} and the third axiom follows from
  \cite[Remark 4.5.3]{bgp}.  Remark \ref{axioms} mentions the
  necessary modifications of those Lemmas in the $C^{1,1}$ setting.

It therefore remains to prove the time-slice axiom. 
Let $O_1\subset O_2$ be nonempty casually compatible globally hyperbolic subsets of $M$ admitting a common smooth spacelike Cauchy hypersurface $\Sigma$.
Let $[\phi]\in V(O_2)$. Then Lemma \ref{timeaxiom} applied to $M:=O_2$ and $O=O_1$ yields $\chi\in V_{0},\psi\in U_{0}$ such that 
$\phi=\psi_{\text{ext}}+\square_g \chi$ with $\supp(\psi)\subset O_1$. Since, $\square_g\chi\in \ker(G_{O_2})$ we have $[\phi]=[\psi_{\text{ext}}]$, that is, $[\phi]$ is the image of the symplectic linear map $V(O_1) \rightarrow V(O_2)$ induced by the inclusion $\iota: O_1\rightarrow O_2$. Therefore, the map is surjective, and hence an isomorphism of symplectic topological vector spaces. This isomorphism functorially induces an isomorphism of $C^*$-algebras, hence ${\cal{U}}_{O_1}={\cal{U}}_{O_2}$. This proves the time-slice axiom.  
Finally, the seventh axiom can be deduced from the first and the sixth axiom \cite[Theorem 4.5.1]{bgp}.
\end{proof}

\section{Discussion}

In this paper we have constructed Green operators for spacetime metrics of regularity $C^{1,1}$.
The function spaces for the domain and range of the Green operators play a fundamental role in low regularity spacetimes and 
our choices for these spaces were motivated by the following two requirements: Global well posedness of the Cauchy problem and employing Sobolev spaces, such as $H^k_{\text{loc}}(M)$ and $H^k_{\text{comp}}(M)$ ($k \in \N_0$), that do not depend on a Riemannian background metric.  
We have shown that the quotient space $V(M) = U_0/\square_g V_0$ can be used to construct quasi-local $C^*$-algebras that satisfy the Haag-Kastler axioms, so that in a quantum theoretic setting the self-adjoint elements in these $C^*$-algebras can be associated with the observables of the theory. 

\subsection{Topological Issues}

Let us describe the quotient vector space $U_0/\square_g V_0$ in some
more detail for the \emph{globally hyperbolic case}, where we have
$\text{ker}(G)=\text{im}(\square_g)$ as a consequence of the spectral
sequence given in Theorem \ref{exact}, thus $U_0/\square_g V_0 = U_0 /
\text{ker}(G)$ in this case.  Recall that $G$ is a linear map $U_0 \to
V_{\text{sc}}$ and let $G_0$ denote the associated map from the
quotient $U_0 / \text{ker}(G)$ to $\text{im}(G) \subseteq
V_{\text{sc}}$, defined by $G_0 (\phi + \text{ker}(G)) := G \phi$ for
every $\phi \in U_0$. Therefore, $G_0$ is linear and bijective by
construction and we arrive at the following chain of (algebraic)
isomorphisms of vector spaces
\begin{equation}\label{isochain}
   U_0/\square_gV_0 = U_0 / \text{ker}(G) \cong \text{im}(G)=\text{ker}(\square_g)\subseteq V_{\text{sc}}.
\end{equation}
Recall that the analogue of \eqref{isochain} in the smooth globally hyperbolic case, as discussed in \cite{bgp}, is
$$
  \mathcal{D}(M)/\square_g \mathcal{D}(M) = \mathcal{D}(M) / \text{ker}(G) \cong \text{im}(G) = \text{ker}(\square_g)
	\subseteq C^\infty_{\text{sc}},
$$
showing also that the quotient is isomorphic to the space of solutions to the homogeneous wave equation. 

The question arises whether the isomorphism in the middle part of \eqref{isochain}, obtained via the factored map $G_0$, is topological, where the quotient $U_0 / \text{ker}(G)$ is equipped with the finest topology such that the canonical surjection $\pi \colon U_0 \to U_0 / \text{ker}(G)$, $\phi \mapsto \phi + \text{ker}(G)$ is continuous. Note that by continuity of $G$ we have that $\text{ker}(G)$ is closed in the normed space $U_0$, hence $U_0 / \text{ker}(G)$ is a normed space (in particular, Hausdorff). Furthermore, $G_0$ is continuous by construction and the continuity of $G$, thus it remains to be checked whether the inverse of $G_0$ is continuous, or, equivalently, whether $G_0$ is an open map. 

\begin{rem} We note that by \cite[Chapter III, Proposition 1.2]{Schaefer}, the factored map $G_0$ is a topological isomorphism if and only if $G$ is open as a map from $U_0$ to $\text{im}(G)$ (with the relative topology on the latter). In case of Fr\'{e}chet spaces such a property for $G$ could be deduced conveniently via an open mapping principle or from a closed image criterion, but observe that neither $U_0 = H^1_{\text{comp}}(M)$ nor $\text{im}(G)$ is complete (with respect to the metric inherited from the Banach space $H^1(M)$ and the Fr\'{e}chet space $H^2_{\text{loc}}(M)$, respectively).
\end{rem}

We choose a finer topology $\sigma$ on $V_{\text{sc}}$ to make $\square_g \colon (V_{\text{sc}},\sigma) \to H^1_{\text{loc}}(M)$ continuous by adding the seminorms $p_{\chi}(\phi) := \| \chi \cdot \square_g \phi \|_{H^1}$ ($\chi \in \mathcal{D}(M)$) to those on $V_{\text{sc}}$ inherited from $H^2_{\text{loc}}(M)$. Note that this has no effect on the subspace $\text{im}(G) \subseteq V_{\text{sc}}$, since $\text{im}(G) \subseteq \text{ker}(\square_g)$ in the complex of maps in Theorem \ref{exact} (even equality holds due to global hyperbolicity). In fact, $\sigma$ is precisely the coarsest topology that is finer than the $H^2_{\text{loc}}(M)$-topology on $V_{\text{sc}}$, which we denote by $\tau_2$, and renders $\square_g$ continuous as a map $V_{\text{sc}} \to H^1_{\text{loc}}(M)$, i.e., $\sigma$ is the supremum (in the lattice of topologies on $V_{\text{sc}}$) of $\tau_2$ and the initial (projective) topology $\tau_1$ with respect to $\square_g$. Therefore, we have continuity of $G \colon U_0 \to (V_{\text{sc}},\sigma)$, since $G$ is continuous $U_0 \to (V_{\text{sc}},\tau_2)$ by Corollary \ref{sol} and also continuous $U_0 \to (V_{\text{sc}},\tau_1)$ due to the obvious continuity of $\square_g \circ G = 0$ from $U_0$ into  $H^1_{\text{loc}}(M)$.

\begin{Lemma}\label{G_0}
The inverse of $G_0 \colon U_0 / \text{ker}(G) \to \text{im}(G)$, $\phi + \text{ker}(G) \mapsto G \phi$, is continuous.
\end{Lemma}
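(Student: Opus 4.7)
The strategy is to construct an explicit continuous linear right inverse $\Phi \colon (V_{\text{sc}}, \sigma) \to (U_0, \|\cdot\|_{H^1})$ of $G$ on $\text{im}(G)$ and then observe that $G_0^{-1}$ factors as $\pi \circ \Phi|_{\text{im}(G)}$, where $\pi \colon U_0 \to U_0/\ker(G)$ is the (continuous) quotient map. Concretely, I would fix smooth spacelike Cauchy hypersurfaces $\Sigma_- \subset I^-(\Sigma_+)$ and a smooth function $\chi_+ \colon M \to [0,1]$ with $\chi_+ \equiv 0$ on $J^-(\Sigma_-)$ and $\chi_+ \equiv 1$ on $J^+(\Sigma_+)$, and set $\Phi(\psi) := \square_g(\chi_+\psi)$.

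The first two things to check are that $\Phi$ lands in $U_0$ and that it inverts $G$ on the image. Expanding
\[
\Phi(\psi) = (\square_g \chi_+)\psi + 2 g^{\mu\nu}(\partial_\mu \chi_+)(\partial_\nu \psi) + \chi_+ \square_g \psi
\]
shows $\Phi(\psi) \in H^1_{\text{loc}}(M)$; since the first two terms are supported in the slab $[\Sigma_-,\Sigma_+]$ and $\chi_+ \square_g \psi$ vanishes on $J^-(\Sigma_-)$, the spatially compact support $\supp\psi \subseteq J(K)$ of $\psi \in V_{\text{sc}}$ combined with global hyperbolicity gives compact support for $\Phi(\psi)$, hence $\Phi(\psi) \in U_0$. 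For the right-inverse property on $\text{im}(G)$, I would observe that for $\psi \in \text{im}(G) \subseteq \ker\square_g$ the two functions $\chi_+\psi$ and $G^+\Phi(\psi)$ both solve $\square_g u = \Phi(\psi)$ and both vanish below a Cauchy hypersurface in $J^-(\Sigma_-)$ (the former by the choice of $\chi_+$, the latter by the causal support property of $G^+$); uniqueness of the Cauchy problem (Theorem \ref{globalexist}) then forces $G^+\Phi(\psi) = \chi_+\psi$, and symmetrically $G^-\Phi(\psi) = -(1-\chi_+)\psi$, whence $G\Phi(\psi) = \psi$. Consequently $\Phi(\psi)$ is a valid representative of $G_0^{-1}(\psi)$, i.e., $\pi \circ \Phi|_{\text{im}(G)} = G_0^{-1}$.

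The core of the proof is then the continuity of $\Phi$ as a map $(V_{\text{sc}}, \sigma) \to (U_0, \|\cdot\|_{H^1})$. From the explicit formula above one extracts an estimate of the form
\[
\|\Phi(\psi)\|_{H^1(M)} \leq C \bigl( \|\psi\|_{H^2(L)} + \|\square_g \psi\|_{H^1(L')}\bigr),
\]
with compact sets $L, L'$ produced from the compact causal sandwiches cut out by $\supp(\nabla\chi_+)$, $\supp(\chi_+)$, and $\supp(\psi)$ through global hyperbolicity. The first norm is a standard $H^2_{\text{loc}}$-seminorm and the second is exactly one of the additional seminorms $p_\chi$ that were added to refine $\tau_2$ to $\sigma$; this is the decisive reason for choosing the topology $\sigma$ rather than just the inherited $H^2_{\text{loc}}$-topology. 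The hard part will be making this estimate uniform with respect to the support of $\psi$, since $L$ and $L'$ a priori depend on the compact set $K$ in $\supp\psi \subseteq J(K)$. I expect to resolve this by first proving the continuity on each of the subspaces $V_{\text{sc}}^{K} := \{\psi \in V_{\text{sc}}: \supp\psi \subseteq J(K)\}$, on which $L, L'$ may be taken fixed, and then passing to the full space by using that every net convergent in $(\text{im}(G),\sigma)$ can, modulo a correction in $\ker(G)$ obtained via a partition of unity argument as in Lemma \ref{timeaxiom}, be relocated into such a subspace without changing its class in $U_0/\ker(G)$.
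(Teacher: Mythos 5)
Your construction $\Phi(\psi) := \square_g(\chi_+ \psi)$ is, up to repackaging, identical to what the paper computes: with its partition of unity $\chi_- + \chi_+ = 1$ the paper's $P\circ Z$ applied to $\phi \in \text{im}(G) \subseteq \ker\square_g$ gives $\tfrac{1}{2}\bigl(\square_g(\chi_+\phi)-\square_g(\chi_-\phi)\bigr) = \square_g(\chi_+\phi)$, which is your $\Phi$. Your verification that $\Phi$ maps into $U_0$ via global hyperbolicity matches the paper's construction of the map $P$, and your right-inverse argument via uniqueness of the forward and backward Cauchy problems is a clean alternative to the paper's chain of weak $L^2$-adjoint equalities using Theorem \ref{greenadj} and Lemma \ref{selfadj}; both routes legitimately give $G\Phi(\psi)=\psi$.

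The genuine gap is the continuity step, exactly the one you flag as ``the hard part''; your proposed relocation device does not close it, and (as written) neither does the paper's bare assertion that ``the continuity of $P$ is clear by construction of the topology $\sigma$''. Every seminorm defining $\sigma$ is built from a compactly supported cutoff $\chi \in \mathcal{D}(M)$, so $\sigma$-convergence cannot detect mass escaping to spatial infinity. Concretely, in $(1{+}1)$-Minkowski space take $\psi_0 = Gf_0 \neq 0$ and $\psi_n := \psi_0(\,\cdot-(0,n)\,)$; then $\psi_n \in \text{im}(G)=\ker\square_g$, and $\psi_n \to 0$ in $\sigma$ because every fixed cutoff $\chi$ eventually has support disjoint from $\supp(\psi_n)$. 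Yet $\Phi(\psi_n)$ is the spatial translate of $\Phi(\psi_0)$ (the partition of unity depends only on $t$), so $\|\Phi(\psi_n)\|_{H^1(M)}$ is constant, and since $\ker(G)=\square_g V_0$ and $\|\cdot\|_{H^1(M)}$ are translation-invariant, the quotient norm $\|[\Phi(\psi_n)]\|$ is constant and nonzero as well. No correction of $\Phi(\psi_n)$ by an element of $\ker(G)$ --- your ``relocation modulo $\ker(G)$'' --- can undo this translation invariance, so the passage from the slices $V_{\text{sc}}^{K}$ to the full space fails. In short, the desired estimate $\|\Phi(\psi)\|_{H^1(M)} \leq C\bigl(\|\psi\|_{H^2(L)} + \|\square_g\psi\|_{H^1(L')}\bigr)$ cannot hold with $L, L'$ and $C$ independent of $\supp(\psi)$, and $G_0^{-1}$ is not continuous from $\bigl(\text{im}(G),\sigma\bigr)$ to the $H^1(M)$-quotient norm on $U_0/\ker(G)$; repairing the statement would require changing one of the two topologies (e.g.\ an inductive-limit-type topology on $V_{\text{sc}}$, which the paper's subsequent remark explicitly rules out).
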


\begin{proof} We will show that $G_0^{-1}$ can be written as the composition $G_0^{-1} = \pi \circ P \circ Z$ of three continuous linear maps. The map $\pi \colon U_0 \to U_0 / \text{ker}(G)$ is the canonical surjection, which is continuous by construction. It remains to construct suitable continuous maps $P$ and $Z$ with $P \circ Z \colon \text{im}(G) \to U_0$ and such that $G_0 \circ \pi \circ P \circ Z = \text{id}_{\text{im}(G)}$ and $\pi \circ P \circ Z \circ G_0 = \text{id}_{U_0 / \text{ker}(G)}$. 

Let 
$V^\pm_{\text{sc}} := \{ \phi \in V_{\text{sc}} \mid \supp(\phi) \subseteq J^\pm(K) \text{ for some compact subset } K \subseteq M\}$ and define the subspace
$$
    W := \{ (\phi_-,\phi_+) \in V^-_{\text{sc}} \times V^+_{\text{sc}} \mid \phi_- + \phi_+ \in \text{ker} (\square_g) \} 
		\subseteq V_{\text{sc}} \times V_{\text{sc}},
$$
which we equip with the trace of the product topology stemming from $\sigma$.

\emph{Construction of $P$:} 
We consider $P \colon W \to U_0$, given by $P (\phi_-,\phi_+) := (\square_g \phi_+ - \square_g \phi_-)/2$. Note that  a priori, $P (\phi_-,\phi_+)$ is only in $H^1_{\text{loc}}(M)$ and we have to show that
$P (\phi_-,\phi_+)$ has compact support, thus belongs to $U_0 = H^1_{\text{comp}}(M)$.
To prove this, observe that $\phi = \phi_- + \phi_+ \in \text{ker}(\square_g)$ implies $\square_g \phi_- = - \square_g \phi_+$, hence $P \phi = \square_g \phi_+ = - \square_g \phi_-$. Let $K_-$ and $K_+$ be compact subsets of $M$ with $\supp (\phi_\pm) \subseteq J^\pm(K_\pm)$, then we have $\supp(P \phi) \subseteq \supp(\phi_-) \cap \supp(\phi_+) \subseteq J^-(K_-) \cap J^+(K_+)$, where $J^-(K_-) \cap J^+(K_+)$ is compact by global hyperbolicity (\cite[Lemma A.5.7]{bgp}).
The continuity of $P$ is clear by construction of the topology $\sigma$.

\emph{Construction of $Z$:} As a preparation we will first construct two continuous maps $S^\pm \colon V_{\text{sc}} \to V^\pm_{\text{sc}}$, such that $\phi = S^- \phi + S^+ \phi$ holds for every $\phi \in V_{\text{sc}}$ and, moreover,
\begin{equation}\label{claimGS}
   G \,\square_g S^\pm \phi = \pm \phi \qquad \forall \phi \in \text{im}(G).
\end{equation}
Choose a smooth spacelike Cauchy surface $\Sigma \subseteq M$ and let $t \colon M \to \R$  be a smooth temporal function such that $\Sigma = t^{-1}(0)$ (compare the earlier discussion on causality in $C^{1,1}$-spacetimes). We obtain an open covering of $M$ by the two sets $O_- := \{ x \in M \mid t(x) < 1 \}$ and $O_+ := \{ x \in M \mid t(x)  > -1 \}$ and choose a subordinate partition of unity $\chi_-, \chi_+ \in C^\infty(M)$, i.e., $\supp(\chi_\pm) \subseteq O_\pm$ and $\chi_- + \chi_+ = 1$. We define $S^\pm \phi := \chi_\pm \phi$, then the relation $\phi = S^- \phi + S^+ \phi$ holds by construction and the continuity of $S^\pm$ is clear from continuity of multiplication by fixed smooth functions with respect to (localised) Sobolev norms. It remains to show that $S^\pm \in V^\pm_{\text{sc}}$ for every $\phi \in V_{\text{sc}}$ and  Equation \eqref{claimGS} is true.

Let $\phi \in V_{\text{sc}}$ and $K \subseteq M$ be compact such that
$\supp(\phi) \subseteq J^-(K) \cup J^+(K)$. Then $\supp(\chi_+ \phi)
\subseteq O_+ \cap (J^-(K) \cup J^+(K)) \subseteq (O_+ \cap J^-(K))
\cup J^+(K)$.  Note that $O_+ \cap J^-(K)$ is relatively compact by
\cite[Corollary A.5.4]{bgp}, since $O_+ \subseteq J^+(\Sigma_{-})$
holds with $\Sigma_{-} := t^{-1}(- 1)$ (note that the time function is
strictly increasing along causal curves).  Therefore, with some
compact set $K_+$ containing $K$ as well as $O_+ \cap J^-(K)$ we
obtain $\supp(\chi_+ \phi) \subseteq J^+(K_+)$, thus $S^+ \phi \in
V^+_{\text{sc}}$. The reasoning for $S^- \phi \in V^-_{\text{sc}}$ is
analogous.

For the proof of \eqref{claimGS} we start by noting that $\phi \in \text{im}(G) = \text{ker} (\square_g)$ implies $0 = \square_g \phi = \square_g S^- \phi + \square_g S^+ \phi$, so that the part with $S^-$ in \eqref{claimGS} follows once the equation for $S^+$ is shown. Recall that we have $\square_g S^+ \phi = - \square_g S^- \phi \in H^1_{\text{comp}}(M)$ from the reasoning in the construction of $P$ above. Moreover, for every test function $\psi$ on $M$ we have that $\supp(G^\mp \psi) \cap \supp(S^\pm \phi) \subseteq J^\mp(\supp(\psi)) \cap J^\pm(K)$ for some compact set $K$, hence global hyperbolicity guarantees that the supports of $G^- \psi$ and $S^+ \phi$ as well as those of $G^+ \psi$ and $S^- \phi$ always have compact intersection. To summarise, we may apply Theorem \ref{greenadj} and Lemma \ref{selfadj} to obtain the following chain of weak equalities
\begin{multline*}
   (\psi, G^\pm \square_g S^+ \phi)_{L^2(M,g)} = ( G^{\mp} \psi, \square_g S^+ \phi)_{L^2(M,g)} =  
	(G^{\mp} \psi, \pm \square_g S^\pm \phi)_{L^2(M,g)}\\ = ( \square_g G^{\mp} \psi, \pm S^\pm \phi)_{L^2(M,g)} = (\psi, \pm S^\pm \phi)_{L^2(M,g)},
\end{multline*} 
which implies $G^\pm \square_g S^+ \phi = \pm S^\pm \phi$ and therefore $G \,\square_g S^+ \phi = G^+ \square_g S^+ \phi - G^- \square_g S^+ \phi = S^+ \phi - (-  S^- \phi) = S^+ \phi + S^- \phi = \phi$. Thus, Equation \ref{claimGS} is proved and concludes the preparatory construction of $S^\pm$.  

Finally, we turn to the definition of the map $Z$. 
Observe that $\phi \in \text{im}(G) = \text{ker}(\square_g) \subseteq V_{\text{sc}}$ implies $(S^- \phi, S^+ \phi) \in W$, which allows  to set $Z \phi := (S^- \phi, S^+ \phi)$ for every $\phi \in \text{im}(G)$  and obtain a continuous linear map $Z \colon \text{im}(G) \to W$.

\medskip

We complete the proof by showing that $\pi \circ P \circ Z$ is the inverse of $G_0$.

$\bullet$ The relation $G_0 \circ \pi \circ P \circ Z = \text{id}_{\text{im}(G)}$ holds, since for every $\phi \in \text{im}(G)$ we have 
\begin{multline*}
  G_0(\pi(P(Z \phi))) = G_0(\pi(P(S^-\phi,S^+\phi))) = \frac{1}{2} G_0(\pi(\square_g S^+ \phi - \square_g S^- \phi))\\ = 
	\frac{1}{2} G_0(\square_g S^+ \phi - \square_g S^- \phi + \text{ker}(G)) = 
	\frac{1}{2} (G \,\square_g S^+ \phi - G\,\square_g S^- \phi),
\end{multline*}
where we may apply \eqref{claimGS} to rewrite the last term as $\frac{1}{2}(\phi + \phi) = \phi$.

$\bullet$ We finally show that the equation $\pi \circ P \circ Z \circ G_0 = \text{id}_{U_0 / \text{ker}(G)}$ is true. Let $f \in U_0$, then
\begin{multline*}
  \pi(P(Z(G_0(f + \text{ker}(G))))) = \pi(P(Z(Gf))) = \pi(P(S^- Gf, S^+ Gf))\\ = \pi(\frac{1}{2}(\square_g S^+ Gf - \square_g S^- Gf))
	= \frac{1}{2}(\square_g S^+ Gf - \square_g S^- Gf) + \text{ker}(G)
\end{multline*}
and in the last term we may replace $f_1 := \frac{1}{2}(\square_g S^+ Gf - \square_g S^- Gf)$  by  $f$, since thanks to \eqref{claimGS} the difference is in the kernel: $G(f_1 - f) = \frac{1}{2}(Gf + Gf) - Gf = Gf - Gf = 0$.
\end{proof}

\begin{Proposition} For a globally hyperbolic $C^{1,1}$ spacetime $(M,g)$, we obtain a topological isomorphism $U_0 / \text{ker}(G) \cong \text{im}(G)$ according to \eqref{isochain}, where $V_{\text{sc}}$ carries the topology $\sigma$.
\end{Proposition}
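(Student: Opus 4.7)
The plan is to assemble the proposition as a direct corollary of what has already been established in the preceding discussion, since all three ingredients needed for a topological isomorphism are in place. Specifically, I would verify in turn: (a) bijectivity of the factored map $G_0$, (b) continuity of $G_0$, and (c) continuity of $G_0^{-1}$.

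For (a), recall that $G_0$ is defined on the quotient $U_0/\ker(G)$ by $G_0(\phi + \ker(G)) := G\phi$, and by construction it is well-defined, linear, and a bijection onto its image $\text{im}(G) \subseteq V_{\text{sc}}$. For (b), continuity of $G_0$ into $V_{\text{sc}}$ equipped with $\sigma$ follows from the universal property of the quotient topology on $U_0/\ker(G)$ together with the continuity of $G \colon U_0 \to (V_{\text{sc}}, \sigma)$, which was verified in the paragraph introducing $\sigma$: continuity into the $\tau_2$-component uses Corollary \ref{sol}, and continuity into the $\tau_1$-component is immediate because $\square_g \circ G = 0$ by the exact sequence of Theorem \ref{exact}. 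For (c), continuity of $G_0^{-1}$ is precisely the content of Lemma \ref{G_0}. Since $\text{im}(G) \subseteq \ker(\square_g)$ (with equality by global hyperbolicity), passing to $\sigma$ does not alter the topology on the image, so the conclusion extends without change.

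Putting the three steps together gives that $G_0$ is a continuous linear bijection with continuous inverse, i.e., a topological isomorphism between $U_0/\ker(G)$ and $\text{im}(G) \subseteq (V_{\text{sc}}, \sigma)$, which is the asserted statement \eqref{isochain}. There is no genuine obstacle here — the proposition is effectively a summary statement, and the only substantive step (the one requiring real work) was Lemma \ref{G_0}; the present proof is a two- or three-line bookkeeping argument that records the consequence of that lemma in the precise topological language of the proposition.
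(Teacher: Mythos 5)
Your proof is correct and follows exactly the paper's intended route: the proposition is stated without a separate proof precisely because it is the synthesis of the algebraic bijectivity of $G_0$ noted before \eqref{isochain}, the continuity of $G \colon U_0 \to (V_{\text{sc}},\sigma)$ established in the paragraph defining $\sigma$, and Lemma \ref{G_0} for the inverse. Your three-step bookkeeping argument is the same assembly the paper relies on.
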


\begin{rem}  We are not using an inductive limit construction for the topology on $V_{\text{sc}}$ as, e.g., in \cite{barwafo}, because we preferred to stay with questions of convergence and continuity in the simpler realm of local Sobolev norms. Moreover, in the above context, we would otherwise not have a topological isomorphism of $\text{im}(G)$ with $U_0 / \text{ker}(G)$, since we decided coherently that $U_0$ should inherit the norm topology from $H^1(M)$, thus rendering $U_0 = H^1_{\text{comp}}(M)$ normed, but incomplete. However, the basic constructions of quantisation  for the associated symplectic (quotient) vector spaces do not require completeness.
\end{rem}

\subsection{An equivalent symplectic structure }

An analogous construction of the {\rm CCR} representation can be
achieved using a symplectic structure on the vector space of solutions
to the homogeneous problem parametrised by their initial data
\cite{wald}.  In that context, one defines a symplectic structure
$\Xi$ on $\text{ker}(\square_g)$ given by $\Xi(\phi,\psi)=\int_\Sigma
(u_1v_0-v_1u_0) \mu_h$ where $(u_0,u_1),(v_0,v_1) $ are compactly
supported smooth initial data induced by the smooth solutions
$\phi,\psi$ respectively on the Cauchy hypersurface
$\Sigma$. Moreover, the symplectic structure and the Weyl system
generated by it is independent of the chosen Cauchy hypersurface and
it is isomorphic to the Weyl system generated by $(U_0 /
\text{ker}(G), \omega)$ \cite{wald,dim80}.

In the $C^{1,1}$ setting the above construction remains true with suitable modifications.

To be precise, using Theorem \ref{exact} we know that $\ker(\square_g)= \text{im}(G)$. Moreover, for any smooth spacelike Cauchy hypersurface $\Sigma$, if $\phi=G(f)$ then $\phi|_\Sigma\in H^2_{\text{comp}}(\Sigma)$ and  $\nabla_n \phi|_\Sigma\in H^1_{\text{comp}}(\Sigma)$. This follows from the observation that $\phi \in V_{\text{sc}}$ and is the difference of two solutions to the Cauchy problem with zero initial data, which by Theorem \ref{globalexist} belong to the space $C^0(\R, H^2(\Sigma_t))\cap C^1(\R, H^1(\Sigma_t))$. Therefore, given any smooth spacelike Cauchy hypersurface $\Sigma$, we define for $\phi, \psi \in \ker\square_g$ with  $u_0 := \phi|_\Sigma$, $u_1 := \nabla_n \phi|_\Sigma$, $v_0 := \psi|_\Sigma$, $v_1 := \nabla_n \psi|_\Sigma$, hence $(u_0,u_1), (v_0,v_1) \in H^2_{\text{comp}}(\Sigma)\times H_{\text{comp}}^1(\Sigma)$, the skew-symmetric bilinear form
$$
  \Xi_{\Sigma}(\phi,\psi)=\int_\Sigma (u_1v_0-v_1u_0) \mu_h.
$$
  It follows from linearity, the uniqueness of solutions to the Cauchy problem, and direct computations that  $\Xi$ is symplectic where to show non-degeneracy one tests with elements  of the form $(0,u_1)$ and $(v_0, 0)$, i.e., with  $u_0 = 0$ and $v_1=0$ and employ uniqueness in the Cauchy problem (cf.\ \cite{dim80}).

  We show that $\Xi_{\Sigma}$ does not depend on $\Sigma$: This
  follows from the divergence theorem in a region bounded by two
  Cauchy hypersurfaces $\Sigma_1$, $\Sigma_2$ and the conservation of
  the current
  $j^{\mu}(\phi,\psi)=g^{\mu\nu}\left(\phi\nabla_{\nu}\psi-\psi\nabla_{\nu}\phi\right)$. Explicitly we have for any $\phi,\psi\in \ker(\square_g)$
\begin{equation*}
\int_{J^-(\Sigma_1)\cap J^+(\Sigma_2)} \mathrm{div}( j^\mu(\phi,\psi)) \nu_g=0
\end{equation*}
and 
\begin{equation*}
\int_{J^-(\Sigma_1)\cap J^+(\Sigma_2)} \mathrm{div}( j^\mu(\phi,\psi)) \nu_g=\int_{\Sigma_2} j^\mu(\phi,\psi)n_\mu \mu_{h_1}-\int_{\Sigma_1} j^\mu(\phi,\psi)n_\mu \mu_{h_2}=0.
\end{equation*}

Therefore,
\begin{equation*}
\Xi_{\Sigma_1}(\phi,\psi)=\Xi_{\Sigma_2}(\phi,\psi),
\end{equation*}
so we will drop the $\Sigma$ from the notation of $\Xi$. Notice that the $H^2_{\text{loc}}$ regularity is required in order to make sense of the divergence of the current.


Finally, we  show that the linear bijective factor map $G_0$ of $G$, as defined  before \eqref{isochain}, provides a symplectic map  from $(U_0 / \text{ker}(G), \omega)$ to $(\ker(\square_g),\Xi)$.

 \begin{Proposition}\label{data}
Let the symplectic vector spaces $(\ker(\square_g),\Xi)$, $(U_0 / \ker(G),{\omega})$ and the factor map $G_0$ be defined as above. Then we have for every $f, f' \in U_0$  with $\phi = G_0([f']) = G f', \psi = G_0([f]) = G f \in \ker \square_g$,
\begin{equation*}
  \Xi(\phi,\psi)=\omega([f'],[f]).
\end{equation*}
 \end{Proposition}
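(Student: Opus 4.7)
The plan is to decompose $\phi=Gf'$ via advanced and retarded Green operators and apply Green's second identity on a slab bounded by two Cauchy hypersurfaces that straddle $\supp(f')$. Let $K':=\supp(f')$, which is compact, and use $C^{1,1}$ global hyperbolicity together with the existence of smooth temporal functions (Section \ref{GH}) to produce smooth spacelike Cauchy hypersurfaces $\Sigma_{-},\Sigma_{+}$ with $K'\subset I^{+}(\Sigma_{-})\cap I^{-}(\Sigma_{+})$. Since $\Xi$ is independent of the Cauchy surface (already established in this subsection), I would evaluate $\Xi(\phi,\psi)$ on $\Sigma:=\Sigma_{-}$, and set $\phi^{\pm}:=G^{\pm}f'\in H^{2}_{\text{loc}}(M)$, so that $\phi=\phi^{+}-\phi^{-}$, $\square_{g}\phi^{\pm}=f'$, and $\supp(\phi^{\pm})\subset J^{\pm}(K')$.

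The key geometric observation is that achronality of $\Sigma_{-}$ combined with $K'\subset I^{+}(\Sigma_{-})$ forces $J^{+}(K')\cap\Sigma_{-}=\emptyset$, so that $\phi^{+}$ vanishes identically on an open neighbourhood of $\Sigma_{-}$; symmetrically, $\phi^{-}\equiv 0$ on an open neighbourhood of $\Sigma_{+}$. With $u_{0}:=\phi|_{\Sigma_{-}}$ and $u_{1}:=\nabla_{n}\phi|_{\Sigma_{-}}$ this immediately gives the trace identities $\phi^{-}|_{\Sigma_{-}}=-u_{0}$, $\nabla_{n}\phi^{-}|_{\Sigma_{-}}=-u_{1}$.

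On the slab $\Omega:=J^{+}(\Sigma_{-})\cap J^{-}(\Sigma_{+})$ I would apply Green's second identity to $\psi=Gf$ and $\phi^{-}$, using the conserved current $j^{\mu}:=\psi\nabla^{\mu}\phi^{-}-\phi^{-}\nabla^{\mu}\psi$, whose divergence equals $\psi\,\square_{g}\phi^{-}-\phi^{-}\,\square_{g}\psi=\psi f'$ (as $\square_{g}\psi=0$). The support intersection $\supp(\psi)\cap\supp(\phi^{-})\cap\Omega$ sits inside $J(K)\cap J^{-}(K')\cap J^{+}(\Sigma_{-})$, where $\supp(\psi)\subset J(K)$ for some compact $K$ by $\psi\in V_{\text{sc}}$, and is compact by the $C^{1,1}$-causality results of Section \ref{GH}. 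Because $\phi^{-}$ vanishes on a neighbourhood of $\Sigma_{+}$, the upper boundary integral drops out, leaving
\begin{equation*}
\int_{M} f'\,\psi\,\nu_{g}
=\int_{\Omega}\nabla_{\mu}j^{\mu}\,\nu_{g}
=-\int_{\Sigma_{-}}\bigl(\psi\,\nabla_{n}\phi^{-}-\phi^{-}\,\nabla_{n}\psi\bigr)\mu_{h}
\end{equation*}
(Lorentzian orientation, $n$ future-directed). Substituting the trace identities and $v_{0}:=\psi|_{\Sigma_{-}}$, $v_{1}:=\nabla_{n}\psi|_{\Sigma_{-}}$ yields
\begin{equation*}
-\int_{\Sigma_{-}}\!\bigl(v_{0}(-u_{1})-(-u_{0})v_{1}\bigr)\mu_{h}
=\int_{\Sigma_{-}}(u_{1}v_{0}-v_{1}u_{0})\,\mu_{h}=\Xi(\phi,\psi),
\end{equation*}
which matches $\omega([f'],[f])=(f',Gf)_{L^{2}(M,g)}=\int_{M}f'\,\psi\,\nu_{g}$.

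The main obstacle is justifying Green's second identity on the slab $\Omega$ in the $C^{1,1}$-metric, $H^{2}_{\text{loc}}$-solution setting. This amounts to a weak/distributional divergence theorem on a region bounded by smooth spacelike hypersurfaces in a spacetime with $C^{1,1}$ metric, in the spirit of the argument invoked in Lemma \ref{HE7.4.4} via \cite{crusdiv}. The needed ingredients are all in place: compact support intersection from $C^{1,1}$ global hyperbolicity; $H^{2}_{\text{loc}}$-regularity of $\psi$ and $\phi^{-}$ so that $j^{\mu}\in H^{1}_{\text{loc}}(M)$ and $\nabla_{\mu}j^{\mu}\in L^{2}_{\text{loc}}(M)$; and existence of $\widetilde H^{1}$-traces on smooth spacelike Cauchy hypersurfaces, supplied by Theorem \ref{globalexist}. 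Tracking the Lorentzian boundary-orientation sign, and converting from the alternate symplectic form $\int Gf'\,f\,\nu_{g}$ to $(f',Gf)_{L^{2}}$ via Theorem \ref{greenadj}, are routine bookkeeping.
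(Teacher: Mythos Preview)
Your approach is essentially the same as the paper's: both arguments apply Green's second identity on a slab between two spacelike Cauchy hypersurfaces and exploit the support properties of $G^\pm$ to kill one boundary term. The only difference is a harmless swap of roles: the paper chooses a slab containing $\supp(f)$, pairs $\phi$ against $G^{+}(f)$, and reads off the boundary term on the \emph{future} surface $\Sigma_{t_2}$ (where $G^{+}f$ agrees with $Gf=\psi$); you instead choose a slab containing $\supp(f')$, pair $\psi$ against $\phi^{-}=G^{-}(f')$, and read off the boundary term on the \emph{past} surface $\Sigma_{-}$ (where $-\phi^{-}$ agrees with $\phi$). These are mirror images of each other, and both rely on exactly the same analytic ingredients you list ($H^{2}_{\text{loc}}$-regularity, compactness of the relevant support intersections from $C^{1,1}$ global hyperbolicity, and the low-regularity divergence theorem already used in Lemma~\ref{HE7.4.4}).

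One bookkeeping slip to fix: the paper defines $\tilde\omega(f',f)=\int_M G(f')\,f\,\nu_g=(Gf',f)_{L^2(M,g)}$, not $(f',Gf)_{L^2(M,g)}$; by Theorem~\ref{greenadj} these differ by a sign. So your final line should read $\omega([f'],[f])=\int_M \phi\,f\,\nu_g$ rather than $\int_M f'\,\psi\,\nu_g$, and the sign in your boundary computation has to be matched to this. As you say, this is routine orientation tracking, not a gap in the argument.
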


\begin{proof} Without loss of generality we may consider $M \cong \R \times \Sigma$ and suppose that  $\supp(f) \subset (t_{1},t_{2})\times\Sigma$ for some real $t_1 < t_2$. Then we have for every $\phi \in \ker(\square_g)$ upon integrating by parts twice,
\begin{multline*}
  \int_{(t_{1},t_{2})\times\Sigma}\phi\square_{g}G^{+}(f)\nu_{g} =
\int_{(t_{1},t_{2})\times\Sigma}\square_{g}\phi G^{+}(f)\nu_{g} \\ - \int_{\Sigma_{t_{2}}}\left(\phi \nabla_n G^{+}(f)-G^{+}(f)\nabla_n\phi \right)\mu_{h_2}+\int_{\Sigma_{t_{1}}}\left(\phi \nabla_n G^{+}(f)-G^{+}(f)\nabla_n\phi \right)\mu_{h_1}.
\end{multline*} 
Using the fact that $\square_{g}\phi=0$ and that $\Sigma_{t_{1}}$ is disjoint\footnote{Because $\supp(G^{+} f) \subseteq J^+(\supp(f)) \subseteq (t_1 + \eps, \infty) \times \Sigma$ for some $\eps > 0$. from $\supp(G^{+} f)$}  we obtain
\begin{eqnarray*}
 \int_{(t_{1},t_{2})\times\Sigma}\phi\square_{g}G^{+}(f)\nu_{g}=-\int_{\Sigma_{t_{2}}}\left(\phi \nabla_n G^{+}(f)-G^{+}(f)\nabla_n\phi \right)\mu_{h_2}.
\end{eqnarray*} 
Similarly, from the causal properties again  we have that $\Sigma_{t_{2}}$ and $\supp(G^{-}(f))$ are disjoint. Therefore $G^{+} f |_{\Sigma_{t_{2}}}=G f |_{\Sigma_{t_{2}}}$ and $\nabla_n G^{+} f |_{\Sigma_{t_{2}}}=\nabla_n G f |_{\Sigma_{t_{2}}}$ which gives
$$
 \int_{(t_{1},t_{2})\times\Sigma}\phi\square_{g}G^{+}(f)\nu_{g}=-\int_{\Sigma_{t_{2}}}\left(\phi \nabla_n G(f)-G(f)\nabla_n\phi \right)\mu_{h_2}.
$$
Recalling that  $\psi = G(f)$  and  $t_1 < t < t_2$ in $\supp(f)$  we obtain
$$
 \Xi(\phi,\psi) = \int_{(t_{1},t_{2})\times\Sigma}\phi f\nu_{g} = \int_{M}\phi f\nu_{g}.
$$
Here, we use also the assumption $G(f')= \phi$ to proceed with
$$
  \int_{M}\phi f\nu_{g}=\int_{M}G(f') f\nu_{g}=\tilde{\omega}(f',f) = \omega([f'],[f]).
$$
\end{proof}

We have established a symplectomorphism between the spaces
$(\text{ker}(\square_g),\Xi)$ and $(U_0 /
\text{ker}(G),{\omega})$. This implies that the functor {\rm CCR} will
give isomorphic $C^*$-algebras in the quantisation. Therefore, the
result shows that one can use either the elements of $U_0 / \ker(G)$
or those of $\ker(\square_g)$ to construct the algebra of quantum
observables.

\subsection{The physical quantum states}

Finally, in order to construct a full quantum field theory in a low
regularity spacetime, a suitable choice of quantum states must be
made.  Usually, quantum states $\Lambda$ are given by certain positive
linear functionals on the quasi-local $C^{*}$-algebra.  A common
candidate for the physical quantum states in the smooth case are the
quasi-free states that satisfy the microlocal spectrum condition. as
described below.

To be precise, given a real scalar product $\mu \colon
\ker(\square_g)\times\ker(\square_g) \to \R$ satisfying $|\Xi(\phi,
\psi)|^{2}\le\frac{1}{4} \mu(\phi, \psi)\mu(\phi, \psi)$ for all
$\phi,\psi \in \ker(\square_g)$, we define a quasi-free state by
$\Lambda_{\mu}(W(\phi))=e^{\frac{1}{2}\mu(\phi,\phi)}$ \cite{wald}.

To specify the microlocal spectrum condition, we need to define
appropriate subsets of $T^*(M \times M) \setminus 0$, i.e., the
cotangent bundle with the zero section removed, and the two-point
function of the state $\Lambda_\mu$, which is a distribution on $M
\times M$.  Let
\begin{multline*}
  C=\big\{(x_{1},\eta,x_{2},\tilde{\eta})\in T^{*}(M\times M)\backslash0;  
  g^{ab}(x_{1})\eta_{a}{\eta}_{b}=0,g^{ab}(x_{2})\tilde{\eta}_{a}\tilde{\eta}_{b}=0, (x_{1},\eta)\sim(x_{2},\tilde{\eta})\big\}, \\
  \text{ and } C^{+}=\left\{(x_{1},\eta,x_{2},\tilde{\eta})\in C; \eta^{0}\ge0,\tilde{\eta}^{0}\ge0\right\},
\end{multline*}
\noindent
where $(x_{1},\eta)\sim(x_{2},\tilde{\eta})$ means that $\eta$, $\tilde{\eta}$ are cotangent to the null geodesic $\gamma$ at $x_{1}$, $x_{2}$ respectively, and parallel transports of each other along $\gamma$.
 The value of the two point function of a state $\Lambda_\mu$ acting on the elements of the algebra defined by $\phi$ and $\psi$ is 
 $$
\langle \Lambda_2, \phi \otimes \psi \rangle :=-\frac{\partial^2}{\partial s\partial t}\Lambda_{\mu}(W(t\phi)W(s\psi))|_{s=t=0}=-\frac{\partial^2}{\partial s\partial t}\left(\Lambda_{\mu}[ W(s\phi+t\phi)]e^{\frac{ist\Xi(\phi,\psi)}{2}}\right)|_{s=t=0}.
$$
Using the isomorphism between $\ker(\square_g)$ and $V(M)$ the two-point function can be seen to induce a bidistribution on spacetime, i.e.,  $\Lambda_2\in \mathcal{D}'(M\times M)$.

 \begin{Definition}
 A quasi-free state $\Lambda_{H}$ on the algebra of observables satisfies the microlocal spectrum condition if its two point function $\Lambda_{2H}$ is a distribution $\mathcal{D}'(M\times M)$ and satisfies the following wavefront set condition
$$
  WF'(\Lambda_{2H})=C^{+},
$$
 where $WF'(\Lambda_{2H}):= \{(x_1, \eta; x_2, -\tilde{\eta}) \in T^{*}(M\times M); (x_1, \eta; x_2, \tilde{\eta}) \in WF(\omega_{2H})\}.$
 
 \end{Definition}
 The states that satisfy the microlocal spectrum condition are called Hadamard states and their class includes ground states and KMS states (\cite{fewster, rad, kay}).

 In the low regularity setting we require a generalisation of Hadamard states. A larger class of states, called adiabatic states of order $N$ and characterised in terms of their Sobolev-wavefront set, has been obtained by Junker and Schrohe \cite{schrohe}. These states are natural candidates to replace the Hadamard states in spacetimes with limited regularity.  In particular,  quantum ground states have been constructed in static spacetimes using semigroup techniques \cite{ds} and they can be described as  adiabatic states \cite{ss}. We briefly recall the definition of this class of states and of the Sobolev wavefront set.
 
 \begin{Definition}
 A quasi-free state $\Lambda_{N}$ on the algebra  of observables
is called an adiabatic state of order $N\in\mathbb{R}$ if its two-point function $\Lambda_{2N}$ is a
bidistribution that satisfies for every $s\le N +\frac{3}{2}$ the  $H^s$-wavefront set condition 
$$
  {WF^{s}}'(\Lambda_{2N})\subset C^{+},
$$
where $WF^{s}$ denotes the refinement of the notion of the wavefront set in terms of Sobolev regularity (\cite{fourier2}), i.e.,$(x,\xi)\not\in WF^{s}(u)$ if and only if $u=u_1+u_2$ with  $u_1\in H^s$ and $(x,\xi)\not\in WF(u_2)$.
  \end{Definition}

\begin{appendix}
\section{Regularisation methods and generalised functions} 

In this section we gather a minimum of notions required from the theory of Colombeau generalised functions and regularisation methods for Lorentzian metrics. For a comprehensive introduction to the theory of Colombeau algebras we refer to \cite{Col,GKOS:01}, the details about the approximation results for Lorentzian metrics can be found in \cite{KSV,cg,ladder}.

Let $E$ be a locally convex topological vector space whose topology is given by the family of seminorms $\{ p_j \}_{j\in J}$. The elements of 
$$ \mathcal M_E:=\{(u_{\e})_{\e}\in E^{(0,1]}:\forall j\in J \: \exists N \in \mathbb N_0\:\:\: p_j(u_{\e})=O(\e^{-N})\:\:\mathrm{as}\:\:\e\rightarrow 0 \}
$$
and
$$ \mathcal N_E:=\{(u_{\e})_{\e}\in E^{(0,1]}:\forall j\in J \: \forall q \in \mathbb N_0\:\:\: p_j(u_{\e})=O(\e^{q})\:\:\mathrm{as}\:\:\e\rightarrow 0 \}
$$
are called $E$-\emph{moderate} and $E$-\emph{negligible}, respectively. Defining operations componentwise turns $\mathcal N_E$ into a vector subspace of $\mathcal M_E$. We define the \emph{generalised functions based on $E$} as the quotient $\mathcal G_E:=\mathcal M_E / \mathcal N_E$. If $E$ is a differential algebra, then $\mathcal N_E$ is an ideal in $\mathcal M_E$ and $\mathcal G_E$ is a differential algebra as well, called the Colombeau algebra based on $E$.

Let $\Omega$ be an open subset of $\mathbb R^n$. By choosing $E= C^{\infty}(\Omega)$ with the topology of uniform convergence of all derivatives one obtains the standard Colombeau algebra $\mathcal G_{C^{\infty}(\Omega)}=\mathcal G(\Omega)$; here we will mainly use $E=H^{\infty}(\Omega)=\{h\in  C^{\infty}(\overline{\Omega}):\partial^{\alpha}h\in L^2(\Omega)\:\forall \alpha\in\mathbb N^{n}_0 \}$  with the family of semi-norms
$$ \norm{h}{H^k}=\big(\sum_{|\alpha|\leq k} \norm{\partial^{\alpha}h}{L^2}^2 \big)^{1/2}\quad(k\in \mathbb N_0)
$$
or $E=W^{\infty,\infty}(\Omega)=\{h\in C^{\infty}(\overline{\Omega}):\partial^{\alpha}h\in L^{\infty}(\Omega)\:\forall \alpha\in\mathbb N^{n} \}$
with the family of semi-norms
$$ \norm{h}{W^{k,\infty}}=\max\limits_{|\alpha|\leq k} \norm{\partial^{\alpha}h}{L^{\infty}}\quad(k\in \mathbb N_0).
$$
We employ the notation
$$\mathcal G_{L^2}(\Omega):=\mathcal G_{H^{\infty}(\Omega)}\quad\mathrm{and}\quad
\mathcal G_{L^{\infty}}(\Omega):=\mathcal G_{W^{\infty,\infty}(\Omega)}.
$$

Colombeau algebras contain the distributions as a linear subspace, though not every element of a Colombeau algebra is a regularisation of a distribution.  Their elements are equivalence classes of nets of smooth functions, $\mathcal G(\Omega)\ni u=[(u_{\e})_{\e}]$. We say that a Colombeau function $u$ is \emph{associated with a distribution} $w\in \mathcal D'(\Omega)$ if  some (and hence every) representative $(u_{\e})_{\e}$ converges to $w$ in $\mathcal D'(\Omega)$. The distribution $w$ represents the macroscopic behaviour of $u$ and is called the \emph{distributional shadow} of $u$. 

A generalised function $u\in\mathcal G(\Omega)$ is said to be of \emph{$L^{\infty}$-log-type} if 
$$\norm{u_{\varepsilon}}{L^{\infty}(\Omega)}=O(\mathrm{log}(1/ \varepsilon))\quad \mathrm{as} \quad \varepsilon \rightarrow 0.$$ 
Logarithmic growth conditions on the coefficients of a differential equation are typical in statements on existence and uniqueness of generalised solutions. These results are usually derived from a detailed analysis of regularisation techniques and Colombeau solutions often lead to very weak solutions in the sense of \cite{Ruzhansky}.

A related methodology of regularisation is used in the approximation results of \cite{KSV} which show how to approximate a globally hyperbolic $C^{1,1}$ metric
by a smooth family of globally hyperbolic metrics while controlling
the causal structure. We recall from  \cite[Sec.\ 3.8.2]{ladder}, \cite[Sec.\ 1.2]{cg} 
that for two Lorentzian metrics $g_1$,
$g_2$, we say that $g_2$ has \emph{strictly wider light cones} than $g_1$, denoted by 
\begin{equation}
 g_1\prec g_2, \text{ if for any tangent vector } X\not=0,\ g_1(X,X)\le 0 \text{ implies that } g_2(X,X)<0.
\end{equation}
Thus any $g_1$-causal vector is $g_2$-timelike.
The key result is \cite[Prop.\ 1.2]{cg}, which we give here in the strengthened  version of  \cite[Prop.\ 2.3]{KSV}. 

\begin{Proposition}\label{CGapprox} Let $(M,g)$ be a $C^0$-spacetime 
and let $h$ be some smooth
background Riemannian metric on $M$. Then for any $\eps>0$, there exist smooth
Lorentzian metrics $\check g_\eps$ and $\hat g_\eps$ on $M$ such that $\check g_\eps
\prec g \prec \hat g_\eps$ and $d_h(\check g_\eps,g) + d_h(\hat g_\eps,g)<\eps$,
where  
\begin{equation}\label{CGdh}
d_h(g_1,g_2) := \sup_{p\in M,0\not=X,Y\in T_pM} \frac{|g_1(X,Y)-g_2(X,Y)|}{\|X\|_h
\|Y\|_h}.
\end{equation}
Moreover, $\hat g_\eps(p)$ and $\check g_\eps(p)$ depend smoothly on $(\eps,p)\in \R^+\times M$, and if
$g\in C^{1,1}$ then letting $g_\eps$ be either $\check g_\eps$ or $\hat g_\eps$,
we additionally have 
\begin{itemize}
\item[(i)] 
  For any compact subset $K \Subset M$ there exists a sequence
  $\eps_j\searrow 0$ such that $\hat g_{\eps_{j+1}}\prec \hat
  g_{\eps_{j}}$ on $K$ (resp.\ $\check g_{\eps_{j}}\prec \check
  g_{\eps_{j+1}}$ on $K$) for all $j\in \N_0$.

\item[(ii)] If $g'$ is a continuous Lorentzian metric with $g\prec g'$
  (resp.\ $g'\prec g$) then $\hat g_\eps$ (resp.\ $\gec$) can be
  chosen such that $g\prec \hat g_\eps \prec g'$ (resp.\ $g'\prec \gec
  \prec g$) for all $\eps$.

\item[(iii)] There exist sequences of smooth Lorentzian metrics
  $\check g_j\prec g \prec \hat g_{j}$ ($j\in \N$) such that
  $d_h(\check g_j,g) + d_h(\hat g_j,g)<1/j$ and $\check g_j \prec
  \check g_{j+1}$ as well as $\hat g_{j+1}\prec \hat g_{j}$ for all
  $j\in \N$.

\item[(iv)] If $g$ is $C^{1,1}$ and globally hyperbolic then the $\hat
  g_\eps$ (and $\gec$) can be chosen to be globally hyperbolic.

\item[(v)] If $g$ is $C^{1,1}$ then the regularisations can in
  addition be chosen such that they converge to $g$ in the
  $C^1$-topology and such that their second derivatives are bounded,
  uniformly in $\eps$ on compact sets.
\end{itemize}
\end{Proposition}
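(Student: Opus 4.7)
The plan is to construct $\hat g_\eps$ and $\check g_\eps$ in two stages: first produce a smooth symmetric $(0,2)$-tensor close to $g$ via convolution against a mollifier, assembled globally through a partition of unity; then perturb this smooth tensor by a small multiple of the background Riemannian metric $h$ to enforce the strict cone containment $\check g_\eps \prec g \prec \hat g_\eps$.

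For the first stage, I would cover $M$ by countably many coordinate charts $(U_i,\phi_i)$ with relatively compact images, take a locally finite subordinate partition of unity $\{\chi_i\}$, and on each chart regularise the components of $g$ by convolution with a standard mollifier $\rho_\eps$ of radius $\eps$. Reassembling via $g_\eps^{(0)} := \sum_i \chi_i\,(\rho_\eps * g)_i$, with cut-offs chosen so that the dependence on $(\eps,p)\in\R^+\times M$ is jointly smooth, yields a smooth symmetric tensor field with $d_h(g_\eps^{(0)},g)\to 0$ uniformly on compact sets. If $g\in C^{1,1}$, the same construction produces $C^1$-convergence on compact sets, and because convolving a Lipschitz first derivative with a mollifier produces a family whose second derivatives are uniformly bounded on compact sets, item (v) follows for $g_\eps^{(0)}$ and will carry over to the final approximants.

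To widen the cones I set $\hat g_\eps := g_\eps^{(0)} - \eta(\eps)\,h$ with $\eta(\eps)\to 0$ chosen to dominate $d_h(g_\eps^{(0)},g)$, say $d_h(g_\eps^{(0)},g) \le \eta(\eps)/2$. Then for any $g$-causal vector $X\ne 0$ one has $g_\eps^{(0)}(X,X) \le d_h(g_\eps^{(0)},g)\|X\|_h^2 < \eta(\eps)\|X\|_h^2$, hence $\hat g_\eps(X,X) < 0$, which yields $g\prec \hat g_\eps$. The narrowing metric $\check g_\eps := g_\eps^{(0)} + \eta(\eps)\,h$ is treated symmetrically, and both remain Lorentzian for $\eta(\eps)$ small by continuity of the signature. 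The bound $d_h(\hat g_\eps,g) + d_h(\check g_\eps,g) < \eps$ is then secured by selecting $\eta(\eps) \le \eps/4$. Property (ii) follows since, given $g \prec g'$, one may further shrink $\eta(\eps)$ so that $\hat g_\eps \prec g'$ on any preassigned compact region.

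The hardest part will be combining the monotonicity statements (i), (iii) with the preservation of global hyperbolicity in (iv). For monotonicity I would proceed iteratively on a fixed compact $K$: given $\hat g_{\eps_j}$, choose $\eps_{j+1}$ so small that the quantity $d_h(g_{\eps_{j+1}}^{(0)},g_{\eps_j}^{(0)}) + \eta(\eps_{j+1})$ is strictly less than $\eta(\eps_j)$ on $K$; the ``wider-cone'' estimate above then forces $\hat g_{\eps_{j+1}} \prec \hat g_{\eps_j}$ on $K$, and the analogous argument gives the narrowing chain. For (iv) I would exploit the stability of global hyperbolicity under small cone-perturbations: since $(M,g)$ is $C^{1,1}$ globally hyperbolic it admits a smooth spacelike Cauchy hypersurface $\Sigma$ and a smooth Cauchy temporal function by \cite{BernardSuhr,MinguzziAddenda}, and for $\eta(\eps)$ sufficiently small this $\Sigma$ remains a Cauchy hypersurface for $\hat g_\eps$ (widening cones only enlarges the class of causal curves one has to intercept, but the temporal function remains strictly increasing along them, and compactness of causal diamonds is preserved). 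The analogue for $\check g_\eps$ is subtler because narrowing can in principle destroy Cauchyness of $\Sigma$, and this is where the careful coupled choice of partition of unity, mollification scale, and perturbation size of \cite{cg,KSV} is essential; the key input is that on any compact exhaustion the $\check g_\eps$-causal curves meeting $\Sigma$ form a uniformly $h$-bounded family for $\eta(\eps)$ small, so that stability of the global hyperbolicity condition applies.
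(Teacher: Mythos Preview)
The paper does not actually prove this proposition; it is quoted from \cite[Prop.~1.2]{cg} in the strengthened form of \cite[Prop.~2.3]{KSV}, and only Remark~\ref{ghstab} following the statement indicates the strategy for part (iv). Your mollification-plus-shift construction is indeed the Chru\'{s}ciel--Grant approach, so the first two stages are on the right track.

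However, your treatment of (iv) has the easy and hard cases reversed. When you \emph{narrow} the cones to $\check g_\eps \prec g$, every $\check g_\eps$-causal curve is automatically $g$-causal, and the $\check g_\eps$-causal diamonds sit inside the compact $g$-diamonds; this is the more tractable direction. When you \emph{widen} to $\hat g_\eps$, there are strictly more causal curves and larger causal diamonds, and your claim that ``the temporal function remains strictly increasing along them'' is unjustified: $dt$ being $g$-timelike does not force it to be $\hat g_\eps$-timelike, since widening the tangent light cones narrows the dual cones in the cotangent bundle. So the argument you give does not establish global hyperbolicity of $\hat g_\eps$.

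The device the paper points to in Remark~\ref{ghstab} is different and sidesteps this issue: one invokes the stability of global hyperbolicity in the interval topology \cite{BM11} to obtain a smooth globally hyperbolic metric $g'$ with $g \prec g'$, and then uses (ii) to arrange $g \prec \hat g_\eps \prec g'$ for all $\eps$. Since $\hat g_\eps \prec g'$ and $g'$ is globally hyperbolic, the narrowing direction now yields global hyperbolicity of $\hat g_\eps$. Note also that your sketch of (ii) only produces $\hat g_\eps \prec g'$ on a preassigned compact set, whereas the statement requires the squeeze globally on $M$; this is handled in \cite{cg,KSV} by letting the perturbation size vary with the point through the locally finite partition of unity.
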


\begin{rem}\label{ghstab}

  In our application the main point we will need compared to
  \cite[Sec.\ 1.2]{cg} is property (iv) which guarantees that for
  globally hyperbolic metrics there exist approximations with strictly
  narrower (wider) lightcones that are themselves globally hyperbolic.
  Extending methods of \cite{Ger70}, it was shown in \cite{BM11} that
  global hyperbolicity is stable in the interval
  topology. Consequently, if $g$ is a smooth, globally hyperbolic
  Lorentzian metric, then there exists some smooth globally hyperbolic
  metric $g'\prec g$ (resp. $g'\succ g$).  Constructing $\hat g_\eps$
  resp.\ $\hat g_j$ as in (ii) then automatically gives globally
  hyperbolic metrics (cf.\ \cite[Sec.\ II]{BM11}).
\end{rem}

\section{Function Spaces}


The (real) Hilbert space $L^2(M,g)$ is used in the section on Green operators to formulate adjointness properties and is defined as follows: Recall that for any Lorentzian manifold $(M,g)$ we have a unique positive density $\mu_g$ on $M$ \cite[Proposition 2.1.15]{Waldmann}, which has the local coordinate expression $\sqrt{|\det (g_{ij})|} \, |dx^0 \wedge \ldots \wedge dx^n|$; in case of a  Lipschitz continuous metric $g$ the density $\mu_g$ is continuous and induces a positive Borel measure on $M$, which we employ to define the corresponding $L^2$ space and denote it by $L^2(M,g)$. If $M$ is orientable, then we have a global volume form $\nu_g$ on $M$ \cite[Ch. 7]{nei} from which the density $\mu_g$ can be obtained.

We consider the (real) Sobolev spaces $H^{m}(M)$ for a nonnegative integer $m$ to be defined with respect to some chosen smooth Riemannian background metric on $M$ as described in \cite{heb}, i.e., by completion of the space of (real) smooth functions whose covariant derivatives up to order $m$ are square integrable with respect to the positive Borel measure on $M$ associated with the Riemannian metric (cf.\ \cite[$\S$III.3]{cha}; it can be written in terms of a global Riemannian volume form, if $M$ is orientable). Recall (\cite[Theorems 2.7 and 2.8]{heb}) that the space $\mathcal{D}(M)$ (of smooth compactly supported test functions) is dense in $H^1(M)$, if $M$ is complete with respect to the chosen Riemannian metric, and also in $H^m(M)$ for $m \geq 2$, if, in addition, Riemannian curvature bounds hold as well. 

On a compact manifold $M$, the definition of $H^m(M)$ is independent of the chosen Riemannian background metric 
(\cite[Proposition 2.3]{heb}) and, similarly, one concludes that Sobolev norms induced by two different Riemannian metrics on functions with support contained in a fixed compact subset are equivalent. This observation guarantees that the following two spaces are independent of the chosen background metric, namely the compactly supported Sobolev functions $H^m_{\text{comp}}(M) := \{ f \in H^m(M) \mid \text{supp}(f) \text{ is compact in } M \}$ and the local space $H^m_{\text{loc}}(M) := \{ f \colon M \to \R \text{ measurable} \mid \forall \varphi \in \mathcal{D}(M) \colon \varphi f \in H^m(M) \}$ (in fact, $\varphi f \in H^m_{\text{comp}}(M)$ in the latter case). 

In the context of the function space topologies for the current paper,
we simply consider $H^m_{\text{comp}}(M)$ as a subspace of the Banach
space $H^m(M)$, hence it is normed and not complete. One could equip
$H^m_{\text{comp}}(M)$ with a complete (non-metrizable) locally convex
vector space topology, e.g., as in \cite{barwafo} or \cite[Part II,
Chapter 31]{Treves}, via a strict inductive limit construction which
turns it into a so-called (LF)-space, but we prefer to formulate our
results more directly in terms of the inherited Sobolev norm.

For $H^m_{\text{loc}}(M)$ we have the family of semi-norms $f \mapsto
\| \varphi f \|_{H^m(M)}$, parametrised by $\varphi \in
\mathcal{D}(M)$, which provides us with a Fr\'{e}chet space topology
on $H^m_{\text{loc}}(M)$ (cf.\ \cite[Part II, Chapter 31]{Treves} or
\cite{barwafo}). We clearly have $H^m(M) \subseteq
H^m_{\text{loc}}(M)$ (with continuous embedding).

If $K$ is a compact subset of $M$ and $f \in H^m_{\text{loc}}(M)$ we
occasionally abuse the notation and write $\| f\|_{H^m(K)}$ to mean
the value obtained when the integrals defining $\| \varphi f
\|_{H^m(M)}$ are only evaluated on $K$ and $\varphi \in
\mathcal{D}(M)$ is a cut-off such that $\varphi = 1$ on $K$. (No
cut-off is required if $f \in H^m(M)$.)

In case of $M = (0,T)\times \Sigma$ we may choose the Riemannian
background metric in the form $dt \otimes dt + \gamma$, where $\gamma$
is a Riemannian metric on $\Sigma$. We will then often consider a
function $ v\in L^2((0,T)\times \Sigma)$ as a map $t\mapsto {v}(t)$
from the interval into the Hilbert space $L^2(\Sigma)$ in the sense
that ${v}(t)(x)= v(t,x)$ holds pointwise for continuous $v$. Thanks to
Fubini's theorem, we may then write
\begin{equation}
\| v\|_{L^2((0,T)\times \Sigma)}^2=\int_0^T \| v(t) \|_{L^2(\Sigma)}^2 dt. 
\end{equation}
For general constructions with measurable functions valued in Banach spaces we refer to \cite{leo,kab}; in particular we will make use of
the isomorphism $L^2((0,T)\times \Sigma)\cong L^2((0,T),L^2(\Sigma))$ \cite[Theorem 8.28]{leo}. If $v$ is differentiable and interpreted as
a function $t \mapsto {v}(t)$, we will occasionally denote the partial derivative $\partial_t v$ by ${\dot v}$ and write
$\partial_t$ for the corresponding vector field on $(0,T)\times \Sigma$.  The space $C^k([0, T]; H^m(\Sigma))$ ($k$ a nonnegative integer) consists of all $k$ times continuously (strongly) differentiable functions (if $k = 0$, simply continuous functions) $v \colon [0,T]  \to  H^m(\Sigma)$ with finite norm  
\begin{equation}
 \|{v}\|_{C^k([0,T],H^m(\Sigma))}:=\max_{0\leq j \leq k}\sup_{t\in[0,T]}\| \partial_{t}^j {v}(t)\|_{H^m(\Sigma)} < \infty. 
\end{equation}
We have $C^k([0, T]; H^m(\Sigma)) \subseteq L^2((0,T),H^m(\Sigma))$ (with continuous embedding).

In place of a bounded time interval we will occasionally consider the basic spacetime to be $\R \times \Sigma$
and deal with function spaces of Bochner measurable maps from $\R$ to some of the Sobolev-type Hilbert spaces (cf.\ \cite[Chapter 8]{leo}), in particular, $L^2(\R,H^1(\Sigma))$. We will then use the notation $L^2_{\text{loc}}(\R,H^1(\Sigma))$ for the set of all Bochner-measurable functions $v \colon \R \to H^1(\Sigma)$ such that for every compact subinterval $I \subset \R$ the restriction $v |_I$ belongs to $L^2(I,H^1(\Sigma))$.

In looking at energy estimates on $\R \times \Sigma$ we will also need versions of the Sobolev norms where the derivatives are taken in both the space and time directions but the integration and volume form are confined to the $t=\tau$ level hypersurfaces $S_\tau := \{\tau\} \times \Sigma$. These norms will be denoted  by
\begin{equation}\label{restricted}
 \| u \|_{\tilde H^m(S_\tau)}=\left(\sum_{j=1}^m \int_{S_\tau}(u^2+(\partial^j_t u)^2+|\tnab^ju|^2) d \mu_\tau \right)^{\frac{1}{2}},
\end{equation} 
where $\tnab$ is the covariant derivative with respect to the spatial background metric $\gamma$ and $\mu_\tau$ is the Riemannian measure on $S_\tau$ which is just that given by the spatial metric $\gamma$.

Finally let us adapt the basic function space structures to the
situation of a general globally hyperbolic $C^{1,1}$ spacetime $(M,g)$
with Cauchy hypersurface $\Sigma$, where we suppose that---according
to the discussion in the subsection on $C^{1,1}$ causality theory---we
have chosen a smooth temporal function $t \colon M \to \Real$ such
$\Sigma = t^{-1}(0)$ and a corresponding diffeomorphism $\Phi \colon M
\to \R \times \Sigma$. For $\tau \in \R$ denote the corresponding
level surface by $\Sigma_\tau := t^{-1}(\tau) = \Phi^{-1}(\{ \tau \}
\times \Sigma)$, hence $\Sigma_0 = \Sigma$, and consider again a
background Riemannian metric of the form $ h=dt\otimes dt+\gamma$ on
the product manifold $\R \times \Sigma$, which in turn provides us
with the convenient background metric $\Phi^* h$ on $M$. In the
sequel, all Sobolev spaces on submanifolds of $M$ or $\R \times
\Sigma$ will be considered to be defined via Riemannian metrics
induced by $\Phi^* h$ or $h$, respectively.  Let $\Phi_\tau$ denote
the induced diffeomorphism $\Sigma_\tau \to \Sigma$, i.e., $\Phi(x) :=
(\tau, \Phi_\tau(x))$ for every $x \in \Sigma_\tau$.

We will commit another abuse of notation and a somewhat naive simplification in defining now the spaces  $C^k(I, H^m(\Sigma_t))$  for the case of a compact interval $I = [0,T]$ or for $I = \R$ without using the full theory of more sophisticated constructions in terms of sections, e.g., as in \cite{barwafo}. Let $B_m(I)$ denote the set of all maps $u \colon I \to \bigcup_{\tau \in I} H^m(\Sigma_\tau)$ such that $u(\tau) \in H^m(\Sigma_\tau)$ for every $\tau \in I$. Then we have that $u \in B_m(I)$ implies (equivalently) $u(\tau) \circ \Phi_\tau^{-1} \in H^m(\Sigma)$ for every $\tau \in I$. We define $C^k(I, H^m(\Sigma_t))$ to be the subset of those elements $u \in B_m(I)$ such that the map $\tau \mapsto u(\tau) \circ \Phi_\tau^{-1}$ belongs to $C^k(I, H^m(\Sigma))$. 

For elements $u \in C^k(I, H^m(\Sigma_t))$ we can then also define the norms over spatial domains, but involving derivatives in space and time directions, such as $\| u \|_{\tilde H^m(\Sigma_\tau)}$ via the corresponding $ \| . \|_{\tilde H^m(S_\tau)}$-norm  evaluated for the associated map $\tau \mapsto u(\tau) \circ \Phi_\tau^{-1}$ in $C^k(I, H^m(\Sigma))$.

Note that  the definition of the spaces  $C^k(I, H^m(\Sigma_t))$ depends on the splitting $M \cong \R \times \Sigma$ and on the
choice of temporal function. However, the reasoning in the main text tries to use the temporal function only in intermediate calculations and afterwards gives formulations of results essentially in ``pure'' spacetime terms without recourse to the splitting.

\end{appendix}


\bibliographystyle{abbrv}
\bibliography{hsssv}

\end{document}